\DeclareMathAlphabet\mathzapf{T1}{pzc}{mb}{it}
\newtheorem{theorem}{Theorem}
\newtheorem{lemma}[theorem]{Lemma}
\newtheorem{proposition}[theorem]{Proposition}
\newtheorem{corollary}[theorem]{Corollary}
\newtheorem{remark}[theorem]{Remark}
\newtheorem{example}{Example}
\newcommand{\Tr}{{\mathrm{Tr}}}
\newcommand{\1}{{\rm 1\hspace{-0.9mm}l}}
\DeclareMathOperator{\diag}{diag}
\newtheorem{definition}[theorem]{Definition}
\newcommand{%
    
    \import{./fig/}{.pdf_tex}
}[2]{%
    
    \import{./fig/}{#1.pdf_tex}
}
\newcommand{\ket}[1]{|#1\rangle}
\newcommand{\bra}[1]{\langle #1|}
\newcommand{\project}[1]{\ket{#1}\bra{#1}}
\newcommand{\outprod}[2]{\ket{#1}\bra{#2}}
\newcommand{\e}{{\mathrm e}}
\renewcommand{\c}[1]{\mathcal{#1}}
\renewcommand{\sf}[1]{\mathsf{#1}}
\renewcommand{\exp}{{\mathrm {exp}}}
\newcommand{\vect}[1]{\mathbf{#1}}
\newcommand{\tr}[1]{\mbox{Tr} #1}
\newcommand{\bl}[1]{{\color{blue}#1}}
\newcommand{\violet}[1]{{\color{violet}#1}}
\begin{document}

\title{Accessible maps in a group of classical or quantum channels} 

\author{Koorosh Sadri}
\affiliation{Department of Physics, Sharif University of Technology, Tehran, Iran}
\author{Fereshte Shahbeigi}
\affiliation{Department of Physics, Sharif University of Technology, Tehran, Iran}
\affiliation{Center for Theoretical Physics, Polish Academy of Sciences, 02-668 Warsaw, Poland}
\author{Zbigniew Pucha{\l}a}
\affiliation{Institute of Theoretical and Applied Informatics, Polish Academy
of Sciences, 44-100 Gliwice, Poland}
\affiliation{Faculty of Physics, Astronomy and Applied Computer Science,
Jagiellonian University,  30-348 Krak{\'o}w, Poland}
\author{Karol {\.Z}yczkowski}
\affiliation{Faculty of Physics, Astronomy and Applied Computer Science,
Jagiellonian University, 30-348 Krak{\'o}w, Poland}
  \affiliation{Center for Theoretical Physics, Polish Academy of Sciences, 02-668 Warsaw, Poland}

\date{January 28, 2022}

\begin{abstract}
\noindent 
We study the problem of accessibility in a set of classical and quantum channels
admitting a group structure. Group properties of the set of channels, and the
structure of the closure of the analyzed group $G$ plays a pivotal role in this
regard. The set of all convex combinations of the group elements contains a
subset of channels that are accessible by a dynamical semigroup. We demonstrate
that accessible channels are determined by probability vectors of weights of a
convex combination of the group elements, which depend neither on the dimension
of the space on which the channels act, nor on the specific representation of
the group. Investigating geometric properties of the set $\cal A$  of accessible
maps we show that this set is non-convex, but it enjoys the star-shape property
with respect to the uniform mixture of all elements of the group. We demonstrate
that the set $\cal A$ covers a positive volume in the polytope of all convex
combinations of the elements of the group.
\end{abstract}

\maketitle

{\sl Dedicated to the memory of  Prof. Andrzej Kossakowski (1938 -- 2021)}

\section{Introduction}

The Schr\"odinger equation describes the time evolution of an isolated quantum
system. However, for a system open to an environment, one needs a broader
framework. Gorini, Kossakowski and Sudarshan~\cite{GKS76}, and independently
Lindblad~\cite{Li76}, derived such an equation of motion governing the evolution
of open quantum systems -- see \cite{CP17} for historical remarks. The most general
form of such a GKLS generator, acting on $N$ dimensional systems in Hilbert 
space $\c H_N$ and implying Markovian dynamics, is given by
\begin{equation}\label{gkls}
L(\cdot)=-i[H,\cdot]+\Lambda(\cdot)-\frac12\{\Lambda^*(\1),\cdot\},
\end{equation}

where $H$ is the effective Hamiltonian, $\Lambda$ is a completely positive map,
and $*$ denotes the dual in the Heisenberg picture. The first term on the
right-hand side is responsible for the unitary part of the evolution, whereas
the remaining terms describe the dissipation. Any operation $L$ may be written
in the GKLS form \eqref{gkls} if and only if it is (a) Hermiticity preserving,
$L(X^\dagger)=L(X)^\dagger$, (b) trace suppressing, $L^*(\1)=0$, and (c)
conditionally completely positive, i.e. $L\otimes\1(\project{\psi_+})$  is
positive semidefinite in the subspace orthogonal to the maximally entangled
state, $\ket{\psi_+}=\sum\ket{ii}/\sqrt{N}$ \cite{WECC08}. All terms in the
above equation can generally be time-dependent. While a time-dependent Lindblad
operation, $L_t$, governs a broader class of evolutions, the simple structure of
a semigroup is predicated by a time-independent case, $L_t=L$.\\

On the other hand, one may adopt yet another approach, namely quantum channels,
applicable in a more general situation. Indeed, this method works whenever the
physical system and its interacting the environment are initially disjoint
\cite{RH11}, and even correlated in some cases
\cite{SL09,SL16,Betal13,B14,DSL16}. By a quantum channel, we mean a completely
positive and trace-preserving map sending the convex set of $N$-dimensional
quantum states into itself. Assuming that the time evolution of the system and 
environment is governed by the time-dependent unitary operator $U_t$, one has
$\c E_t(\rho)=\Tr_{E}\left[U_t(\rho\otimes\sigma)U^\dagger_t\right]$. The
Stinespring dilation \cite{S55} then guarantees that $\c E_t$ is a quantum
channel. Accordingly, quantum channels may cover a broader set of evolutions
rather than Markovian ones. Note that since $U_t$ satisfies the Schr\"odinger
equation, $\c E_t$ is a one parameter quantum channel continuous on $t$.\\

The question of Markovianity is then brought up. It was introduced into quantum
theory in 2008 \cite{WECC08}; however, the corresponding classical problem has a
long history \cite{K62,Ru62,C95,D10}. The question,  if a given classical map
described by a stochastic matrix is {\sl embeddable}, so a continuous Markov
process can generate it remains open, and it was recently generalized for the
quantum case \cite{KL21}.\\

Markovianity asks for a given quantum channel $\c E$ whether there exists a
Lindblad generator of the form \eqref{gkls} whose resulting quantum map is equal
to $\c E$ at some time $t$. For a general time-dependent Lindblad generator, the
answer to this question is positive if and only if  $\c E$ is an infinitesimal
divisible channel \cite{D89,WC08}. By definition, an infinitesimal divisible
channel is one that can be written as a concatenation of quantum maps
arbitrarily close to identity \cite{D89}.\\

Here, we are more interested in the case with a time-independent Lindblad
generator. Thus the central question for an assumed quantum channel $\c E$ is to
find whether there exists a dynamical semigroup starting from identity and equal
to $\c E$ at some time $t=T$, i.e. if $\c E=\exp\{TL\}$ where $L$ is a
time-independent Lindblad generator in the form \eqref{gkls}. Equivalently, one
may ask if there exists a logarithm for a quantum channel $\c E$ satisfying
properties (a)-(c) mentioned above 
\cite{WECC08}. However, the non-uniqueness of logarithm for an assumed matrix
leaves this a highly difficult question. Indeed, it has been proved to be an
NP-hard problem from the computational perspective as well \cite{CEW12}.\\

As a consequence, while for qubit channels more facts are revealed
\cite{DZP18,PRZ19,CC19,JSP20}, less is known about Markovianity of quantum
channels in higher dimensions \cite{WECC08,SCh17,SAPZ20,Si21}. For instance, in
the case of Pauli channels, $\c E(\rho)=\sum_{i=0}^3
p_i\sigma_i\rho\sigma_i^\dagger$ where $\sigma_0=\1_2$ and $\sigma_i$ for
$i\in\{1,2,3\}$ is one of Pauli matrices, the subset of Markovian channels can
be divided into two classes \cite{DZP18}. The first class contains three
measure-zero subsets of the channels with two negative degenerated eigenvalues
$\lambda_-$ satisfying $\lambda_-^2\leq\lambda_+$ where $\lambda_+$ is the other
nontrivial positive eigenvalue of the channel. While the second class is a
subset of channels with positive eigenvalues such that eigenvalues satisfy
$\lambda_i\lambda_j\leq\lambda_k$ for all combinations of different $i,j,k$. The
latter set occupies $3/32$ of the whole set of Pauli channels \cite{SAPZ20}.
Interestingly, in the case of Pauli channels by taking a convex combination of
three Lindblad generators of the form $L_i(\rho)=\sigma_i\rho\sigma_i^\dagger
-\rho$ for $i=1,2,3$ as the generator of the semigroup, one can exhaust the
entire volume of Markovian maps \cite{PRZ19}. Motivated by this fact, a slightly
different and  simplified version of the Markovianity problem called
accessibility has been introduced~\cite{SAPZ20}. Given a set of 
quantum channels $S=\{\c E_i\}$ with a not necessarily finite number of
elements, in the accessibility problem one asks which quantum channels can be
generated by a Lindblad generator of the form $L(\rho)=\sum q_i \c
E_i(\rho)-\rho$ for some probabilities $q_i$.\\

The accessibility problem was studied for mixed unitary channels given by a
convex combination of Weyl unitary operators \cite{W27} -- a unitary
generalization of Pauli matrices to higher dimensions -- which form the set $S$
of extremal channels \cite{SAPZ20}. Thus accessible channels form a subset $\cal
A$ of the polytope of all convex combinations of the elements of the generating
set $S$. The set $\cal A$ can be obtained by proper Lindblad generators related
to the elements of the set $S$ of Weyl channels itself. Although the set $\cal
A$ of accessible channels is a proper subset of Markovian ones by definition,
like the Pauli channels, the accessible Weyl channels recover the full measure
of Markovian maps. This volume is reduced by an increase in the dimension of the
system \cite{SAPZ20}. In the general case, the set $\cal A$ of accessible maps
occupies a positive volume of the set of channels.\\

Another relevant fact about Markovian and accessible Pauli channels concerns the
rank of these channels. It is known that Pauli channels of Choi rank $3$ are
neither accessible nor Markovian \cite{DZP18,PRZ19}. In higher dimensions, to
our best knowledge, the rank problem has been solved  only  for a mixture of
qutrit Weyl channels accessible by a Lindblad semigroup confirming the existence
of accessible channels only of rank $1,3,9$ \cite{SAPZ20}. The {\it
accessibility rank} has been an open problem for Weyl channels in other
dimensions as well as other quantum channels. \\

The aim of this work is to study the problem of accessibility for a large class
of quantum channels admitting a semigroup structure. Such a set contains Weyl
channels as a special subset. We show that the group properties play an
essential role in the accessibility problem. Applying such an approach, we can
solve the accessibility rank for Weyl channels of any dimension and show which
channels with admissible accessible ranks are Markovian. Furthermore, we obtain
analytic results for the relative volume of the set $\cal A$ of accessible
channels in some cases. \\


The structure of the paper is as follows. In Section \ref{general} the necessary
notions of Lindblad dynamics and accessible channels are introduced. Subsequent
Section \ref{Examples} provides a detailed description of the set $\cal A$ of
accessible maps generated by a given group $G$ of quantum channels. Some key
results of this work are presented in Section \ref{geometry}, in which we show
that the set $\cal A$ is non-convex, but it has the star-shape property with
respect to the uniform mixture of all elements of the group $G$. The following
section discusses some further results on quantum channels, while the case of
stochastic matrices and classical maps is described in Sec. \ref{classical}. The
final Section \ref{conclusion}  concludes the work and presents a list of open
problems. Derivation of the relative volume of the set $\cal A$  for cyclic and
non-cyclic groups of order $g=4$ is provided in Appendix \ref{AppA}.

\section{General Picture} \label{general}
Consider a quantum channel $\c E$ acting on  $N$ dimensional states and
described by a set $\{K_a\}$ of Kraus operators. The superoperator is then
represented by a matrix of order $N^2$, which reads $\Phi_{\c E} =\sum_a
K_a\otimes\overline{K}_a$, where overline denotes complex conjugation.\\

Let $B_\alpha$ with $\alpha\in\{0,\dots,N^2-1\}$ 
form  a basis of $N^2$ matrices 
of order $N$ such that
$\Tr ( B_\alpha B^\dagger_\beta)=\delta_{\alpha\beta}$
for any  $ \alpha,\beta$.
Hence  any operator $X$ of dimension $N$
can be expanded as $X=\sum_{\alpha} x_\alpha B_\alpha$. 
If a quantum channel $\c E$ acts on  $X$ as the input, 
the output  reads $\c E(X)=\sum (\Phi_{\c E}\bm{x})_\alpha B_\alpha$.
Here the $N^2$ dimensional matrix
$(\Phi_{\c E})_{\alpha,\beta}=
\Tr\left(B^\dagger_\alpha\ \c E(B_\beta)\right)$ 
introduces the channel effects to
the vector $\bm{x}$ of same dimension formed by
 coefficients $x_\alpha$ of the input.
Selecting $B_\alpha=B_{ij}=\outprod ij$ as the basis,
 the superoperator  takes the standard form 
$\sum K_a\otimes\overline{K}_a$.
Using the Hermitian basis of generalized Gell-Mann matrices,
with $B_0=\frac{1}{\sqrt{N}}\1_N$, the affine parameterization of
quantum channels in terms of the generalized Bloch vector is obtained,
\begin{equation}\label{affine}
\Phi_\c E=
\begin{pmatrix}
  1       && 0\\
  \vect{t}&& M
\end{pmatrix}.
\end{equation}
Here $M$ denotes a real matrix of dimension $N^2-1$ called the distortion
matrix, while $\vect{t}$ is a real translation vector of the same dimension
presenting how the channel $\c E$ shifts the identity. The same approach gives
the superoperator assigned  to a Lindblad generator $L$ given in \eqref{gkls},
as $(\c L_L)_{\alpha,\beta}= \Tr\left(B^\dagger_\alpha L(B_\beta)\right)$.
Hereafter, quantum channels and  Lindblad generators are denoted by their
corresponding superoperators, $\Phi$ and $\c L$,  and we will drop the
subscripts $\c E$ and $L$  for the sake of brevity.\\

If $\Phi$ denotes a quantum channel, then $\c L_\Phi=\Phi-\1$ is a Hermiticity
preserving and trace suppressing map. Moreover, complete positivity of $\Phi$
imposes conditional complete positivity on $\c L_\Phi$ resulting in a valid
Lindblad generator assigned to any quantum channel \cite{WC08}. It is possible
to show this fact through Eq.~\eqref{gkls} by choosing $H=\1_N$ and $\Lambda=\c
E$. Being trace preserving, $\c E$ admits $\Lambda^\ast(\1)=\c E^\ast(\1)=\1$
which implies $L_{\c E}(\cdot)=\c E(\cdot)-\frac12\{\1_N,\cdot\}$. However, the 
inverse is not valid, i.e. not all Lindblad generators can be obtained by 
subtracting the identity from a quantum channel.
The operation $\c L_\Phi=\Phi-\1$ is a generator of a dynamical semigroup,
$\e^{t\c L_\Phi}$, which provides a completely positive and trace-preserving map
at each moment, $t\geq0$, starting from $\e^{0\c L_\Phi}=\1$ and tending to a
point in the set of quantum channels at $t\to\infty$. The quantum channel
$\Phi$, based on which $\c L_\Phi$ is defined, does not necessarily belong to
the trajectory of $\e^{t\c L_\Phi}$.

\begin{proposition}\label{limit}
Let $\Phi$ denote a quantum channel and $\c L_\Phi=\Phi-\1$ be the corresponding
Lindblad generator. The quantum channel $\e^{t\c L_\Phi}$ at $t\to\infty$ is a
projective map that preserves the invariant states of $\Phi$.
\end{proposition}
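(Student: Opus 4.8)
The plan is to analyze the limit $t\to\infty$ through the spectral structure of $\Phi$. First I would rewrite the semigroup as $\e^{t\c L_\Phi}=\e^{-t}\e^{t\Phi}$ and pass to the Jordan decomposition of the superoperator $\Phi$. Since $\Phi$ is a quantum channel, its spectrum lies in the closed unit disc, $|\lambda|\le 1$, and the value $\lambda=1$ always belongs to it (trace preservation forces $\Phi^\ast(\1)=\1$, so $1$ is an eigenvalue and a fixed state exists). On each Jordan block with eigenvalue $\lambda$ the generator $\c L_\Phi=\Phi-\1$ acts as $(\lambda-1)\1+N$ with $N$ nilpotent, so this block contributes the factor $\e^{t(\lambda-1)}\e^{tN}$ to $\e^{t\c L_\Phi}$.

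The key observation is the elementary estimate $\mathrm{Re}(\lambda-1)<0$ for every eigenvalue with $\lambda\neq 1$; indeed even on the peripheral circle, writing $\lambda=\e^{i\theta}$, one has $\mathrm{Re}(\lambda)=\cos\theta<1$ unless $\theta=0$. Hence for all such blocks the scalar $\e^{t(\lambda-1)}$ decays exponentially and dominates the at most polynomial growth of $\e^{tN}$, so their contribution vanishes as $t\to\infty$. Only the blocks with $\lambda=1$ can survive, each contributing $\e^{tN}$. I would then argue that these surviving blocks are in fact trivial, $N=0$: otherwise $\e^{tN}=\1+tN+\cdots$ would grow polynomially in $t$, contradicting the fact recorded just above that $\e^{t\c L_\Phi}$ remains a quantum channel, hence bounded, for all $t\ge 0$. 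Consequently the limit exists and equals the spectral (Riesz) projector $P$ of $\Phi$ onto its $\lambda=1$ eigenspace.

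It then remains to read off the two claimed properties. Being a spectral projector, $P$ is idempotent, $P^2=P$, which is precisely the assertion that the limiting map is projective; note that $P$ need not be orthogonal, since $\Phi$ need not be normal. Moreover any invariant state $\rho$ satisfies $\Phi(\rho)=\rho$ and thus lies in the $\lambda=1$ eigenspace, on which $P$ restricts to the identity; therefore $P(\rho)=\rho$, so the invariant states are preserved.

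The main obstacle I anticipate is justifying that the eigenvalue-$1$ sector carries no nontrivial Jordan block, i.e. that $\e^{t\c L_\Phi}$ does not blow up along that sector. This is exactly the semisimplicity of the peripheral spectrum of a completely positive trace-preserving map; I would either invoke it as a known structural fact, or deduce it self-containedly from the boundedness (trace-norm contractivity) of the family $\e^{t\c L_\Phi}$, which the preceding discussion already guarantees. The remaining ingredients—the eigenvalue estimate and the identification of the limit with the spectral projector—are then routine.
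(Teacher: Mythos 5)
Your proposal is correct, and it follows the same spectral route as the paper: both arguments rest on the fact that the spectrum of a completely positive trace-preserving map lies in the closed unit disk and contains the eigenvalue $1$, so that every eigenvalue of $\c L_\Phi=\Phi-\1$ is either zero or has strictly negative real part, whence $\e^{t\c L_\Phi}$ converges to the projector onto the fixed-point subspace of $\Phi$. The difference is one of completeness rather than of method. The paper passes directly from the eigenvalue statement to the conclusion (``Therefore, the exponential approaches the operator projecting\dots''), silently assuming that the eigenvalue $1$ of $\Phi$ carries no nontrivial Jordan block; without that assumption the limit need not exist, since a nilpotent part $N\neq 0$ on the $\lambda=1$ sector would make $\e^{t\c L_\Phi}$ grow polynomially there. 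You make this step explicit and close it, deducing $N=0$ from the uniform boundedness of the family $\e^{t\c L_\Phi}$, $t\ge 0$ (each member is a quantum channel, and the set of channels is bounded in finite dimension) --- equivalently, from the semisimplicity of the peripheral spectrum of a CPTP map. Your identification of the limit as the (generally non-orthogonal) Riesz projector, which is idempotent and restricts to the identity on the fixed points of $\Phi$, matches the paper's conclusion exactly; the Jordan-block argument is the one point where your write-up is strictly more rigorous than the published proof.
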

\begin{proof}
According to the Perron--Frobenius theorem,  the spectrum of a superoperator
$\Phi$ corresponding to a completely positive and trace-preserving map is
confined to the unit disk, and it contains a leading eigenvalue equal to unity.
Hence the eigenvalues of the Lindblad generator $\mathcal{L}_\Phi =  \Phi-\1$
are either zero or have negative real parts. Therefore, the exponential
$e^{t\mathcal{L}_\Phi}$ approaches the operator projecting on the vector
subspace on which $\Phi$ acts as identity.
This subspace is spanned by the the eigenvectors $\bm{x}$ satisfying $\Phi\bm{x}
= \bm{x}$. As an example, note that a unitary channel, preserves the diagonal
entries and therefore has (at least) $N$ such eigenvectors; in such cases,
$\lim_{t\rightarrow \infty}e^{t\mathcal{L}_U}$ will therefore be the decoherence
channel.
\end{proof}
\begin{corollary}
	 For almost every channel $\Phi$, the limit point
	 $\lim_{t\rightarrow\infty}e^{t\mathcal{L}_\Phi}$ is the completely depolarizing
	 channel, i.e. the channel that sends all states to the maximally mixed one.
\end{corollary}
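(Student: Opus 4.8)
The plan is to combine Proposition~\ref{limit} with a genericity (measure-zero) argument. By Proposition~\ref{limit}, the limit $\lim_{t\to\infty}\e^{t\mathcal{L}_\Phi}$ equals the spectral projection $P$ of $\Phi$ onto its eigenspace for the eigenvalue $1$. Two facts make this projection clean: the peripheral spectrum of a CPTP map is automatically semisimple, so no Jordan block attached to the eigenvalue $1$ obstructs convergence; and every other eigenvalue $\lambda=\e^{i\theta}$ on the unit circle decays under $\e^{t(\lambda-1)}$ because $\mathrm{Re}(\lambda-1)=\cos\theta-1<0$ whenever $\theta\neq0$, while eigenvalues strictly inside the disk decay as well. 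Since the completely depolarizing channel is precisely the projection onto the one-dimensional span of the maximally mixed state $\1/N$, it suffices to show that, for almost every $\Phi$, the $1$-eigenspace of $\Phi$ is one-dimensional and spanned by $\1/N$.

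First I would work in the affine form \eqref{affine}. Trace preservation is encoded in the first row $(1,0)$, which already furnishes one eigenvector of $\Phi$ with eigenvalue $1$ along the component $B_0=\1/\sqrt N$. Any further fixed point $\bm x=(x_0,\vect x)$ must satisfy $(\1-M)\vect x=x_0\vect t$, so extra invariant directions appear exactly when the distortion matrix $M$ has $1$ as an eigenvalue, i.e. when $\det(\1-M)=0$. This determinant is a polynomial in the entries of $\Phi$ that is not identically zero on the set of channels -- it equals $1$ at the completely depolarizing channel, where $M=0$ -- so its vanishing locus meets the set of channels in a subset of Lebesgue measure zero. Hence, for almost every channel, $\1-M$ is invertible, $1$ is a simple eigenvalue of $\Phi$, and the fixed-point space is one-dimensional, spanned by $\bm x=(x_0,\,x_0(\1-M)^{-1}\vect t)$.

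The step I expect to be the main obstacle is identifying this unique invariant direction with the maximally mixed state. The span above equals that of $\1/N$ only when the translation vector vanishes, $\vect t=0$, equivalently when $\Phi$ is unital, $\Phi(\1)=\1$. This holds in the setting relevant here, since convex combinations of unitary (more generally unital) group elements are unital, so $\1/N$ is always a fixed state and, generically, the only one; then the projection reads $P(X)=\tfrac{\1}{N}\Tr X$, which sends every state to $\1/N$, establishing the corollary. For a genuinely non-unital $\Phi$ the same computation shows the limit is instead the constant channel onto the unique invariant state determined by $(\1-M)^{-1}\vect t$, a point worth keeping in mind when invoking the statement outside the unital class.
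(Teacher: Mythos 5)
Your argument is correct and is essentially the one the paper intends: the corollary is stated there without any explicit proof, as a direct consequence of Proposition~\ref{limit} plus an implicit genericity claim, and your affine-form computation supplies exactly the missing step --- a fixed point $(x_0,\vect{x})$ beyond the one-parameter family exists iff $\det(\1-M)=0$, and this determinant is a nontrivial polynomial in the entries of $\Phi$ (nonzero at the completely depolarizing channel), so the eigenvalue $1$ is simple for almost every channel. Your appeal to semisimplicity of the peripheral spectrum is an additional point of rigor that the paper's proof of Proposition~\ref{limit} glosses over. The one substantive divergence is in your favor: you correctly observe that a one-dimensional fixed-point space yields the \emph{completely depolarizing} channel only when the translation vector vanishes, i.e.\ for unital $\Phi$; for a generic channel among \emph{all} CPTP maps the limit is instead the constant map onto its unique invariant state, determined by $(\1-M)^{-1}\vect{t}$, which differs from $\1/N$. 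Since unital channels form a measure-zero subset of all channels, the corollary read literally (``almost every channel'') is actually false, and becomes true only when ``almost every'' is taken relative to the unital class --- e.g.\ the mixed-unitary channels forming the convex hull of the group $G$, which is the setting in which the paper subsequently invokes the result (cf.\ Proposition~\ref{infinty}). Your restriction to that setting is the right fix, and your closing remark about non-unital $\Phi$ identifies a genuine imprecision in the paper's statement rather than a gap in your own argument.
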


\begin{definition}
	A channel written as $\Omega_t=\exp(t\c L)$, where $\c L=\Phi-{\1}$   is a
	time-independent Lindblad generator determined by a quantum channel $\Phi$ from
	a referred set $S$ of channels,  is called {\sl accessible}. The set of all
	accessible channels is denoted by $\c A$.
\end{definition}
There is no limitation on choosing the set $S$ of channels; for example, it can be 
the set of all channels, all unital channels, or a subset of channels determined 
based on what one can apply in a lab. Here we demand that the set $S$ be the 
semigroup formed by the convex hull of a group of channels. Hence the set $S$ here 
consists of mixed unitary channels; therefore, all elements are unital. \\

Let $G$ be a group consisting of $g=|G|$ quantum channels acting on states in
$\c H_N$, which means that all $\Phi\in G$ are unitary quantum channels, i.e.
$\Phi=U\otimes\overline{U}$. As a convention, let us take $\Phi_0=\1$ as the
identity (neutral) element of the group. We define the corresponding set of
Lindblad generators as $F_\c L(G):=\{\c L_{\Phi}=\Phi-\1 \}$ where  $\Phi\in \,
G-\{\1\}$. Note that we have excluded by hand the useless generator
corresponding to the identity from the set. Now, we can investigate the
accessibility problem, i.e. ask which quantum channels obtained by a convex
combination of the elements of $F_\c L(G)$ belong to the set $\cal A$ of
accessible channels. As the following theorem shows,  the  set $\cal A$ is a
subset of  the convex hull of the group elements $\sc C(G):= \{\Phi=\sum
p_\mu\Phi_\mu\}$ with $\sum_{\mu}p_{\mu}=1$, which is a semigroup itself. \\

\begin{theorem}\label{group}
Consider an identity map, $\Phi_0=\1$, 
and  a set of maps $\Phi_\mu$ which form a group $G$.
Then the dynamical semigroup $\e^{t\c L}$ generated by
$\c L=\sum_{\mu=1}^{g-1} q_\mu\c L_{\Phi_\mu}$,
 where $\c L_{\Phi_\mu}=\Phi_\mu-\1$,
is a convex combination of  the group members,
\begin{equation}\label{w_t}
	\e^{t\c L}=\sum_{\mu=0}^{g-1}w_\mu(t)\Phi_\mu,
\end{equation}
where the weights $w_{\mu}$ depend on time.
\end{theorem}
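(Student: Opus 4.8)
The plan is to exploit the group structure through the linear span of the group elements. Let $V=\spa\{\Phi_0,\dots,\Phi_{g-1}\}$ be the real vector space generated by the superoperators $\Phi_\mu$. Because composition of the unitary channels respects the group law, $\Phi_\mu\Phi_\nu=\Phi_{\mu\nu}\in V$, the space $V$ is closed under multiplication and hence a finite-dimensional associative algebra containing $\Phi_0=\1$. Since $\c L=\sum_{\mu=1}^{g-1}q_\mu(\Phi_\mu-\1)$ is a real combination of elements of $V$, every power $\c L^n$ stays in $V$, and the norm-convergent series $e^{t\c L}=\sum_n \frac{t^n}{n!}\c L^n$ therefore lies in the (closed, finite-dimensional) space $V$. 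This already delivers the claimed form $e^{t\c L}=\sum_{\mu=0}^{g-1}w_\mu(t)\Phi_\mu$ with real, time-dependent coefficients.

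To upgrade a linear combination into a genuine convex one, I would exhibit the weights as the law of a continuous-time random walk on $G$. On the abstract group I introduce a probability vector $p_t$ obeying the Kolmogorov forward equation $\dot p_t(h)=\sum_{\mu=1}^{g-1}q_\mu\bigl[p_t(g_\mu^{-1}h)-p_t(h)\bigr]$ with initial data $p_0=\delta_e$. The off-diagonal rates are the non-negative numbers $q_\mu$ and each column of the generator sums to zero, so this is a bona fide stochastic generator; consequently $p_t(h)\ge 0$ and $\sum_h p_t(h)=1$ for all $t\ge 0$.

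The final step is to verify that $\Psi_t:=\sum_h p_t(h)\Phi_h$ coincides with $e^{t\c L}$. Differentiating and reindexing by the substitution $h=g_\mu h'$ (using $\Phi_{g_\mu h'}=\Phi_\mu\Phi_{h'}$), one gets $\sum_h p_t(g_\mu^{-1}h)\Phi_h=\Phi_\mu\Psi_t$, whence $\dot\Psi_t=\sum_\mu q_\mu(\Phi_\mu-\1)\Psi_t=\c L\Psi_t$ with $\Psi_0=\Phi_e=\1$. By uniqueness of solutions of this linear ODE, $\Psi_t=e^{t\c L}$, and setting $w_\mu(t)=p_t(g_\mu)$ gives the convex decomposition with $w_\mu\ge 0$ and $\sum_\mu w_\mu=1$.

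The easy part is the algebraic closure, which yields the linear form essentially for free; the substantive step is identifying the induced dynamics as a legitimate Markov generator on $G$, which is precisely what forces the weights to be a probability vector. I would flag that matching coefficients directly in the ansatz is unsafe when the $\Phi_\mu$ are linearly dependent, since then the $w_\mu$ need not be unique; the forward construction of $p_t$ followed by verification of the ODE sidesteps this, as it never requires reading off coefficients. As an independent sanity check on the sum-to-one constraint, note that each $\Phi_\mu$ has affine top-left entry $1$ and each generator $\Phi_\mu-\1$ is trace-suppressing, so the top-left entry of $e^{t\c L}$ equals $1$ and forces $\sum_\mu w_\mu(t)=1$ in every admissible decomposition.
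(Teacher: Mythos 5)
Your proof is correct, but it runs along a genuinely different track than the paper's. The paper argues directly on the power series: writing $\e^{t\c L}=\e^{-t}\sum_{m}\frac{\left(t\sum_\mu q_\mu\Phi_\mu\right)^m}{m!}$, it notes that by group closure every product of group elements appearing in the expansion is again a group element, so collecting terms yields an expansion over the $\Phi_\mu$ with manifestly non-negative coefficients; normalization then follows because the left-hand side is trace preserving (equivalently, the coefficient sum telescopes to $\e^{-t}\e^{t\sum_\mu q_\mu}=1$). You instead build the weights upstream, as the law $p_t$ of a continuous-time random walk on the abstract group with jump rates $q_\mu$, and identify $\sum_h p_t(h)\Phi_h$ with $\e^{t\c L}$ by uniqueness of solutions of the linear ODE $\dot\Psi_t=\c L\Psi_t$, $\Psi_0=\1$. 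The two arguments trade off as follows: the paper's is shorter and purely algebraic, while yours makes the probabilistic structure explicit and is immune to the coefficient-uniqueness worry you flag (the weights are never read off from a possibly linearly dependent family, a point the paper's normalization-by-trace-preservation step also survives, since each $\Phi_\mu$ is trace preserving regardless of dependence). A further payoff of your route: in matrix form your forward equation reads $\dot p_t=\left(\sum_{\nu=1}^{g-1}q_\nu R_\nu-\1\right)p_t$ with $R_\nu$ the left regular representation, so $w_\mu(t)=p_t(g_\mu)$ is exactly the formula of the paper's Lemma \ref{regular}; your proof of the theorem therefore establishes that lemma as a byproduct, and it also makes transparent the paper's later observation that the weights depend only on the group structure, not on the dimension or the particular representation by channels.
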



\begin{proof}
	Consider 
	a power series,
	\begin{equation}\label{grouptrajectory}
		\e^{t\c L}=\e^{t\left(\sum q_\mu\Phi_\mu-\1\right)}=
		\e^{-t}\sum_{m=0}^{\infty}
		\frac{\left(\sum_{\mu} tq_\mu\Phi_\mu\right)^m}{m\ !}.
	\end{equation}
	Closure of the group guarantees that the last expression on the right-hand side
	can be expanded based on group members with non-negative coefficients.
	Additionally, the left-hand side of this equation describes a trace-preserving
	quantum map at each moment in time. The above implies that the coefficients form
	a probability vector which completes the proof.
\end{proof}

The explicit time-dependent form of the probabilities $w_\mu (t)$ entering Eq.
(\ref{w_t}) relies on the group structure and not the specific representation of
the group $G$. Therefore, any faithful representation with any dimension, which
may not even present a quantum channel, can be adapted to compute the weights
$w_\mu(t)$. For example, one may think about the {\sl regular representation} of
a group. We remind the reader that the regular representation is written based
on the standard form of the Cayley table of the group. In this special
representation we assign a permutation matrix of dimension $g$, denoted by
$R_\alpha$, to each element of the group $G$ with the same cardinality such that
for any $\alpha$ and $\beta$ we have $\Tr\left(R_\alpha
R^\dagger_\beta\right)=g\delta_{\alpha,\beta}$, see Example 8 of \cite{SSMZK21}.
Hence, $w_\mu(t)$ can be obtained through the following lemma, which is a
consequence of the orthogonality of the permutations $R_\alpha$.
\begin{lemma}\label{regular}
	Let $G_\Phi=\{\Phi_0=\1,\cdots,\Phi_{g-1}\}$ be a representation of the group
	$G$ in terms of quantum channels. The group $G$ also admits a regular
	representation based on orthogonal permutations of dimension $g$ as
	$G=\{R_0=\1,\cdots,R_{g-1}\}$. An accessible quantum channel is  defined by Eq.
	\eqref{w_t}, in which
	\begin{equation}
		w_\mu(t) =
		\frac1g\Tr\left(R^\dagger_\mu \exp\left[t\left(\sum_{\nu=1}^{g-1}q_\nu 
		R_\nu-\1\right)\right]\right) = 
		\frac{\e^{-t}}{g}\Tr\left(R^\dagger_\mu 		
		\exp\left[t\sum_{\nu=1}^{g-1}q_\nu R_\nu\right]\right).
	\end{equation}
\end{lemma}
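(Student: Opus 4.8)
The plan is to exploit the observation, emphasized in the paragraph preceding the statement, that the weights $w_\mu(t)$ appearing in \eqref{w_t} are fixed by the abstract group structure alone and hence coincide in every faithful representation of $G$. I would therefore compute them in the regular representation, where the mutual orthogonality of the permutation matrices $R_\mu$ makes each individual weight trivial to isolate.

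Concretely, I would first transcribe the argument of Theorem~\ref{group} into the regular representation. Since $\{R_\nu\}$ is a homomorphic image of $G$ and the map $\Phi_\nu\mapsto R_\nu$ extends linearly to the generator, the closure property invoked in that proof applies verbatim here: expanding
\[
\e^{t\c L_R}=\e^{-t}\sum_{m=0}^{\infty}\frac{1}{m!}\left(t\sum_{\nu=1}^{g-1}q_\nu R_\nu\right)^{m},
\qquad \c L_R=\sum_{\nu=1}^{g-1}q_\nu R_\nu-\1,
\]
and collapsing each product of generators through the Cayley table yields $\e^{t\c L_R}=\sum_{\mu=0}^{g-1}w_\mu(t)R_\mu$ with \emph{exactly} the weights $w_\mu(t)$ of \eqref{w_t}. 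The extraction step is then immediate: multiplying this identity on the left by $R_\mu^\dagger$, taking the trace, and using $\Tr(R_\mu^\dagger R_\nu)=g\,\delta_{\mu\nu}$ gives $\Tr\!\left(R_\mu^\dagger\,\e^{t\c L_R}\right)=g\,w_\mu(t)$, which is the first claimed equality after division by $g$. The second equality follows at once, since $\1$ commutes with $\sum_\nu q_\nu R_\nu$, so the scalar factor $\e^{-t\1}=\e^{-t}\1$ can be pulled outside the exponential.

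The only point that demands care --- and the reason the regular representation is the natural vehicle --- is the uniqueness of the decomposition \eqref{w_t}. In an arbitrary (possibly non-faithful, or merely linearly dependent) representation the channels $\Phi_\mu$ need not be linearly independent, so the coefficients in an expansion over group members are a priori ambiguous; it is only the canonical weights produced by the Cayley-table bookkeeping that are genuinely representation-independent. The regular representation removes this ambiguity entirely, because orthogonality forces the $R_\mu$ to be linearly independent, so the expansion $\e^{t\c L_R}=\sum_\mu w_\mu(t)R_\mu$ is unique and the trace formula returns precisely those canonical weights. I expect the main (and rather mild) obstacle to lie in phrasing this faithfulness/uniqueness remark cleanly, rather than in any computation, which is essentially forced once the orthogonality relation $\Tr(R_\mu^\dagger R_\nu)=g\,\delta_{\mu\nu}$ is in hand.
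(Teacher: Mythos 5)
Your proposal is correct and follows exactly the route the paper intends: the paper states this lemma as an immediate consequence of the representation-independence of the Cayley-table weights (established in the proof of Theorem~\ref{group}) together with the orthogonality relation $\Tr(R_\mu^\dagger R_\nu)=g\,\delta_{\mu\nu}$, which is precisely your expansion-then-trace argument. Your additional remark on uniqueness via linear independence of the $R_\mu$ is a worthwhile clarification of a point the paper leaves implicit, but it is not a different method.
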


Additionally, as the group, $G_\Phi$ is assumed to be finite with unitary
elements $\Phi_\mu$, a convex polytope with exactly $g$ extreme points in the
set of all quantum channels can be constructed from them. Theorem \ref{group}
tells us the set of accessible maps also belongs to this polytope. The question
concerning the position of the trajectory $\e^{t\c L}$ in the polytope of the
group is highly related to the group structure, its subgroups, and the non-zero
interaction times (non-zero $q_\mu$). However, the trajectory is in the interior
of the polytope once all $q_\mu$ are non-zero. Moreover, it tends to the centre
of the polytope as $t\to\infty$.

\begin{proposition}\label{infinty}
	Let $G_\Phi=\{\Phi_\mu\}_{\mu=0}^{g-1}$ be a group of $g$ unitary quantum
	channels. The trajectory $\e^{t\c L}$ generated by a generic generator $\c
	L=\sum q_\mu\Phi_\mu-\1$ ends in the center of the polytope formed by the maps
	$\Phi_\mu$, i.e. the uniform mixture of all group members
	$\Phi_\ast=\frac1g\sum_{\mu=0}^{g-1}\Phi_\mu$.
\end{proposition}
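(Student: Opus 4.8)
The plan is to push everything into the regular representation, where the statement collapses to a fact about a single doubly stochastic matrix. By Theorem \ref{group} the trajectory has the form $\e^{t\c L}=\sum_{\mu}w_\mu(t)\Phi_\mu$, so it suffices to prove that every weight satisfies $w_\mu(t)\to 1/g$. Since Lemma \ref{regular} writes the weights in a representation-independent way, $w_\mu(t)=\frac1g\Tr\!\left(R_\mu^\dagger\,\e^{t(P-\1)}\right)$ with $P=\sum_\nu q_\nu R_\nu$, I would abandon the channel representation altogether and analyse only the $g\times g$ matrix $P$ built from the permutations $R_\nu$.

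The core step is to establish the spectral structure of $P$. It is doubly stochastic, being a convex combination of permutation matrices, and one has $\sum_\nu R_\nu=J$, the all-ones matrix, because in the regular representation any ordered pair of indices is realised by exactly one group element. I would then read off the entries of $P$: each off-diagonal entry $(P)_{ab}$ equals the single weight $q_\nu$ for which $(R_\nu)_{ab}=1$, so under the genericity hypothesis that all relevant $q_\nu>0$ these entries are strictly positive; hence $P^2$ has strictly positive entries and $P$ is primitive. Perron--Frobenius then forces the eigenvalue $1$ of $P$ to be simple, with eigenvector the uniform vector $\bm{u}=(1,\dots,1)^T$, and every remaining eigenvalue of $P$ to have modulus strictly below one.

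Primitivity gives the limit by exactly the mechanism of Proposition \ref{limit}: the spectrum of $P-\1$ consists of a simple zero together with eigenvalues of strictly negative real part, so $\e^{t(P-\1)}\to\Pi$, the rank-one limiting projector $\Pi=\frac1g J$ associated with $\bm{u}$. Feeding this into the formula of Lemma \ref{regular} and using $R_\mu^\dagger\Pi=\Pi$, valid because each permutation fixes the uniform vector, yields $w_\mu(\infty)=\frac1g\Tr(R_\mu^\dagger\Pi)=\frac1g\Tr\Pi=\frac1g$ for every $\mu$, which is precisely the assertion $\e^{t\c L}\to\frac1g\sum_\mu\Phi_\mu=\Phi_\ast$. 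I expect the only substantive obstacle to be the simplicity of the eigenvalue $1$ together with the absence of other unimodular eigenvalues of $P$; this is exactly what the word ``generic'' secures, and it genuinely fails for degenerate choices of the $q_\mu$ (for instance when the support $\{\nu:q_\nu>0\}$ generates only a proper subgroup), in which case the trajectory is confined to a sub-polytope and need not reach the centre. Everything downstream of the primitivity of $P$ is routine linear algebra.
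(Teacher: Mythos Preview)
Your proof is correct and takes a genuinely different route from the paper's. The paper stays in the channel representation and argues via representation theory: it decomposes the ambient space into irreducibles, invokes Schur's lemma together with the rearrangement identity $\Phi_\mu\Phi_\ast=\Phi_\ast$ to show that $\Phi_\ast$ itself is the projector onto the trivial isotypic component, and then appeals to Proposition~\ref{limit} to identify that projector as the endpoint of the flow. You instead transport everything to the regular representation via Lemma~\ref{regular}, reducing the question to the asymptotics of a single doubly stochastic $g\times g$ matrix $P$, and let Perron--Frobenius do the work. Your approach is more elementary (no Schur's lemma, no decomposition into irreducibles) and makes the role of the genericity hypothesis completely transparent as the irreducibility of $P$. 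The paper's approach, by contrast, directly establishes that $\Phi_\ast$ is a projector commuting with every $\Phi_\mu$, a structural fact that is reused later (e.g.\ in Proposition~\ref{central-line} and Theorem~\ref{nonzerovolume}).

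One small caveat: your claim that $P^2$ has strictly positive entries fails for $g=2$, where $P=q_1R_1$ is the swap and $P^2=q_1^2\1$, so $P$ is not primitive. But primitivity is more than you need. For the limit $\e^{t(P-\1)}\to\frac1gJ$ it suffices that the eigenvalue $1$ of $P$ be simple, since any other eigenvalue $\lambda$ with $|\lambda|\le1$ and $\lambda\ne1$ automatically satisfies $\mathrm{Re}(\lambda-1)<0$. Irreducibility of $P$ (which does hold under genericity for every $g\ge2$, all off-diagonal entries being positive) already gives simplicity of the Perron eigenvalue, so your argument goes through with this minor adjustment.
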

\begin{proof}
It has been shown in Proposition \ref{limit} that the trajectory $\e^{t\c L}$
ends in the projector $\1_{\mathbb{V}_0}$ that projects on $\mathbb{V}_0$, the
invariant subspace of $\Phi=\sum q_\mu\Phi_\mu$, i.e.
\begin{equation}
\mathbb{V}_0 \equiv 
\Big\{
	\bm{x}\,\,\big|\,\,\Phi_\mu\bm{x}=\bm{x},\,\,\,\forall \Phi_\mu\in G_\Phi
\Big\}.
\end{equation}
Now write $\mathbb{H}$ as a direct sum of irreducible representations as
\begin{equation}
\mathbb{H} = \bigoplus_k \mathbb{V}_k.
\end{equation}
On the other hand, note that for all $\mu$ we get
$\Phi_\mu\Phi_\ast = \Phi_\ast\Phi_\mu=\Phi_\ast$ due to group 
rearrangement theorem. Applying Schur's lemma, we get
\begin{equation}
\Phi_\ast=\bigoplus_k c_k\1_{\mathbb{V}_k}.
\end{equation}
The equality $\Phi_\mu\Phi_\ast=\Phi_\ast$ implies $c_k=0$ for all 
$k\neq0$ and for $k=0$, it is possible to directly verify $c_0=1$. 
This shows $\Phi_\ast$ is the projector to the intersection of the invariant subspace of all group members and completes the proof.
\end{proof}

In the case with some vanishing weights, $q_\mu$, the situation will be
completely different. However, one gets the salient result from Theorem
\ref{group} that the trajectory remains in the subset characterized by the
smallest subgroup containing all non-vanishing quantum channels appearing in the
Lindblad generator. Such a curve ends in the centre of the polytope formed by
the smallest subgroup as well. As an example, let us remind readers the group
properties imply that every element $\Phi_\mu\in G_{\Phi}$ has order equal to
the smallest integer and the positive number $h_\mu$ such that
$\Phi_\mu^{h_\mu}=\1$. The trajectory generated by $\c L_{\Phi_\mu}=\Phi_\mu-\1$
is defined by
\begin{equation}\label{1lindblad}
	\e^{t_\mu\c L_{\Phi_\mu}}=\e^{t_\mu(\Phi_\mu-\1)}=
	\e^{-t_\mu}\e^{t_\mu\Phi_\mu}=
	\e^{-t_\mu}\sum_{l=0}^{\infty}\frac{(t_\mu\Phi_\mu)^l}{l\ !}=
	\e^{-t_\mu}\left(\sum_{l\in M^\mu_0}\frac{t_\mu^l}{l\ !} \ \1+
	\sum_{l\in M^\mu_1}\frac{t_\mu^l}{l\ !}\ \Phi_\mu+
	\dots+\sum_{l\in M^\mu_{h_\mu-1}}\frac{t_\mu^l}{l\ !}\ 
	\Phi_\mu^{h_\mu-1}\right).
\end{equation}
Here $M^\mu_r$ is the set of all non-negative numbers congruent modulo $h_\mu$,
whose remainder, when divided by $h_\mu$, is equal to $r$. 
Note that
\begin{eqnarray}\label{prob}
	p_r(t_\mu):=
	\e^{-t_\mu}\sum_{l\in M^\mu_r}\frac{t_\mu^l}{l\ !}=
	\e^{-t_\mu}\sum_{n=0}^{\infty}
	\frac{t_\mu^{nh_\mu+r}}{(nh_\mu+r)\ !}=
	\frac{\e^{-t_\mu}}{h_\mu}\sum_{s=0}^{h_\mu-1}
	\omega_{\mu}^{-sr}\exp[t_\mu\omega_{\mu}^{s}],
\end{eqnarray}
where $\omega_\mu=\exp[\frac{2\pi i}{h_\mu}]$. Provided that $t_\mu>0$, these
coefficients are all positive numbers adding up to $1$, i.e. they provide a
probability vector of length $h_\mu$ and none of its elements vanishes. Thus we
can write
\begin{equation}\label{oneterm}
	\e^{t_\mu\c L_{\Phi_\mu}}=\sum_{r=0}^{h_\mu-1}p_r(t_\mu)\Phi_\mu^r.
\end{equation}
which means at each moment of time we get a mixture of all $h_\mu$ different
powers of $\Phi_\mu$. This implies $\e ^{t_\mu\c L_{\Phi_\mu}}$ (with no
summation on $\mu$), for any $t_\mu>0$, is located in the interior of the 
polytope formed by taking the convex hull of $h_\mu$ extreme points each of
which is assigned to different powers of $\Phi_\mu$. Moreover, at
$t_\mu\to\infty$ the trajectory tends to the center of the polytope as discussed
in Proposition \ref{infinty}, i.e. the uniform mixture of all different powers
of $\Phi_\mu$. To see it explicitly, we should note that Eq.~\eqref{prob} can 
also be written as
\begin{equation}
	p_r(t_\mu)=\frac{1}{h_\mu}\left(1+\sum_{s=1}^{h_\mu-1}
	\omega_\mu^{-sr}\e^{-2t_\mu\sin^2(\pi s/{h_\mu})}
	\e^{-it_\mu\sin(2\pi s/h_\mu)}\right).
\end{equation}
This equation now shows that at large time scale
$\lim_{t_\mu\to\infty}p_r(t_\mu)=\frac{1}{h_\mu}$ holds  for any $r$.\\

We emphasise that Eq. \eqref{prob} is independent of the dimension and the
individual channels that form the group. It only depends on the order of the
element under investigation. Moreover, note that the set
$H_{\Phi_\mu}:=\{\Phi_\mu^r\}$ for $r\in\{0,\dots,h_\mu-1\}$ is actually a
cyclic subgroup of  $G_\Phi$ of order $h_\mu$. For an arbitrary convex
combination of Lindblad generators associated with different elements of
$H_{\Phi_\mu},$ the resultant trajectory cannot leave the polytope formed by the
elements of $H_{\Phi_\mu}$.\\

In other words, if all quantum channels appearing in $\exp[t(\sum q_{\mu}
\Phi'_{\mu} -\1)]$ belong to a cyclic subgroup, then at each moment of time, the
trajectory can be convexly expanded in terms of all members of the smallest
cyclic subgroup containing all $\Phi'_{\mu}$ in this summation. Due to
commutativity of the elements of a cyclic group, we get
$\exp{(t\c L)}=\exp{\left(\sum t_{\mu}\c L_{\Phi'_{\mu}}\right)}= 
\prod\exp({t_{\mu}\c L_{\Phi'_\mu}})$. 
Each term in the last formula can be expanded based on Eq. \eqref{oneterm} in
which probabilities are given by Eq. \eqref{prob}. To see this fact explicitly,
let $\Phi_{\mu}$ be the generator of $H_{\Phi_{\mu}}$, i.e.
$H_{\Phi_{\mu}}=\{\Phi_{\mu}^i\}_{i=0}^{h_{\mu}-1}=
\{\Phi'_{\mu}\}_{\mu=0}^{h_{\mu}-1}$, then
\begin{eqnarray}\label{cycliclindblad}
	\exp\left(\sum t_{\mu}\c L_{\Phi'_{\mu}}\right)&=&\nonumber
	\exp\left(\sum_{i=1}^{h_\mu-1} t_{\mu_i}\c L_{\Phi_{\mu}^i}\right)=
	\prod_{i=1}^{h_\mu-1}\exp\left(t_{\mu_i}\c L_{\Phi_\mu^i}\right)\\&=&
	\prod_{i=1}^{h_\mu-1}
	\left(\sum_{j=0}^{h_{\mu_i}-1}p_j(t_{\mu_i})\Phi_{\mu}^{ij}\right)
	=\sum_{l}\mathzapf{P}_l(t_{\mu})\Phi_{\mu}^l,
\end{eqnarray}
where $h_\mu$, and $h_{\mu_i}$ are the order of $\Phi_\mu$, and $\Phi_\mu^i$, 
respectively. In the last equation $l=(ij\ {\rm mod}\ h_\mu)$ and 
\begin{equation}\label{prob2}
\mathzapf{P}_l(t_{\mu})\equiv \underset{\stackrel{\scriptstyle {\rm such\ that:}}{(ij\ {\rm mod}\ h_{\mu})=l}}{\prod_{i=1}^{h_\mu-1}
\sum_{ j=0}^{h_{\mu_i}-1}}p_j(t_{\mu_i}).
\end{equation}
Hence, we can find the time-dependent probabilities in the convex combination.
However, this is only one of the possibilities since, in a general case, the
polytope formed by the set elements $G_\Phi$ can differ form the simplex.\\

Consider now another quantum channel $\Phi^{\prime\prime}_\alpha$ from the set
$G_\Phi$ such that there is no cyclic subgroup of $G_\Phi$ including both
$\Phi_\mu$ and $\Phi^{\prime\prime}_\alpha$. To emphasize this fact, we will
denote $\Phi^{\prime\prime}_\alpha$ by $\Theta_\alpha$. If we want to add $\c
L_{\Theta_\alpha}$ to the set discussed in Eq.~(\ref{cycliclindblad}) two
different cases are possible. First, if we assume
$[\Phi_{\mu},\Theta_\alpha]=[\c L_{\Phi_\mu},\c L_{\Theta_\alpha}]=0$, then
\begin{eqnarray}
\exp\left(t_\alpha\c L_{\Theta_\alpha}+\sum t_{\mu}\c L_{\Phi'_{\mu}}\right)&=&
\exp\left(t_\alpha\c L_{\Theta_\alpha}\right)
\exp\left(\sum_{i=1}^{h_\mu-1} t_{\mu_i}\c L_{\Phi_{\mu}^i}\right)
=\sum_{k,l}p_k(t_\alpha)
\mathzapf{P}_l(t_{\mu})\Theta_\alpha^k\Phi_{\mu}^l,
\end{eqnarray}
where $h_\alpha$ and $h_\mu$ are the order of $\Theta_\alpha$ and $\Phi_\mu$,
and $p_k(t_\alpha)$ and $\mathzapf{P}_l(t_\mu)$ are introduced in Eqs.
\eqref{prob} and \eqref{prob2}, respectively. Note that commutativity of
$\Theta_\alpha$ and $\Phi_\mu$ imposes compatibility on any power of them, i.e.
all elements of the cyclic subgroups $H_{\Theta_\alpha}$ and $H_{\Phi_\mu}$
commute. Moreover, as we assumed that there is no cyclic subgroup including both
$\Theta_\alpha$ and $\Phi_\mu$, thus $H_{\Theta_\alpha}$ and $H_{\Phi_\mu}$ have
no element in common but the identity. This implies that the product of
$H_{\Theta_\alpha}$ and $H_{\Phi_\mu}$ is a subgroup of $G_\Phi$ of order
$h_\alpha h_\mu$, i.e. $\forall k,l:\ \ \Theta_\alpha^k\Phi_\mu^l$ belongs to a
subgroup of $G_\Phi$ formed by multiplication of its cyclic and commutative
subgroups.
Applying the above arguments, we can generalize these results as follows.


\begin{corollary} \label{2cyclic lindblad}
Let quantum channels $\Theta$ and $\Phi$ respectively denote the generators of
the cyclic groups $H_\Theta$ of order $h_\Theta$ and $H_\Phi$ of order $h_\Phi$.
Assume that $[\Theta,\Phi]=0$ and that $H_\Theta$ and $H_\Phi$ have no element
in common but the identity. Consider any convex combination of the elements of
$\{\Theta^i-\1\}_{i=1}^{h_\Theta-1}\bigcup \{\Phi^j-\1\}_{j=1}^{h_\Phi-1}$.

Then the dynamical semigroups generated by such Lindblad generators belong to
the polytope constructed by the convex hull of $h_\Theta h_\Phi$
extreme points related to different elements of $H_{\Theta}\times H_{\Phi}$. The
trajectory, $\exp(t {\c L})$,  starts from $\Theta^0=\Phi^0=\1$ when all
interaction times $t_{\Theta^i}$ and $t_{\Phi^j}$ are zero,  it  goes inside the
polytope if there exist $i$ and $j$   
for which $t_{\Theta^i}\neq0$ and $t_{\Phi^j}\neq0$, and it tends to the
projector at the centre of polytope at $t\to\infty$.
\end{corollary}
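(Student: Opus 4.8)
The plan is to reduce the statement to the single-cyclic-group result already derived in \eqref{cycliclindblad}, using the commutativity hypothesis to decouple the two factors. First I would split the generator as $\c L=\c L_\Theta+\c L_\Phi$, where $\c L_\Theta=\sum_{i}q_{\Theta^i}(\Theta^i-\1)$ collects the terms built from powers of $\Theta$ and $\c L_\Phi=\sum_{j}q_{\Phi^j}(\Phi^j-\1)$ those built from powers of $\Phi$. Since each cyclic factor is abelian and $[\Theta,\Phi]=0$, every summand of $\c L_\Theta$ commutes with every summand of $\c L_\Phi$, so $[\c L_\Theta,\c L_\Phi]=0$ and therefore $\e^{t\c L}=\e^{t\c L_\Theta}\,\e^{t\c L_\Phi}$.

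Next I would expand each factor separately by the cyclic-subgroup analysis of \eqref{oneterm}--\eqref{cycliclindblad}: $\e^{t\c L_\Theta}=\sum_{l=0}^{h_\Theta-1}\mathzapf{P}_l\,\Theta^l$ and $\e^{t\c L_\Phi}=\sum_{m=0}^{h_\Phi-1}\mathzapf{Q}_m\,\Phi^m$, where $\mathzapf{P}$ and $\mathzapf{Q}$ are probability vectors of the form \eqref{prob2}. Multiplying the two expansions yields $\e^{t\c L}=\sum_{l,m}\mathzapf{P}_l\mathzapf{Q}_m\,\Theta^l\Phi^m$; being a product of two probability vectors, the coefficients $\mathzapf{P}_l\mathzapf{Q}_m$ are non-negative and sum to one, so this is a genuine convex combination of the operators $\Theta^l\Phi^m$.

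To identify the polytope, I would then show that the $\Theta^l\Phi^m$ are $h_\Theta h_\Phi$ \emph{distinct} maps. If $\Theta^l\Phi^m=\Theta^{l'}\Phi^{m'}$ then $\Theta^{l-l'}=\Phi^{m'-m}\in H_\Theta\cap H_\Phi=\{\1\}$, which forces $l=l'$ and $m=m'$ within the allowed ranges; hence the set $\{\Theta^l\Phi^m\}$ is exactly the direct-product group $H_\Theta\times H_\Phi$ of order $h_\Theta h_\Phi$. These are the claimed extreme points, and the convex combination above places the whole trajectory inside their convex hull, establishing the containment.

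It remains to locate the trajectory inside this polytope. At zero interaction times $\e^{t\c L}=\1=\Theta^0\Phi^0$, one of the vertices. For $t\to\infty$ I would invoke \pref{infinty}, applied to the subgroup generated by the channels carrying non-zero weight: its limit is the uniform mixture over that subgroup, i.e.\ the centre of the polytope. The delicate point -- and the part I expect to require the most care -- is the interiority claim: the product weight $\mathzapf{P}_l\mathzapf{Q}_m$ is strictly positive for all $l,m$ only when each of $\mathzapf{P}$ and $\mathzapf{Q}$ is strictly positive, which by \eqref{prob} holds precisely when the powers carrying non-zero weight generate all of $H_\Theta$ (respectively $H_\Phi$). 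Under the genericity implicit in the statement, in particular when the generators $\Theta$ and $\Phi$ themselves carry non-zero weight, this is guaranteed and the trajectory enters the relative interior; one must nonetheless note that if only a proper power of a generator is switched on, the trajectory stays within the subpolytope spanned by the subgroup it actually generates, so the interiority statement should be read relative to that subgroup.
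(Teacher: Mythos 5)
Your proposal is correct and follows essentially the same route as the paper: the corollary there is presented as a direct consequence of the preceding discussion, which likewise splits the generator into commuting cyclic pieces, factors $\e^{t\c L}$ and expands each factor via \eqref{oneterm}--\eqref{cycliclindblad}, multiplies the resulting probability vectors, identifies the $h_\Theta h_\Phi$ distinct products $\Theta^k\Phi^l$ as a subgroup using the trivial-intersection hypothesis, and invokes Proposition \ref{infinty} for the $t\to\infty$ limit. Your closing caveat --- that switching on only a proper power of a generator confines the trajectory to the sub-polytope of the subgroup it actually generates --- is a correct and worthwhile sharpening of the interiority and limit claims, which the paper states somewhat loosely.
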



\begin{corollary}\label{abelian accessibility}
Consider an abelian group $G_\Phi$ represented by  the maps $\Phi_\mu$ and  the
Lindblad generator of the form
$\c L=
\sum_\mu q_\mu\Phi_\mu-\1$.
Then the dynamical semigroup $\e^{t\c L}$, generated by $\c L$, can be
decomposed convexly based on the elements of a subgroup of $G_\Phi$ obtained by
the product of different cyclic subgroups, each formed by powers of all the maps
$\Phi_\mu$ appearing in $\c L$.
\end{corollary}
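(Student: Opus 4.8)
The plan is to exploit the single structural consequence of abelianness that makes everything factor. Because all the maps $\Phi_\mu$ entering $\c L$ commute, so do the associated generators $\c L_{\Phi_\mu}=\Phi_\mu-\1$: indeed $\c L_{\Phi_\mu}\c L_{\Phi_\nu}=\Phi_\mu\Phi_\nu-\Phi_\mu-\Phi_\nu+\1$ is manifestly symmetric in $\mu$ and $\nu$ once $\Phi_\mu\Phi_\nu=\Phi_\nu\Phi_\mu$. Writing $t_\mu=tq_\mu$, this commutativity lets me factor the exponential of the sum into a product of single-generator exponentials,
\begin{equation}
\e^{t\c L}=\e^{\sum_\mu t_\mu\c L_{\Phi_\mu}}=\prod_\mu\e^{t_\mu\c L_{\Phi_\mu}},
\end{equation}
exactly as in the step from the sum to the product in Eq.~\eqref{cycliclindblad}.

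Next I would invoke the single-generator result \eqref{oneterm}: each factor $\e^{t_\mu\c L_{\Phi_\mu}}=\sum_{r=0}^{h_\mu-1}p_r(t_\mu)\Phi_\mu^r$, with weights $p_r(t_\mu)$ from \eqref{prob}, is a convex combination of the powers of $\Phi_\mu$ and hence lives entirely in the cyclic subgroup $H_{\Phi_\mu}=\{\Phi_\mu^r\}$. Multiplying the factors out, the trajectory becomes a sum over multi-indices $(r_\mu)$ of the products $\prod_\mu p_{r_\mu}(t_\mu)$ multiplying the group elements $\prod_\mu\Phi_\mu^{r_\mu}$. A product of probability vectors is again a probability vector, since the coefficients sum to $\prod_\mu\big(\sum_{r_\mu}p_{r_\mu}(t_\mu)\big)=1$ and are non-negative, and by closure every $\prod_\mu\Phi_\mu^{r_\mu}$ is a member of $G_\Phi$. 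The elements so reached are precisely those of the product $\prod_\mu H_{\Phi_\mu}$ of the cyclic subgroups; because $G_\Phi$ is abelian this product of subgroups is itself a subgroup, which identifies the polytope confining the whole trajectory and completes the argument.

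The step that requires care — and the point where the argument genuinely departs from the clean two-generator situation of Corollary \ref{2cyclic lindblad} — is that the cyclic subgroups $H_{\Phi_\mu}$ need not intersect trivially. When they overlap, distinct multi-indices $(r_\mu)$ can produce the same group element $\prod_\mu\Phi_\mu^{r_\mu}$, so the naive count $\prod_\mu h_\mu$ of extreme points overshoots and one must collect like terms. The reassurance I need is only that the merged coefficients remain non-negative and still sum to unity; this is automatic, either directly (regrouping non-negative numbers that already summed to one) or as an instance of Theorem \ref{group}, whose trace-preservation argument forces the weights of any such convex expansion to form a probability vector irrespective of redundancies. One could alternatively route the proof through the fundamental theorem of finite abelian groups to present the generated subgroup as an honest direct product of cyclic factors, but the product-of-cyclic-subgroups description is the one matching the generators actually present in $\c L$, so I would keep that formulation.
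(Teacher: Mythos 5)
Your proof is correct, and its engine---commutativity of the generators $\c L_{\Phi_\mu}$, factorization of $\e^{t\c L}$ into single-generator exponentials, and expansion of each factor through Eqs.~\eqref{oneterm} and \eqref{prob}---is precisely the ``above discussion'' that the paper's one-line proof points to. Where you genuinely diverge is the final step: the paper closes the argument by invoking the fundamental theorem of finite abelian groups, which re-presents the subgroup generated by the participating maps as a direct product of cyclic factors with trivial pairwise intersections, so that the trivial-intersection machinery of Corollary~\ref{2cyclic lindblad} applies verbatim; you instead stay with the cyclic subgroups $H_{\Phi_\mu}$ generated by the maps actually appearing in $\c L$, accept that they may overlap, and observe that (a) in an abelian group the product $\prod_\mu H_{\Phi_\mu}$ is still a subgroup, and (b) collecting like terms when distinct multi-indices $(r_\mu)$ produce the same group element preserves non-negativity and normalization of the weights (alternatively guaranteed by the trace-preservation argument of Theorem~\ref{group}). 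Your route is more elementary---no structure theorem needed---and matches the corollary's wording more literally, at the price of losing the clean extreme-point count $\prod_\mu h_\mu$ and the uniqueness of the expansion that a direct-product presentation provides; the paper's route buys that product structure, but tacitly requires changing generators, since the cyclic factors supplied by the fundamental theorem need not coincide with the $H_{\Phi_\mu}$ themselves. Both arguments are valid; yours is arguably the tighter fit to what the statement asserts.
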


This Corollary can be proved through the fundamental theorem of abelian groups and 
the above discussion.\\

The condition on commutativity in Corollary \ref{2cyclic lindblad} may be
relaxed in a general scenario, i.e. $[\Theta,\Phi]\neq0$. In that case,
$H_{\Theta}\times H_{\Phi}$ is not necessarily a subgroup. Therefore, we need to
add some other quantum channels from $G_\Phi$ to the set $H_{\Theta}\times
H_{\Phi}$ so it satisfies closure. The smallest subgroup $H$ of $G_\Phi$ that
includes all members of the set $H_{\Theta}\times H_{\Phi}$ expands convexly the
dynamical semigroups generated by Lindblad generators of our interest. This is a
direct consequence of the closure property, as  the group structure plays a
crucial role for Theorem \ref{group}.
\section{Exemplary groups of channels and corresponding sets of accessible maps}
\label{Examples}

In this section we will apply the aforementioned theorems and relations for
certain constructive examples. For some groups associated with quantum channels
acting on $N$-dimensional systems and represented by matrices of size $N^2$ we
demonstrate how to find the quantum channels, their polytopes, and the subset of
accessible maps inside the polytope.\\

Before proceeding with the examples let us mention that if
$G_\Phi=\{\Phi_\mu=U_\mu\otimes\overline{U}_\mu\}_{\mu=0}^{g-1}$ is a group of
$g$ unitary channels, then the set of the corresponding Kraus operators denoted
by $\tilde{G}_U=\{U_\mu\}_{\mu=0}^{g-1}$ is a {\sl group up to a phase}, which
is also called a {\sl projective representation} of group $G$. Explicitly, in
the set  $\tilde{G}_U$:
(i) there are no two elements  which are equal up to a phase,
i.e. if $U\in \tilde{G}_U$, then $\e^{i\phi}U\notin \tilde{G}_U$,
(ii)  the set is closed up to a phase, i.e. if $U_1,U_2\in \tilde{G}_U$, then
$\e^{i\phi}U_1U_2\in \tilde{G}_U$ for some $\phi$,
(iii) and it contains an inverse up to a phase for each element, i.e. for any 
$U_1\in \tilde{G}_U$ there exists an element $U_2\in \tilde{G}_U$ and a phase 
$\phi$ such that $U_1U_2=\e^{i\phi}\1$.
Clearly, the last two conditions imply that (iii') the set possesses a neutral
element up to a phase. Moreover, the set $G_\Phi$ is abelian if and only if
$\tilde{G}_U$ is abelian up to a phase, i.e. $\forall U_1,U_2\in \tilde{G}_U$
one has $U_1U_2=\e^{i\phi} U_2U_1$ for some phase $\phi$.

\begin{example}[\bf The group of order $g=2$]\normalfont \label{g=2}

The first non-trivial example is a group of order $2$,
$G_\Phi=\mathsf{Z}_2=\{\Phi^{\sf Z_2}_0=\1,\Phi^{\sf Z_2}_1\}$ which  is a cyclic 
group. The superscript $\sf Z_2$ is to emphasis that the group is a cyclic group of 
order $2$.
Since $\Phi^{\sf Z_2}_1$ is of group order $h_1=2$, its spectrum contains only 
$\pm1$. 
In order to have two distinguished elements in the group,
there has to be $-1$ in the spectrum. Indeed, for an $m$ from the set 
$\{1,\dots,[N/2]\}$,  $2m(N-m)$ out of $N^2$  eigenvalues  of $\Phi^{\sf Z_2}_1$ 
may be equal to $-1$. To see that, let us denote by $\tilde{G}_{U}^{\sf 
Z_2}=\{U_0^{\sf Z_2}=\1_N,U_1^{\sf Z_2}\}$ the group up to a phase of Kraus 
operators. 
To have two distinguished elements in $\tilde{G}_{U}^{\sf Z_2}$ at least one of the 
eigenvalues of $U_1^{\sf Z_2}$ should possess a phase difference equal to $\pi$ 
from other eigenvalues which are equal. However, there might be $m$ eigenvalues 
with such a phase difference in 
the spectrum in the general case where $m$ belongs to the aforementioned set. Note 
that $2m(N-m)$ negative eigenvalues will appear in the spectrum of the quantum 
channel  $\Phi_1^{\sf Z_2}=U_1^{\sf Z_2}\otimes\overline{U}_1^{\sf Z_2}$.\\

The polytope related to the group, $G_\Phi=\sf Z_2$, is the simplex of dimension 
one, i.e. a line segment. 
The set of accessible maps, in this case, can be written as \eqref{oneterm} with 
probabilities given by Eq. \eqref{prob}
\begin{equation}\label{c2}
	\Omega_t=\frac{1+\e^{-2t}}{2}\1+
	\frac{1-\e^{-2t}}{2}\Phi^{\sf Z_2}_1.
\end{equation}
Hence the trajectory starts from the identity and goes to the projector at the
centre of the line at $t\rightarrow\infty$, see Fig.\ref{two-dim}(a). Thus, any
map of the form $\Phi=w\1+(1-w)\Phi_1^{\sf Z_2}$ with $w\geq1/2$ is accessible.
Note that the results do not depend on the dimension of the quantum channels,
and such a simplex exists in all dimensions. As an explicit example, however,
one may get the qubit channel $\Phi_1^{\sf Z_2}=\sigma_i\otimes\sigma_i$ with
one of the Pauli matrices.
\end{example}

\begin{example} [\bf The group of order $g=3$]\normalfont
A group of order $3$ is still a cyclic and so an abelian group, $G_{\Phi}=\sf
Z_3=\{\Phi_0^{\sf Z_3}=\1,\Phi_1^{\sf Z_3},\Phi_2^{\sf Z_3}\}$, where the
superscript $\sf Z_3$ denotes the cyclic group of order $3$. Since all the
members of this group are unitary channels and so the extreme points of the set
of quantum channels, these channels do not lie on the same line. Thus, they are
linearly independent and the polytope of the group is a triangle for
$N$-dimensional quantum channels. This triangle is regular with respect to the
Hilbert-Schmidt distance between two unitary maps $\Phi_i$ and $\Phi_j$,
\begin{equation}\label{HSdistance}
D(\Phi_i,\Phi_j)=\|\Phi_i-\Phi_j\|_2^2=2N^2-\Tr(\Phi_i\Phi_j^\dagger)-
\Tr(\Phi_i^\dagger\Phi_j).
\end{equation}
In the case of the group of order $3$  we get
\begin{equation}
\forall i\neq j:\qquad D(\Phi_i^{\sf Z_3},\Phi_j^{\sf Z_3})=
2N^2-\Tr(\Phi_1^{\sf Z_3})-\Tr(\Phi_2^{\sf Z_3}),
\end{equation}
showing all elements of the group are equidistant. Therefore, the triangle is 
regular for  $N$-dimensional quantum channels, and we deal with a 2-simplex.
On the other hand, the order of both non-trivial elements are three, 
$h_{1}=h_{2}=3$. Thus, through Eq. \eqref{cycliclindblad} and Eq. \eqref{prob2}, 
accessible maps are given by $\Omega_t=w_0(t)\1+w_1(t)\Phi_1^{\sf 
Z_3}+w_2(t)\Phi_2^{\sf Z_3}$, where $t=t_1+t_2$ and
\begin{equation}
	\begin{split}		
	w_0(t) &= \frac{1}{3}\left(1+2 e^{\frac{-3}{2} (t_1+t_2)}
	\cos \left(\frac{\sqrt{3}}{2}  (t_1-t_2)\right)\right),\\
	w_1(t) &= \frac{1}{3}\left(1- e^{\frac{-3}{2} (t_1+t_2)}
	\cos \left(\frac{\sqrt{3}}{2}  (t_1-t_2)\right)+
	\sqrt{3} e^{\frac{-3}{2} (t_1+t_2)}
	\sin \left(\frac{\sqrt{3}}{2}  (t_1-t_2)\right)\right),\\
	w_2(t) &= \frac{1}{3}\left(1- e^{\frac{-3}{2} (t_1+t_2)}
	\cos \left(\frac{\sqrt{3}}{2}  (t_1-t_2)\right)-
	\sqrt{3} e^{\frac{-3}{2} (t_1+t_2)}
	\sin \left(\frac{\sqrt{3}}{2}  (t_1-t_2)\right)\right).
	\end{split}
\end{equation}
These equations are already introduced in \cite{SAPZ20} in the study of
accessible Weyl channels of dimension three. The simplex representing  this
group and the subset $\cal A$  of accessible maps are shown in Fig.
\ref{two-dim}b. Since the group has no non-trivial subgroup, there is no
accessible map on the edges, as shown in this figure. Observe that this result
is independent of the dimension of the quantum channels. Again, as an explicit
example we may think of a group of three single-qubit unitary channels based on
the group up to a phase,  $\tilde{G}_U^{\sf Z_3}=\{\1_2,\;
\diag[\e^{\frac{i\pi}{3}},\e^{-\frac{i\pi}{3}}], \;
\diag[\e^{-\frac{i\pi}{3}},\e^{\frac{i\pi}{3}}]\}$. 
Existence of  a faithful representation of qubit channels for this group
guarantees the existence of a faithful representation in higher dimensions. Note
that the same non-convex subset of the equilateral triangle, plotted in Fig.
\ref{two-dim}b and bounded by two logarithmic spirals  \cite{SAPZ20}, was
earlier identified in search of classical semigroups \cite{LRRL16,KL21} in the
space of bistochastic matrices of order three.\\

\begin{figure}
\centering
\includegraphics[scale=0.35]{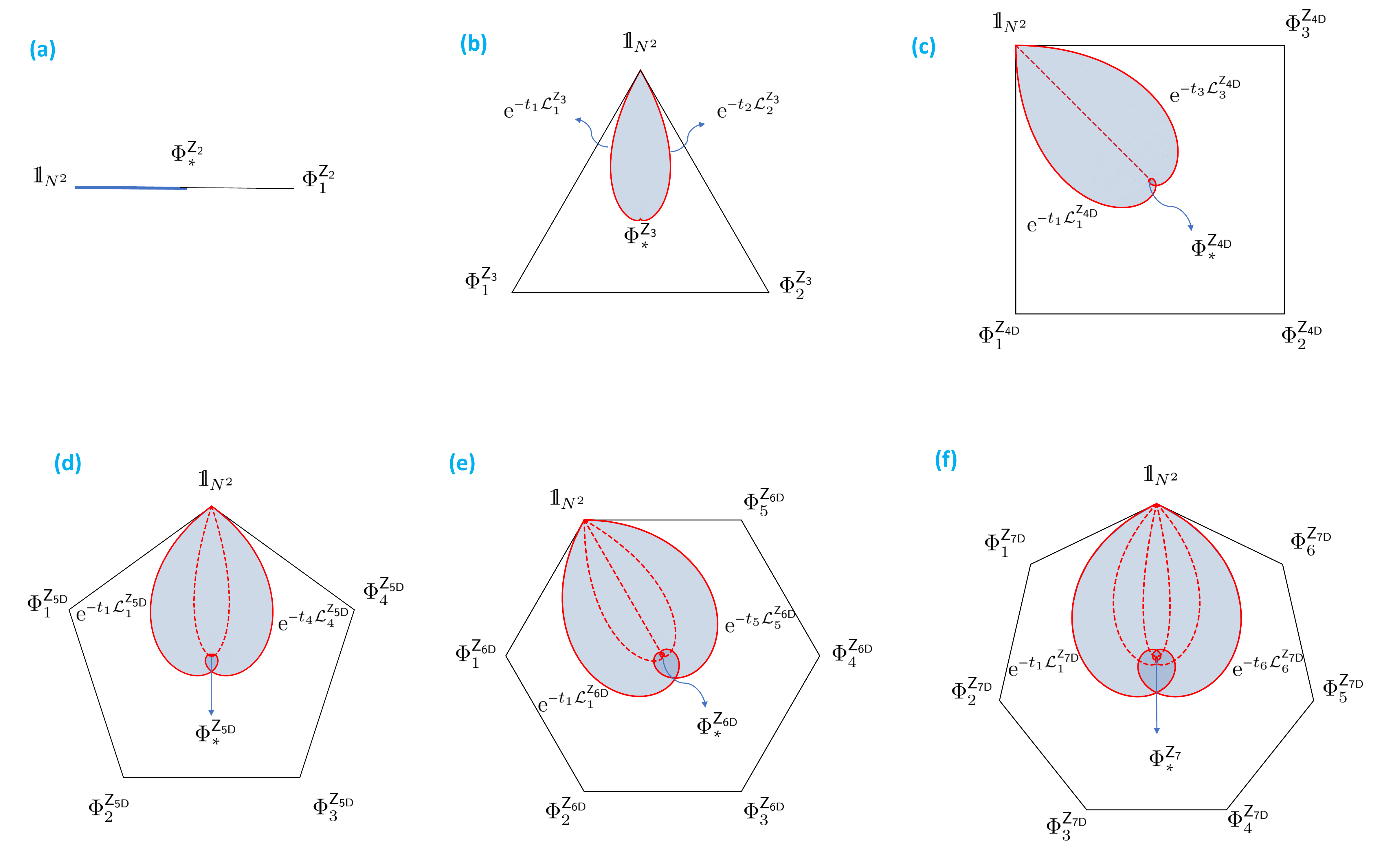}
\caption{The set $\cal A$ of accessible maps, indicated by blue, in
(a) 1-simplex of a cyclic group of oreder $2$,
(b) 2-Simplex of a cyclic group of order $3$, 
and the set of channels in the regular polygons of linearly dependent channels 
forming a cyclic group of order (c) $4$, (d) $5$, (e) $6$, and (f) $7$.
The symbol $\1_{N^2}$ denotes the identity map and the subscript $N^2$ is added
to emphasize that it can be considered in any dimension $N$. The map $\Phi_\ast$
indicates the center of each polygon. All sets $\cal A$ are star-shaped.
}
\label{two-dim}
\end{figure}

Concerning the spectrum, the fact that non-trivial elements are of order $3$
implies that eigenvalues of these maps are $E_\Phi=\{1,\e^{\pm i 	
\frac{2\pi}{3}}\}$. Perron-Frobenius theorem in the space of quantum channels
suggests all of these elements are present in the spectrum of both $\Phi_1^{\sf
Z_3}$ and $\Phi_2^{\sf Z_3}$. Due to the commutativity of these maps, the
spectra of channels of the form $\Phi_{\vec{p}}=\sum p_i\Phi_i$ in the complex
plane are embedded in the regular triangle whose vertices are given by the
elements of the set $E_\Phi$ above, see Fig.~\ref{spectra}(a). Observe that  the
spectra of accessible maps in this the triangle form the same shape bounded by
logarithmic spirals  as the set $\cal A$ of accessible maps in the simplex of
the group in the space of quantum channels -- see Fig.~\ref{two-dim}(b).\\
\end{example}

\begin{example}[\bf The groups of order $g=4$]\normalfont
A group of order $4$ is isomorphic either to a cyclic group $G_\Phi=\sf
Z_4=\{\Phi_0^{\sf Z_4}=\1,\Phi_1^{\sf Z_4}, \Phi_2^{\sf Z_4},\Phi_3^{\sf Z_4}\}$
or to a non-cyclic one, $G_\Psi=\{\Psi_0=\1,\Psi_1,\Psi_2,\Psi_3\}$ with
$\Psi_i^2=\1$ and $\Psi_i\Psi_j=\Psi_k$ for different $i,j,k\in\{1,2,3\}$. Both
of these groups are still abelian so one can find a set of common basis in which
all the elements are diagonal. Let us first consider the cyclic group, $G_\Phi$.

\subsubsection{The cyclic group of order $g=4$}\label{cyclic}

The polytopes of this group can be two-dimensional or three-dimensional related
to the linear dependency of the elements. The spectra of the group can determine
this up to a phase of Kraus operators.

\begin{lemma}\label{LinDep}
	Consider the cyclic group  $G_\Phi=\sf Z_4$ of order $4$ of quantum channels 
	acting on $N$-dimensional systems. Let 
	$\Phi_1^{\sf Z_4}=U_1^{\sf Z_4}\otimes \overline{U_1^{\sf Z_4}}$ 
	denote its generator, i.e. $\Phi_i^{\sf Z_4}=(\Phi_1^{\sf Z_4})^i$ for
	$i\in\{0,\cdots,3\}$. So $U_1^{\sf Z_4}$ is the generator of a cyclic group up
	to a phase of order $4$, $\tilde{G}_U^{\sf Z_4}$, whose spectrum up to a general
	phase is from the set $E_U=\{\pm1,\pm i\}$. Moreover, both real and imaginary
	elements should be present in the spectrum of $U_1^{\sf Z_4}$ to have four
	distinguished elements in the group $G_\Phi$. Then the elements of $G_\Phi=\sf
	Z_4$ are linearly dependent if and only if all the real elements appearing in
	the spectrum of $U_1^{\sf Z_4}$ have the same sign and all the imaginary
	eigenvalues have also the same sign (can be different from the sign of real
	ones).
\end{lemma}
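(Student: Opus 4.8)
The plan is to reduce the linear (in)dependence of the four superoperators $\{\1,\Phi,\Phi^2,\Phi^3\}$, where I abbreviate $\Phi:=\Phi_1^{\mathsf{Z}_4}$ and $U:=U_1^{\mathsf{Z}_4}$, to a counting of the distinct eigenvalues of $\Phi$, and then translate the relevant eigenvalue condition into the sign pattern of the spectrum of $U$. Since $U$ is unitary, $\Phi=U\otimes\overline U$ is unitary and hence diagonalizable, so its minimal polynomial equals $\prod_{\nu}(x-\nu)$, the product running over the \emph{distinct} eigenvalues $\nu$ of $\Phi$. The powers $\1,\Phi,\Phi^2,\Phi^3$ are linearly dependent exactly when some nonzero polynomial of degree at most $3$ annihilates $\Phi$, i.e. exactly when the minimal polynomial has degree at most $3$, i.e. exactly when $\Phi$ has at most three distinct eigenvalues. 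Because $\Phi^4=\1$, those eigenvalues are fourth roots of unity, so they form a subset of $\{1,-1,i,-i\}$ and the whole question becomes: how many of these four values actually occur in the spectrum of $\Phi$?

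Next I would read off $\mathrm{spec}(\Phi)$ from $\mathrm{spec}(U)$. The overall phase of $U$ cancels in $U\otimes\overline U$, so I may take the eigenvalues of $U$ to lie in $\{1,-1,i,-i\}$ as asserted, and the eigenvalues of $\Phi$ are then the products $\lambda_j\overline{\lambda_k}$. Three observations settle which roots of unity appear. (i) Taking $j=k$ gives $|\lambda_j|^2=1$, so $1$ is always an eigenvalue. (ii) Under the standing hypothesis that both a real eigenvalue $r\in\{1,-1\}$ and an imaginary eigenvalue $s\in\{i,-i\}$ occur in $U$ (the condition guaranteeing that $\Phi$ has order $4$), the products $r\overline s$ and its complex conjugate $\overline{r\overline s}=s\overline r$ are both eigenvalues of $\Phi$ and equal $+i$ and $-i$ in some order; hence $i$ and $-i$ are always present. (iii) Since all moduli equal $1$, the relation $\lambda_j\overline{\lambda_k}=-1$ forces $\lambda_j=-\lambda_k$, which for values in $\{1,-1,i,-i\}$ happens only for the pairs $\{1,-1\}$ and $\{i,-i\}$; therefore $-1\in\mathrm{spec}(\Phi)$ iff $U$ has both $+1$ and $-1$ among its eigenvalues, or both $+i$ and $-i$.

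Combining these, $\Phi$ has exactly the three distinct eigenvalues $\{1,i,-i\}$ precisely when $-1$ is \emph{absent}, that is, when neither $\{1,-1\}$ nor $\{i,-i\}$ occurs in full in $\mathrm{spec}(U)$; equivalently, all real eigenvalues of $U$ share one sign and all imaginary eigenvalues share one sign, the two signs being chosen independently. In that case the minimal polynomial has degree $3$ and $\{\1,\Phi,\Phi^2,\Phi^3\}$ is dependent; in the opposite case $-1$ is present, $\Phi$ has all four distinct eigenvalues, and the set is independent. This is exactly the claimed equivalence.

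The genuinely substantive steps are the reduction ``linear dependence $\Leftrightarrow$ at most three distinct eigenvalues'' via diagonalizability and the minimal polynomial, and the bookkeeping of step (iii) pinning the presence of the eigenvalue $-1$ to the sign condition. The part I would treat most carefully is verifying that $1,i,-i$ are \emph{always} present under the hypothesis, since that is what makes the count sharply $3$ versus $4$ rather than something smaller; if one wants the cleanest confirmation, the multiplicities can be tabulated explicitly as $a^2+b^2+c^2+d^2$ for $1$, $2(ab+cd)$ for $-1$, and $(a+b)(c+d)$ for each of $\pm i$, where $a,b,c,d$ count the eigenvalues $1,-1,i,-i$ of $U$, and these sum to $N^2$ as a consistency check.
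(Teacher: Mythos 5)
Your proof is correct, and it rests on the same spectral foundation as the paper's: both arguments read off the spectrum of $\Phi_1^{\sf Z_4}=U\otimes\overline{U}$ from that of $U$ via $\lambda_{jk}=\lambda_j\overline{\lambda}_k$, and both identify the sign condition with the absence of $-1$ from that spectrum. Where you differ is in how linear (in)dependence is certified. The paper treats the two directions separately and concretely: when $-1$ is absent it exhibits the explicit relation $\Phi_0^{\sf Z_4}+\Phi_2^{\sf Z_4}=\Phi_1^{\sf Z_4}+\Phi_3^{\sf Z_4}$, and when $-1$ is present it merely asserts (``one can check'') that $\sum_i c_i\Phi_i^{\sf Z_4}=0$ forces all $c_i=0$ --- a step which, spelled out, is the invertibility of the Vandermonde matrix on the four distinct nodes $\{1,-1,i,-i\}$. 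You instead invoke a single uniform criterion: for a diagonalizable (here unitary) operator, $\{\1,\Phi,\Phi^2,\Phi^3\}$ is dependent iff the minimal polynomial has degree at most three, i.e.\ iff $\Phi$ has at most three distinct eigenvalues. This settles both implications at once, supplies precisely the justification the paper leaves to the reader, and generalizes verbatim to a cyclic group of any order $g$: the $g$ powers of the generator are dependent iff it has fewer than $g$ distinct eigenvalues. What the paper's route buys in exchange is the explicit relation itself, which carries geometric content used immediately afterwards: it shows that in the dependent case the four channels span a planar square whose diagonals are $[\1,\Phi_2^{\sf Z_4}]$ and $[\Phi_1^{\sf Z_4},\Phi_3^{\sf Z_4}]$, as in Fig.~\ref{two-dim}(c); your argument yields dependence but not this particular relation, though it follows from your data since $(\nu-1)(\nu^2+1)=0$ for every eigenvalue $\nu\in\{1,i,-i\}$. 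Your closing multiplicity table, with $a,b,c,d$ playing the role of the paper's $r_+,r_-,i_+,i_-$, is a sound consistency check and is in fact what underlies the trace and distance formulas of Eq.~\eqref{dist-4}.
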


\begin{proof}
	For any unitary quantum channel $\Phi=U\otimes\overline{U}$, eigenvalues of  the
	superoperator $\Phi_U$ are determined  by  the eigenvalues of $U$ through
	$\lambda_{ij}(\Phi)=\lambda_i(U)\overline{\lambda}_j(U)$. Moreover, conditions
	on the spectrum of $U_1^{\sf Z_4}$ imply that the eigenvalues of 
	$\Phi_1^{\sf Z_4}$ read $E_{\Phi_1^{\sf Z_4}}=\{1,\pm i\}$. 
	Hence in the spectra of $\Phi_2^{\sf Z_4}$ and $\Phi_3^{\sf Z_4}$ corresponding
	to each $\pm i\in E_{\Phi_1^{\sf Z_4}}$ the numbers $-1$ and $\mp i$ appear,
	respectively. Accordingly, we get 
	$\Phi_0^{\sf Z_4}+\Phi_2^{\sf Z_4}=\Phi_1^{\sf Z_4}+\Phi_3^{\sf Z_4}$ 
	meaning that the group members are linearly dependent and the polytope of the
	group lies on a plane, see Fig. \ref{two-dim}(c). On the other hand, once the
	condition on the spectrum of $U_1^{\sf Z_4}$ is violated, then $\pm 1$ and $\pm
	i$ appear in the spectrum of $\Phi_1^{\sf Z_4}$. In this case, one can check
	that the equation $\sum c_i \Phi_i=0$ implies $c_i=0$ for any $i$, suggesting
	the the group elements are linearly-independent, and the polytope is then a
	tetrahedron.
\end{proof}

\begin{figure}
\centering
\includegraphics[scale=0.35]{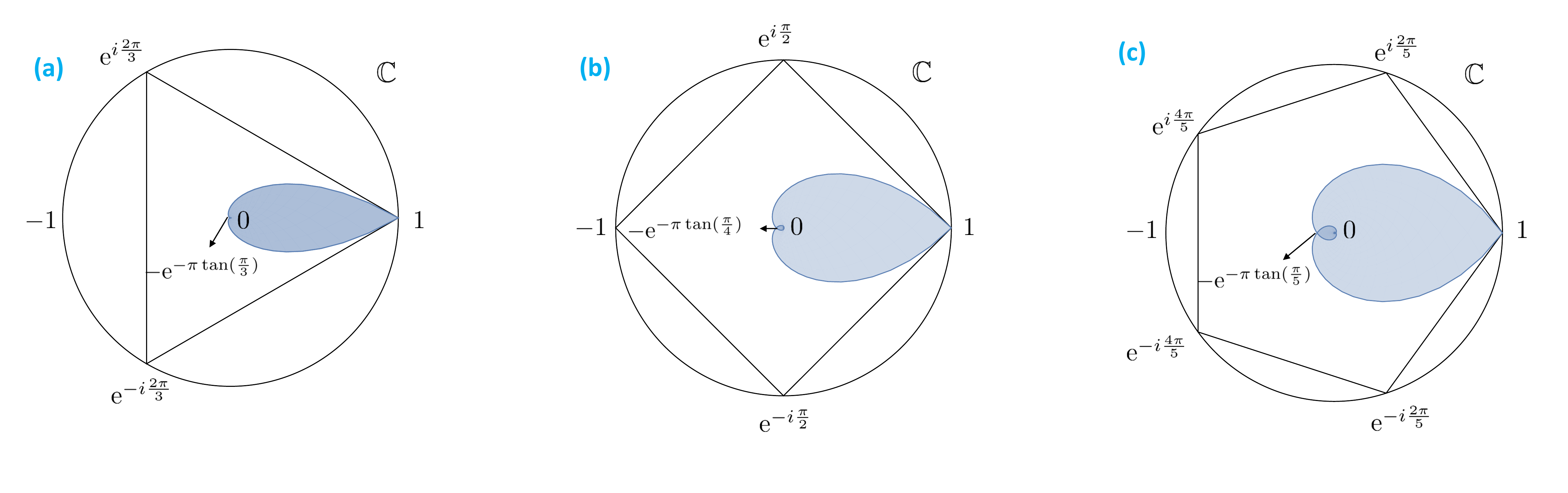}
\caption{The regular polygons in the complex plane represent 
support of spectra of the  quantum channels forming a cyclic group of order (a)
$3$, (b) $4$, and (c) $5$. The blue region in each polygon determines the
support of spectra of the accessible maps in the corresponding set. Note that
the shape and boundaries of the blue regions above are exactly the same as the
quantum channels in Fig.~\ref{two-dim}(b)-(d).
}
\label{spectra}
\end{figure}
 
We will change the superscript $\sf Z_4$ as $\sf Z_{4D}$  and $\sf Z_{4I}$ for
the linearly dependent and linearly independent case, respectively, when we need
to emphasise this property. An immediate result of the above lemma is the
following corollary.

\begin{corollary}\label{g4qubit}
	For the cyclic group of order $4$ formed by qubit channels, acting on systems 
	of dimension $N=2$, elements are linearly dependent. 
\end{corollary}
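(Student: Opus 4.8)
The plan is to obtain this as a direct specialization of Lemma~\ref{LinDep}, checking that its sufficient condition for linear dependence is automatically met once $N=2$. First I would recall the setup of that lemma: the generator $\Phi_1^{\sf Z_4}=U_1^{\sf Z_4}\otimes\overline{U_1^{\sf Z_4}}$ is built from a unitary $U_1^{\sf Z_4}$ whose spectrum, up to an overall phase, lies in $E_U=\{\pm1,\pm i\}$, and that four distinguished group elements force both a real eigenvalue ($\pm1$) and an imaginary one ($\pm i$) to appear.

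The key step is a dimension count. For $N=2$ the matrix $U_1^{\sf Z_4}$ carries exactly two eigenvalues, so the requirement that both a real and an imaginary eigenvalue be present leaves only one option: precisely one eigenvalue of each type. I would then observe that the linear-dependence criterion of Lemma~\ref{LinDep}---that all real eigenvalues share a common sign and all imaginary eigenvalues share a common sign---holds trivially in this situation, since a single real eigenvalue and a single imaginary eigenvalue each have a well-defined sign on their own, whatever representatives from $\{\pm1\}$ and $\{\pm i\}$ happen to occur.

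Invoking Lemma~\ref{LinDep} then delivers the linear dependence, and in fact the concrete relation $\Phi_0^{\sf Z_4}+\Phi_2^{\sf Z_4}=\Phi_1^{\sf Z_4}+\Phi_3^{\sf Z_4}$, so that the polytope of the qubit group degenerates to a planar quadrilateral as in Fig.~\ref{two-dim}(c). I do not expect a genuine obstacle here, as the corollary is just the $N=2$ instance of the lemma; the one point that deserves explicit verification is that the sign-uniformity hypothesis is vacuously satisfied when each eigenvalue type occurs exactly once, rather than imposing any further restriction that a qubit generator $U_1^{\sf Z_4}$ might violate.
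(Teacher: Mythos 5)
Your proposal is correct and matches the paper's route: the paper presents this corollary as an immediate consequence of Lemma~\ref{LinDep}, and your argument simply makes explicit the dimension count (for $N=2$ the generator has exactly one real and one imaginary eigenvalue, so the sign-uniformity condition of the lemma holds vacuously). This is precisely the intended specialization, with no gap.
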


However, in higher dimensions both two and three dimensional polytopes are
possible. In order to study the regularity of the poytope, let $r_+$ and $r_-$
denote the number of eigenvalues $+1$ and $-1$, respectively,  in the spectrum
of $U_1^{\sf Z_4}$, defined in Lemma \ref{LinDep}. In a similar way let $i_+$
and $i_-$ denote the number of eigenvalues $+i$ and $-i$, so that
$r_++r_-+i_++i_-=N$. These numbers allow us to express the Hilbert-Schmidt
distance \eqref{HSdistance} between the channels,
\begin{equation}\label{dist-4}
	\begin{split}
		D(\Phi_0^{\sf Z_4},\Phi_1^{\sf Z_4})&=
			D(\Phi_0^{\sf Z_4},\Phi_3^{\sf Z_4})=
			D(\Phi_1^{\sf Z_4},\Phi_2^{\sf Z_4})=
			D(\Phi_2^{\sf Z_4},\Phi_3^{\sf Z_4})\\
		&=2N^2-\Tr\Phi_1^{\sf Z_4}-\Tr\Phi_3^{\sf Z_4}
			=8r_+r_-+8i_+i_-+4(r_++r_-)(i_++i_-),\\
		D(\Phi_0^{\sf Z_4},\Phi_2^{\sf Z_4})&=
			D(\Phi_1^{\sf Z_4},\Phi_3^{\sf Z_4})=
			2N^2-2\Tr\Phi_2^{\sf Z_4}=8(r_++r_-)(i_++i_-) .
	\end{split}
\end{equation}

The above relations imply if the polytope  is of dimension $2$, it forms a
square like the one in Fig.~\ref{two-dim}(c), while the three-dimensional case
can be a (non-)regular tetrahedron -- see Fig. \ref{4-cyc}. In particular, for a
cyclic group of order $4$ of linearly-independent qutrit channels one out of
four numbers $r_{\pm}$ and $i_{\pm}$ is zero due to  Lemma \ref{LinDep}, while
the remaining three numbers are equal to $1$. This implies that both distances
in Eq.~\eqref{dist-4} are equal. Hence, the polytope is a regular tetrahedron
for a cyclic group of order, $4$ formed by linearly independent quantum channels
acting on $N=3$ dimensional systems. In higher dimensions, however, both regular
and non-regular tetrahedrons, as well as the square, are possible, see
Fig.~\ref{4-cyc}. To emphasize the regularity of the polytope, we will denote
cyclic groups of four linearly-independent channels with a regular and
non-regular tetrahedrons by
$G_\Phi^{\sf Z_{4I}}=\{\Phi_0^{\sf Z_{4I}}=\1,\Phi_1^{\sf Z_{4I}},
\Phi_2^{\sf Z_{4I}},\Phi_3^{\sf Z_{4I}}\}$ and $G_\Gamma^{\sf Z_{4I}}=
\{\Gamma_0^{\sf Z_{4I}}=\1,\Gamma_1^{\sf Z_{4I}},
\Gamma_2^{\sf Z_{4I}},\Gamma_3^{\sf Z_{4I}}\}$, respectively. \\

The shape of the polytope formed by the elements of the group $G$ in the space
of quantum channels and the dimension of the channels do not play any role by
describing the boundary of the set $\cal A$ of  accessible maps. Hence the group
$G_{\Phi}^{\sf Z_4}$ denotes one of three different types of the group:
$G_\Phi^{\sf Z_{4D}}$, $G_\Phi^{\sf Z_{4I}}$, and $G_\Gamma^{\sf Z_{4I}}$. Due
to Corollary \ref{abelian accessibility}, the order of the cyclic subgroups (the
order of elements) plays a pivotal role. Assuming $\Phi_1^{\sf Z_4}$ as the
generator of $G_{\Phi}^{\sf Z_4}$, we get $h_{1}=h_{3}=4$ while $h_{2}=2$. This
implies that
\begin{eqnarray}
&\e^{t_1\c L_{\Phi_1}^{\sf Z_4}}=&\frac{\e^{-t_1}}{2}\left[
\left(\cosh t_1+\cos t_1\right)\1+
\left(\sinh t_1+\sin t_1\right)\Phi_1^{\sf Z_4}+
\left(\cosh t_1-\cos t_1\right)\Phi_2^{\sf Z_4}+
\left(\sinh t_1-\sin t_1\right)\Phi_3^{\sf Z_4}\right],
\end{eqnarray}
while $\e^{t_2\c L_{\Phi_2}^{\sf Z_4}}$ is the same as Eq. \eqref{c2} with
substitution of $t_2$ for $t$, and noting that $\Phi_{2}^{\sf Z_4}=\Phi_1^{\sf
	Z_2}$. The trajectory $\e^{t_3\c L_{\Phi_3}^{\sf Z_4}}$ is the same as
$\e^{t_1\c L_{\Phi_1}^{\sf Z_4}}$ by replacing $t_1$ by $t_3$ and $\Phi_1^{\sf
	Z_4}$ by $\Phi_3^{\sf Z_4}$. Hence for an accessible map of a cyclic group of
order $4$, we get $\Omega_t=\sum w_i(t)\Phi_i^{\sf Z_4}$ with $t=t_1+t_2+t_3$
independently  of the shape of its polytope and dimension of the channels. In
this way we arrive at equations
\begin{equation}\label{w_c4i}
	\begin{split}
	w_0(t)&=\frac14\left(1+\e^{-(t_1+t_3)}+2\e^{-2t_2}\cos(t_1-t_3)\right), \\
	w_1(t)&=\frac14\left(1-\e^{-(t_1+t_3)}+2\e^{-(t_1+2t_2+t_3)}
	\sin(t_1-t_3)\right),\\
	w_2(t)&=\frac14\left(1+\e^{-(t_1+t_3)}-2\e^{-2t_2} \cos(t_1-t_3)\right), \\
	w_3(t)&=\frac14\left(1-\e^{-(t_1+t_3)}-
	2\e^{-(t_1+2t_2+t_3)}\sin(t_1-t_3)\right),
	\end{split}
\end{equation}
which allow us to represent any element of the set $\cal A$ of accessible maps
as a convex combination of the group members. In the case of the group
$G_{\Phi}^{\sf Z_4D}$ the set $\cal A$ of accessible maps forms a non-convex
subset in the square -- see Fig.~\ref{two-dim}(c).

The spectra of superoperators obtained by a convex combination of the elements
of this group are also embedded in a square in the complex plane whose vertices
are $\{\pm1,\pm i\}$. The spectra corresponding to accessible maps are
restricted to a shape equivalent to the shape of accessible maps in the set of
quantum channels, see Fig. \ref{spectra}(b). As a simple example for such a
group, one may think of the set of four qubit channels constructed by the group
up to a phase $\tilde{G}_U^{\sf Z_{4D}}=\{\1_2,\;
\diag[\e^{\frac{i\pi}{4}},\e^{\frac{-i\pi}{4}}],\; \diag[i,-i],\;
\diag[\e^{\frac{-i\pi}{4}},\e^{\frac{i\pi}{4}}]\}$. 
Having a qubit realization, this group can also find representations in higher
dimensions. \\

Fig.~\ref{4-cyc} shows the subset of accessible maps in a regular and a
non-regular tetrahedron. As mentioned in Corollary~\ref{g4qubit}, it is
impossible to find such polytopes satisfying a cyclic group structure in the
space of qubit channels. Let us first consider the regular tetrahedron, Fig.
\ref{4-cyc} (a) and (b). The simplest example of a cyclic group of order four
with linearly-independent elements  is a set of qutrit channels based on the
group up to a phase,
$\tilde{G}_U^{\sf Z_{4I}}=\{\1_3,\; \diag[1,-1,i],\; \diag[1,1,-1],\; \diag[1,-1,-i]\}$.
Due to Lemma \ref{LinDep} existence of both $1$ and $-1$ in the spectrum of
these unitary operators implies that they are linearly independent. One can find
other examples of qutrit channels as well as channels in higher dimensions for
this group. However, the shape of accessible maps inside the tetrahedron is
independent of a particular example or even dimension of the channels.\\

We shall now analyze  the case of a non-regular tetrahedron, presented in Fig.
\ref{4-cyc} (c) and (d). This special non-regular tetrahedron is the polytope of
a cyclic group of four linearly-independent members for which the distance of
the channels are given by 
$2N^2-\Tr\Gamma_1^{\sf Z_{4I}}-\Tr\Gamma_3^{\sf Z_{4I}}= 
\frac34(2N^2-2\Tr\Gamma_2^{\sf Z_{4I}})$, 
see Eq.~\eqref{dist-4}. As it is not possible to represent such a group with
qubit and qutrit channels, we are going to analyze the set of channels acting on
$4$-dimensional systems. A
particular example is given by a group,
$G_\Gamma^{\sf Z_{4I}}=\{\Gamma_i^{\sf Z_{4I}}= V_i\otimes\overline{V}_i\}_{i=0}^3$,
formed according to the group up to a phase,
$\tilde{G}_V^{\sf Z_{4I}}=\{\1_4,\; \diag[1,1,i,-i],\; \diag[1,1,-1,-1], \;
\diag[1,1,-i,i]\}$.
Since both $i$ and $-i$ appear simultaneously in the spectrum of the generator
$V_1$, the elements of this group are linearly independent, although the
elements of the group up to a phase are not linearly independent.\\

Recall that a cyclic group of order $4$ has one non-trivial subgroup of order
$2$. In all corresponding polytopes of this group, there exist accessible maps
belonging to the line connecting the identity map to the channel corresponding
to the generator of this subgroup. In tetrahedrons shown in  Fig.~\ref{4-cyc}
this edge of the polytope contains accessible maps, while the diagonal of the
square in  Fig.~\ref{two-dim}(c) also contains accessible channels. Moreover,
let us emphasize that independently of the shape of the polytope representing a
cyclic group of order $4$,  the spectra of superoperators corresponding to
accessible channels form the same set in the complex plane as accessible maps in
the square of the quantum channel itself, see Fig. \ref{spectra} (b).

\begin{figure}[t]
\centering
\includegraphics[scale=0.4]{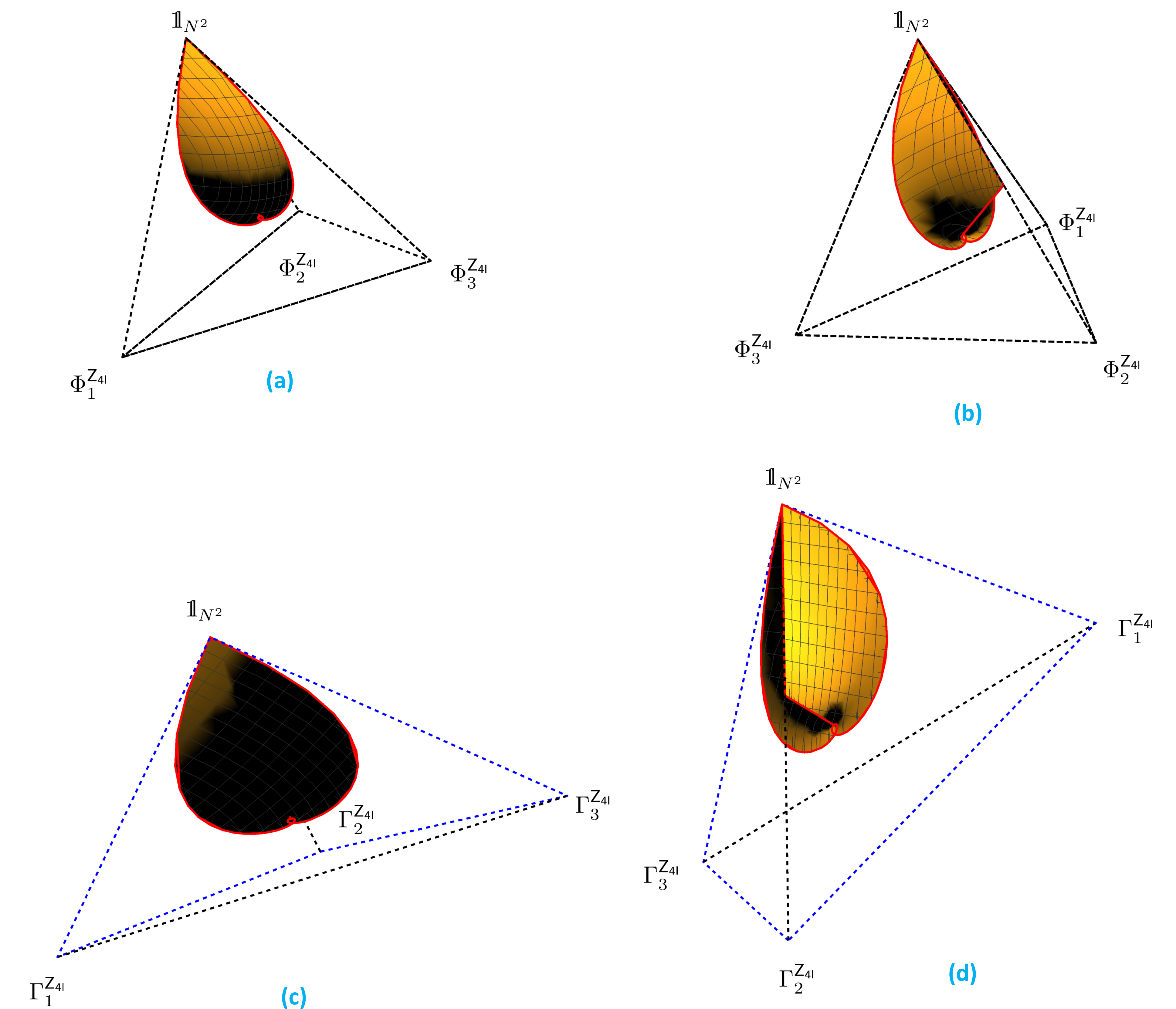}
\caption{
The subset of accessible quantum channels specified inside the 3-simplex: (a) and 
(b) related to a cyclic group
of order $4$ from two different perspectives, and
(c) and (d) inside the tetrahedron of a cyclic group of order $4$ whose elements 
are not equidistant.
The length of the dashed black edges of the tetrahedrons is $1$, and for the blue 
dashed ones, the length is equal to $\sqrt3/2$.
Explicit examples of quantum channels forming such groups are provided 
in Section \ref{cyclic}.
The volume of the subset of accessible maps
in both regular and non-regular shapes is $3(1-\e^{-4\pi})/32$ of its volume of the 
corresponding tetrahedron.
}
\label{4-cyc}
\end{figure}

\subsubsection{The non-cyclic group of order $4$}\label{ncycle}
As mentioned above the non-cyclic group
$G_\Psi=\{\Psi_0=\1,\Psi_1,\Psi_2,\Psi_3\}$ of order four is abelian. Due to the
fundamental theorem of abelian groups, it can be obtained by the product of two
cyclic groups of order $2$ whose elements are compatible with each other. Since
all non-trivial group members are of order $2$, their spectrum consists of
$\pm1$. There exist distinguished elements of the group if there are subspaces
in which the eigenvalues of all channels are not of the same sign. As for these
non-trivial elements, relation $\Psi_i\Psi_j=\Psi_k$ holds; the negative
eigenvalue should appear in the spectrum of two maps simultaneously in the same
subspace, so the third channel possesses a positive eigenvalue. Taking into
account all possibilities, one can  prove that to have $\sum c_i\Psi_i=0$ the
only possibility is $c_i=0$ for any $i$. This means the elements of $G_\Psi$ are
linearly independent. Thus the polytope for this group, regardless of the
dimension of the channels, is a tetrahedron, see Fig.~\ref{4-ncyc}. Moreover,
one needs to study the distance of the channels to find whether this tetrahedron
is regular or not. In this case, the Hilbert-Schmidt distance \eqref{HSdistance}
for different maps are given by
\begin{equation}
\begin{split}
D(\Psi_0,\Psi_i)&=2(N^2-\Tr\Psi_i), \quad\quad {\rm for}\ \ i\in \{1,2,3\},
\\
D(\Psi_i,\Psi_j)&=2(N^2-\Tr\Psi_k), \quad\quad {\rm for\ }\ i\neq j\neq k\ \ 
\rm{and}\ \ i,j,k\in \{1,2,3\}.
\end{split}
\end{equation}
As mentioned in Example \ref{g=2}, the number of negative eigenvalues in a
unitary channel of order $h=2$ is $2m(N-m)$ for an $m\in\{1,\dots,[N/2]\}$. In
particular, for qubit and qutrit channels, the numbers of negative eigenvalues
can be $2$ and $4$, respectively. This means unitary qubit channels of order $2$
are traceless and for any unitary qutrit channels of this order, the trace of
all such channels is equal to $5-4=1$. Thus for qubit and qutrit channels, the
tetrahedron spanned by the elements of the group is regular  -- see
Fig.~\ref{4-ncyc} (a) and (b).\\

The group of Pauli channels is the well-known example of a non-cyclic group of
order $4$ whose accessibility has been studied in \cite{DZP18,PRZ19}. As an
example of qutrit channels forming such a group, consider three different
quantum channels gained by different permutations of the unitary operator,
$U=\diag[1,1,-1]$. In higher dimensions, the tetrahedron can be non-regular, and
to emphasize this fact, the channels will be denoted as
$G_{\Gamma}=\{\Gamma_0=\1,\Gamma_1,\Gamma_2,\Gamma_3\}$, but the group does not
contain the superscript $\sf Z_{4I}$. An example of a regular tetrahedron
spanned by channels in dimension $N=4$ we can take identity and three unitary
operations obtained by three different permutations of $U=\diag[1,1,-1,-1]$
corresponding to three different quantum channels. To get a non-regular
tetrahedron consider a group up to a phase, $\tilde{G}_U=\{U_0,U_1,U_2,U_3\}$
where $U_0=\1_4$, $U_1=\diag[1,-1,-1,1]$, $U_2=\diag[1,1,1,-1]$, and
$U_3=\diag[-1,1,1,1]$. This tetrahedron is similar to the non-regular
tetrahedron of Example 3, in which the length of two non-connected edges are
$2/\sqrt{3}$ times the length of the other edges.
 \\

Note that the property of  accessibility of a channel is independent of the
dimension and the form of the polytope, as it depends on the group structure. In
the case of a non-cyclic group of order four all non-trivial elements are of
order $2$, so individual trajectories, with only a single non-zero interaction
time, are of the form  \eqref{c2}. For an accessible map, 
$\Psi_t=\sum w_i(t)\Psi_i$, with $t=t_1+t_2+t_3$ we get
\begin{equation}
	\begin{split}
		w_0(t)=&\frac14\left(1+\e^{-2(t_2+t_3)}+\e^{-2(t_1+t_3)}+
		\e^{-2(t_1+t_2)}\right),\\
		w_1(t)=&\frac14\left(1+\e^{-2(t_2+t_3)}-\e^{-2(t_1+t_3)}-
		\e^{-2(t_1+t_2)}\right),\\
		w_2(t)=&\frac14\left(1-\e^{-2(t_2+t_3)}+\e^{-2(t_1+t_3)}-
		\e^{-2(t_1+t_2)}\right),\\
		w_3(t)=&\frac14\left(1-\e^{-2(t_2+t_3)}-\e^{-2(t_1+t_3)}+
		\e^{-2(t_1+t_2)}\right).
	\end{split}
\end{equation}
These results have already been reported for the Pauli channels
\cite{DZP18,PRZ19,SAPZ20}. However, here we see that they can be gained for any
set of quantum channels admitting the same group structure. The subset of
accessible maps in the regular and non-regular tetrahedrons are presented in
Fig.~\ref{4-ncyc}. Note that a non-cyclic group of order $4$ has three
non-trivial subgroups, each of which is of order $2$. Consequently, we see on
the edges connecting the identity map to other three elements, accessible maps
can be found.

\begin{figure}
\centering
\includegraphics[scale=0.5]{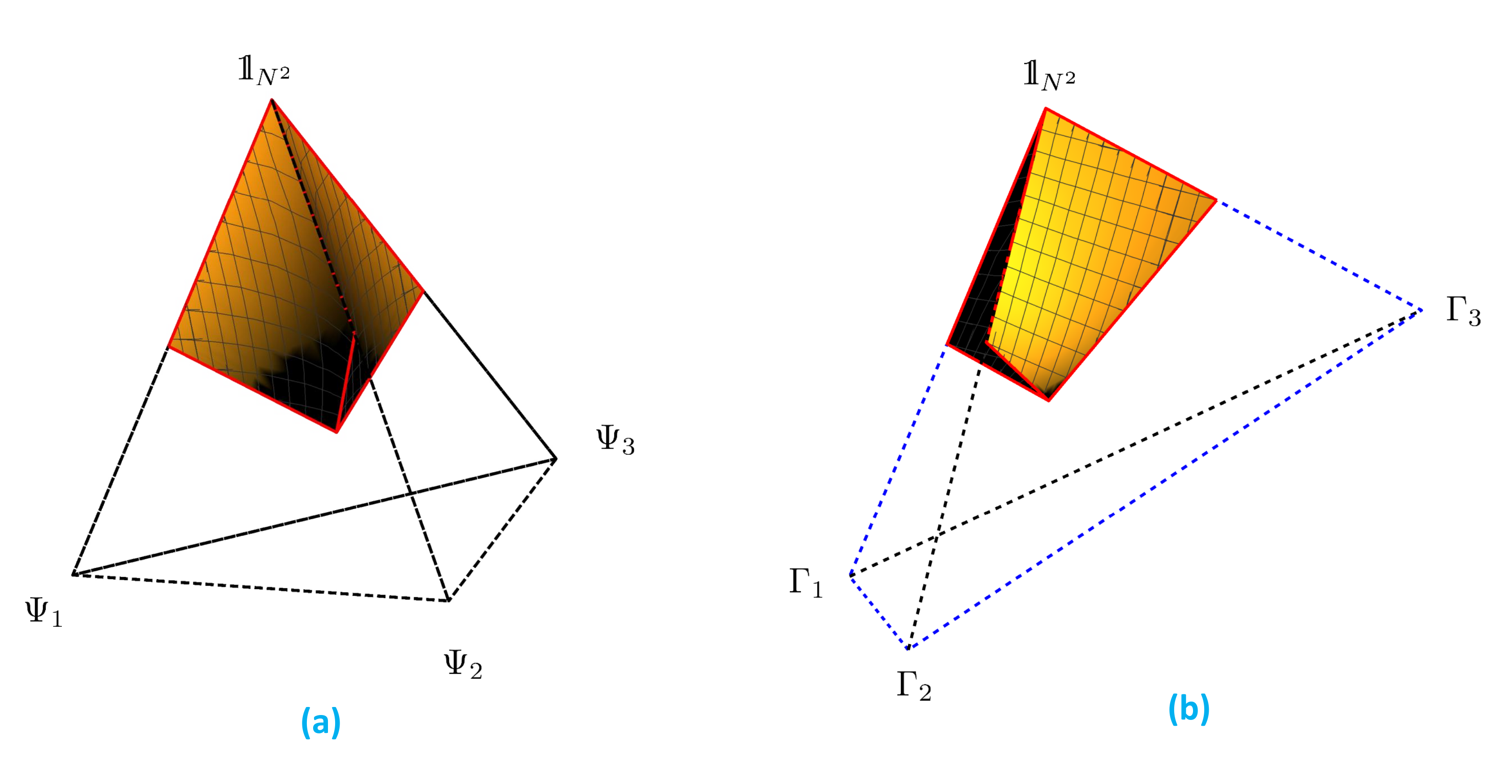}
\caption{
The set  $\cal A$ of accessible maps specified in the space of quantum channels
forming a non-cyclic group of order $4$ for (a) a 3-simplex, regular
tetrahedron which includes the case of Pauli channels
\cite{DZP18,PRZ19,SAPZ20}), and (b) a non-regular tetrahedron. Black dashed
edges are of length $1$, while blue dashed edges have the length $\sqrt3/2$.
Explicit examples of these groups are discussed in Section \ref{ncycle}.
Accessible maps in each of these polytopes occupy $3/32$ of the volume of their
corresponding tetrahedrons.
}
\label{4-ncyc}
\end{figure}
\end{example}

\section{Geometric properties of the set of accessible maps}\label{geometry}

In this section, we will study the geometric properties of the set ${\cal A}$ of
accessible maps, which are obtained by a mixture of unitary maps forming a
discrete group. Proposition \ref{limit} implies that any Lindblad operator
determined by a quantum channel, $\c L_\Phi=\Phi-\1$, generates a trajectory in
the set of quantum channels which converges to the projector onto the invariant
subspace of the channel $\Phi$. For any quantum channel $\Phi$, the line
connecting the identity with  $\Phi$ is  a tangent line to the trajectory
generated by $\c L_\Phi$ at the beginning of the evolution. To see this, note
that for the dynamical semigroup $\exp(t\c L_\Phi)$, one gets
\begin{equation} 
\dfrac{{\rm d}}{{\rm d} t}\ \e^{(t\c L_\Phi)}\Big|_{t=0^+}=
\c L_\Phi \ \e^{(t\c L_\Phi)}\Big|_{t=0^+}=\Phi-\1,
\end{equation}
which proves the claim that $(1-p)\1+p\Phi$ defines the tangent 
line for the evolution at $t=0$.
Analyzing a finite group of quantum channels and the semigroup formed by convex
combinations of elements of this group, a distinguished trajectory joins the
identity with the centre $\Phi_*$ of the polytope, which follows the direction
determined by the tangent line at the beginning of evolution. Stated in other
words, we have the following result.

\begin{proposition}\label{central-line}
	Let $G_\Phi=\{\Phi_\mu\}$ define a finite group of $g$ quantum channels.  The
	line connecting identity to the centre of the polytope,
	$\Phi_*=\frac{1}{g}\sum_{\mu=0}^{g-1}\Phi_\mu$, always belongs to the set  $\cal
	A$ of accessible channels in this group.
\end{proposition}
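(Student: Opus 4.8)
The plan is to produce a single explicit accessible trajectory that coincides with the whole segment from $\1$ to $\Phi_*$, exploiting the fact that $\Phi_*$ is a projector. First I would feed the fully symmetric weight vector $q_\mu=1/(g-1)$ (for every $\mu\in\{1,\dots,g-1\}$) into the generator of Theorem \ref{group}. Using $\Phi_0=\1$ and $\sum_{\mu=0}^{g-1}\Phi_\mu=g\Phi_*$, so that $\sum_{\mu=1}^{g-1}\Phi_\mu=g\Phi_*-\1$, the generator collapses to a scalar multiple of $\Phi_*-\1$:
\begin{equation}
\c L=\frac{1}{g-1}\sum_{\mu=1}^{g-1}(\Phi_\mu-\1)=\frac{1}{g-1}(g\Phi_*-\1)-\1=\frac{g}{g-1}(\Phi_*-\1).
\end{equation}

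The key structural input is that $\Phi_*$ is idempotent. This is exactly the computation already used in the proof of Proposition \ref{infinty}: by the group rearrangement theorem, for each fixed $\mu$ one has $\sum_\nu\Phi_\mu\Phi_\nu=\sum_\lambda\Phi_\lambda=g\Phi_*$, hence $\Phi_*^2=\frac{1}{g^2}\sum_{\mu,\nu}\Phi_\mu\Phi_\nu=\Phi_*$. With idempotency in hand and $c:=g/(g-1)$, the exponential is forced: since $\Phi_*^n=\Phi_*$ for all $n\ge 1$,
\begin{equation}
\e^{t\c L}=\e^{-ct}\,\e^{ct\Phi_*}=\e^{-ct}\left[\1+(\e^{ct}-1)\Phi_*\right]=(1-p)\1+p\Phi_*,\qquad p:=1-\e^{-ct}.
\end{equation}
As $t$ sweeps $[0,\infty)$ the parameter $p$ increases monotonically from $0$ to $1$, so this one dynamical semigroup traces out the whole segment joining $\1$ (at $t=0$) to the centre $\Phi_*$ (as $t\to\infty$, consistent with the limit point guaranteed by Proposition \ref{infinty}). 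Every such map is of the admissible form $\e^{t\c L}$, so the segment lies in $\c A$.

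I do not expect a serious obstacle: once idempotency of $\Phi_*$ is established the remaining calculation is entirely mechanical. The only point deserving a word of care is the status of the endpoints, since $\1$ is attained at $t=0$ while $\Phi_*$ only appears as the $t\to\infty$ limit; strictly, $\c A$ contains the relatively open segment together with its closure point $\Phi_*$. This also dovetails with the tangent-line remark preceding the statement: there the line from $\1$ to $\Phi_*$ is merely the initial direction of a curved trajectory, whereas here the symmetric choice makes the trajectory \emph{coincide} with that straight line for all $t$, which is precisely what places the entire central line inside $\c A$.
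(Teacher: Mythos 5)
Your proposal is correct and takes essentially the same route as the paper: the paper's proof also uses the uniform generator $\c L_*=\frac{1}{g-1}\sum_{\mu=1}^{g-1}\Phi_\mu-\1=\frac{g}{g-1}(\Phi_*-\1)$ together with the projector property of $\Phi_*$ (from Proposition~\ref{infinty}) to collapse the exponential series to $\e^{-tg/(g-1)}\1+\bigl(1-\e^{-tg/(g-1)}\bigr)\Phi_*$, which sweeps out the segment $[\1,\Phi_*]$. Your added remark that $\Phi_*$ itself is only attained as the $t\to\infty$ limit is a fair point of precision that the paper glosses over, but it does not change the substance of the argument.
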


\begin{proof}
To show this we will start with the Lindbladian 
$\c L_*=\frac{1}{g-1}\sum_{\mu=1}^{g-1}\Phi_{\mu}-\1$.
Note that in this definition we ignored the trivial generator $\c L_{\Phi_0}=0$
to follow the convention adopted here. 
So $\c L_*$ is not equal to $\Phi_*-\1$, but $\c L_*=\frac{g}{g-1}(\Phi_*-\1)$.
However, they generate the same dynamics up to a rescaling of time.
\begin{equation}\label{Lstar}
\begin{split}
	\varphi(t)&:=\e^{t\c L_*}=\e^{\frac{-tg}{g-1}}
	\e^{\frac{tg}{g-1}\Phi_*}=
	\e^{\frac{-tg}{g-1}}\left(\1+\sum_{m=1}^\infty
	\frac{(\frac{tg}{g-1})^m}{m!}\Phi_*^m\right)
	\\&=
	\e^{\frac{-tg}{g-1}}\left(\1+\sum_{m=1}^\infty
	\frac{(\frac{tg}{g-1})^m}{m!}\Phi_*\right)=\e^{\frac{-tg}{g-1}}\1+
	(1-\e^{\frac{-tg}{g-1}})\Phi_*,
\end{split}
\end{equation}
where we called the resultant channel $\varphi(t)$ for future reference.
Furthermore, in the first equation of the second line, we used the fact that 
$\Phi_*$ is a projector; see Proposition \ref{infinty}. Eq. \eqref{Lstar} shows the 
line connecting the identity to the centre of polytope always belongs to the set of 
accessible maps and completes the proof.
\end{proof}


The above proposition leads us to the following important result.
\begin{theorem}\label{nonzerovolume}
	The set $\cal A$ of accessible maps occupies a positive measure in the polytope 
	formed by the convex hull of the corresponding group elements.
\end{theorem}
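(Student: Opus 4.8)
The plan is to establish that $\mathcal{A}$ has positive measure by exhibiting an explicit full-dimensional neighborhood within $\mathcal{A}$, and the natural candidate is a neighborhood of the central point $\Phi_*$ along the distinguished trajectory $\varphi(t)$ constructed in Proposition~\ref{central-line}. First I would observe that the map sending the vector of interaction times $(t_1,\dots,t_{g-1})$ to the weight vector $(w_0,\dots,w_{g-1})$ via Eq.~\eqref{w_t} is, by Theorem~\ref{group} and Lemma~\ref{regular}, a smooth (real-analytic) map from the nonnegative orthant into the probability simplex that parameterizes the polytope $\sc C(G)$. Since the polytope has dimension equal to the number of linearly independent group elements minus one, and since the weights sum to one identically, it suffices to show that this parameterization map has full-rank Jacobian at some interior point; the image then contains an open set, which has positive measure in the affine span of the polytope.

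The key computation is therefore the Jacobian of the weight map. The most convenient evaluation point is along the central line: I would pick a symmetric interaction time $t_\mu = \tau$ for all $\mu$, so that the generator is proportional to $\Phi_*-\1$ and the trajectory is the explicit curve $\varphi(\tau)$ of Eq.~\eqref{Lstar}. Using Lemma~\ref{regular}, the weights are $w_\mu(t) = \frac{1}{g}\Tr\bigl(R_\mu^\dagger \exp[t(\sum_\nu q_\nu R_\nu - \1)]\bigr)$, and differentiating with respect to the individual times $t_\nu = t q_\nu$ at the symmetric point gives, to first order,
\begin{equation}
\frac{\partial w_\mu}{\partial t_\nu}\Big|_{\text{sym}} = \frac{1}{g}\Tr\!\left(R_\mu^\dagger \, e^{(t\c L_*)}\, R_\nu\right) - w_\mu,
\end{equation}
and because the regular representation matrices $R_\nu$ are orthogonal and permute among themselves, the matrix of these partial derivatives inherits a circulant/group-convolution structure whose spectrum is governed by the characters of $G$. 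The rank of the Jacobian is then exactly the number of nontrivial irreducible characters that appear, which matches the dimension of the polytope computed from linear independence of the $\Phi_\mu$. This is the step where the group structure does the real work: orthogonality of characters guarantees that no direction in the affine span of the polytope is missed, so the differential of the parameterization is surjective onto the tangent space of $\sc C(G)$ at $\Phi_*$.

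The main obstacle I anticipate is handling the constraint that the interaction times must be \emph{nonnegative}, since $\Phi_*$ is the limit point as all $t_\mu\to\infty$ and is not interior to the parameter domain in the naive sense. The clean way around this is to evaluate the Jacobian at a finite symmetric time $\tau_0>0$, where $\varphi(\tau_0)$ lies strictly between $\1$ and $\Phi_*$ in the interior of the polytope (guaranteed by Proposition~\ref{central-line} together with the interior statement following Theorem~\ref{group}), and all $t_\mu$ are strictly positive. At such an interior point the nonnegativity constraints are inactive, so full Jacobian rank immediately yields, by the inverse/implicit function theorem, an open neighborhood of $\varphi(\tau_0)$ contained in the image of the weight map and hence in $\mathcal{A}$. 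A secondary point requiring care is that $\mathcal{A}$ is being measured relative to the affine hull of $\sc C(G)$ rather than the ambient space of superoperators, since the polytope is generally lower-dimensional; once the measure is interpreted on this affine span, the open neighborhood furnishes the claimed positive measure and completes the proof.
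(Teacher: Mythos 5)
Your proposal is correct and follows essentially the same route as the paper: the paper also obtains an open ball around a finite-time point $\varphi(T)$ on the central trajectory by a first-order perturbation of the generator $\c L_*$ (its expansion coefficients $\epsilon'_\mu$ are precisely your Jacobian $\partial w_\mu/\partial t_\nu$ organized as perturbation directions), and it likewise interprets the measure on the affine span of the polytope. Two small points to make explicit: your formula for the derivative of the matrix exponential is valid only because at the symmetric point $\exp[t\c L_*]$ commutes with every $R_\nu$ (the same commutativity of $\Phi_*$ the paper invokes), and your character-counting remark should be read with multiplicities, since the rank of the times-to-weights Jacobian equals $g-1$ for any finite group, abelian or not.
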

\begin{proof}
	To prove, we will show that for any non-ending point in the interval connecting 
	identity to the centre of the polytope, one can always find a ball that belongs 
	to the set of accessible maps.
	According to above proposition, any such non-ending point can be defined as 
	$\exp[(T+\delta T)\c L_*]$ for time $t=T$ such that there exists $\delta T$ 
	with $|\delta T|<<T$. 
	Now let $\forall\mu\in{1,\cdots,g-1}$ we have $|\epsilon_\mu|<<1$  and  
	$\sum_{\mu=1}^{g-1}\epsilon_\mu=0$.
	So we can define a family of valid $\vec{\epsilon}$-dependent Lindbladians as 
	$\c L_{\vec{\epsilon}}=\sum_{\mu=1}^{g-1} 
	(\frac{1}{g-1}+\epsilon_\mu)\Phi_\mu-\1$. Hence, for any $\vec{\epsilon}$ 
	satisfying above conditions $\exp(t\c L_{\vec{\epsilon}})$ belongs to the set 
	of accessible maps at each moment of time. Thus, for $t=T+\delta T$ we get
\begin{equation}\label{ball}
	\begin{split}
	\e^{(T+\delta T)\c L_{\vec{\epsilon}}}&=
	\e^{(T+\delta T)(\c L_*+\sum\epsilon_\mu\Phi_\mu)}=
	\e^{(T+\delta T)\c L_*}\e^{(T+\delta T)\sum\epsilon_\mu\Phi_\mu}
	\\&\approx
	\varphi({T+\delta T})\left(\1+(T+\delta T)
	\sum_{\mu=1}^{g-1}\epsilon_\mu\Phi_\mu\right)
	\\&\approx
	\varphi(T)+\e^{-\frac{Tg}{g-1}}\sum_{\mu=0}^{g-1}\epsilon'_\mu\Phi_\mu,
	\end{split}
\end{equation}
	where $\epsilon'_0=-\delta T$ and $\epsilon'_\mu=T\epsilon_\mu+\delta T/(g-1)$ 
	for $\mu\in\{1,\cdots,g-1\}$.
	In writing these equations, we considered only the first order of $\delta T$ 
	and $\vec{\epsilon}$ and also ignored their products. Moreover, one may notice 
	that $\Phi_*$ and consequently $\c L_*$ commute with any channel in the 
	polytope of the group as stated in Proposition \ref{infinty}. The left-hand 
	side of Eq. \eqref{ball} belongs to the set of accessible maps of the group 
	$G_\Phi$, so does a neighbourhood of any $\varphi(T)$ for a finite $T$ in any 
	direction. 
	This completes the proof.
\end{proof}

The above result was intuitive because the trajectory generated by Lindbladian 
corresponding to any quantum channel starts in the tangent plane of the line 
connecting the identity element to the assumed channel. 
So the set of accessible maps inside a polytope is defined by trajectories that go 
in all possible directions at the beginning. However, we will generalize the above 
approach concerning the point $\varphi(T)$ in the sequel to show for an abelian 
group at a finite time for any point on the trajectory generated by a Lindbladian 
corresponding to a channel from the interior of the polytope such a ball exists. 
Before proceeding with such an extension, let us conclude the discussion on the 
volume by explicit calculation of the volume for the examples mentioned in the 
previous section.
\begin{proposition} \label{volumes}
The ratio of the volume of the set $\cal A$  of accessible maps to the volume 
of  the corresponding polytope for the examples discussed in 
Section \ref{Examples} reads:
\begin{enumerate}[a)]
\item  For any choice of a cyclic group of order $g$ with exactly three 
linearly independent elements corresponding to a regular $g$-polygon
in the plane, see Fig.~\ref{two-dim}, the ratio is given by 
\begin{equation}\label{polygon volume}
	\frac{V_{acc}}{V_{G^{\sf Z_{gD}}}}=
	\frac{1-\e^{-2\pi\tan(\pi/g)}}{2g\sin^2(\pi/g)}
	= 1-\frac{\pi^2}{g}+\c O(\frac{1}{g^2}).
\end{equation}
\item For a cyclic group of order $g=4$ with all linearly independent
elements, for both regular and non-regular tetrahedrons, 
see Fig.~\ref{4-cyc},
the ratio is given by
\begin{equation}\label{corrected volume}
	\frac{V_{acc}}{V_{G^{\sf Z_{4I}}}}=\frac{3}{32}(1-\e^{-4\pi}) \approx
	  0.0937497 .
\end{equation}
\item For the non-cyclic group of order $g=4$
the ratio is given by
\begin{equation}\label{pauli volume}
	\frac{V_{acc}}{V_{G_\Psi}}=\frac{3}{32} = 0.09375,
\end{equation}
independently of the regularity of tetrahedron, see Fig.~\ref{4-ncyc}.
\end{enumerate}

\end{proposition}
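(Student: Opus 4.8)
The plan is to exploit that the ratio $V_{acc}/V_{poly}$ is invariant under affine isomorphisms of the weight simplex: by Theorem~\ref{group} both the accessible set $\c A$ and the polytope live in that simplex, and any two channel realisations of one abstract group — regular tetrahedron or not, in any dimension — are related by such an isomorphism, which already accounts for the representation-independence asserted in \eqref{corrected volume} and \eqref{pauli volume}. Concretely I would pass to \emph{character coordinates}, sending $\sum_\mu w_\mu\Phi_\mu\mapsto(\tilde\lambda_k)$ with $\tilde\lambda_k=\sum_\mu w_\mu\chi_k(\Phi_\mu)$ over the nontrivial irreducible characters $\chi_k$; this is a fixed linear isomorphism onto the polytope of characters, and by \eqref{prob}--\eqref{cycliclindblad} each coordinate evolves explicitly and independently as $\tilde\lambda_k(t)=\e^{\,t\sum_\mu q_\mu(\chi_k(\Phi_\mu)-1)}$, which is what renders the volume integrals elementary.

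For part (a) the faithful character and its conjugate give a single complex coordinate $\tilde\lambda_1$ with $\Phi_\mu\mapsto\omega^\mu$, $\omega=\e^{2\pi i/g}$, so the polytope is the regular $g$-gon inscribed in the unit circle, of area $\frac g2\sin(2\pi/g)$. An accessible map is $\tilde\lambda_1(t)=\e^{\,t(\mu-1)}$ where $\mu=\sum_{\mu\ge1}q_\mu\omega^\mu$ sweeps the chord joining $\omega$ and $\bar\omega$, on which the crucial fact is that $1-\mathrm{Re}\,\mu\equiv 2\sin^2(\pi/g)$ is constant. Invoking the star-shape property about the centre (the origin, Fig.~\ref{two-dim}), I would use the polar area formula $\frac12\int R(\phi)^2\,d\phi$; minimising the decay rate $(1-\mathrm{Re}\,\mu)/|\mathrm{Im}\,\mu|$ along the chord forces the extreme endpoints $\mu=\omega,\bar\omega$ and identifies the outer boundary with the logarithmic spiral $R(\phi)=\e^{-\min(\phi,\,2\pi-\phi)\tan(\pi/g)}$. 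Integrating gives $A_{acc}=(1-\e^{-2\pi\tan(\pi/g)})/(2\tan(\pi/g))$, and division by the $g$-gon area reproduces \eqref{polygon volume}.

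For the two order-four groups I would parametrise $\c A$ directly by the interaction times $(t_1,t_2,t_3)\in[0,\infty)^3$. In the Klein case every nontrivial character is $\pm1$, so the coordinates are $\lambda_i=\e^{-2(t_j+t_k)}$; the substitution $\lambda_i=\e^{-x_i}$ turns the accessibility conditions $\lambda_i\lambda_j\le\lambda_k$ into the triangle inequalities $x_i+x_j\ge x_k$, and $\int\e^{-\sum_i x_i}$ over that cone is found from the pairwise-disjointness of its three complementary half-spaces, giving $V_{acc}=\frac14$ against a tetrahedron volume $\frac83$, hence $3/32$ as in \eqref{pauli volume}. For the cyclic group of order four the primitive character evolves as $\tilde\lambda_1=\rho\,\e^{i\theta}$ with $\rho=\e^{-(t_1+2t_2+t_3)}$ and $\theta=t_1-t_3$, while the order-two character gives $u=\tilde\lambda_2=\e^{-2(t_1+t_3)}$; one checks that at fixed $u$ the reachable $(\rho,\theta)$ fill a sector of radius $\sqrt u$ and half-angle $s=\frac12\ln(1/u)$, and integrating $\rho\,d\rho\,d\theta\,du$ yields \eqref{corrected volume}.

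The step I expect to be the main obstacle is the angular winding in the cyclic cases, where $\theta$ is only defined modulo $2\pi$. Once $s=\frac12\ln(1/u)$ exceeds $\pi$, i.e. $u<\e^{-2\pi}$, the sector wraps the origin and the correct cross-section is the full disc $\pi u$ rather than the self-overlapping slice $u\,s$; splitting the $u$-integral at $\e^{-2\pi}$ and treating the two regimes is exactly what turns the naive value $\frac{1/8}{4/3}=3/32$ into the corrected $\frac{3}{32}(1-\e^{-4\pi})$. The same care is needed in part (a): I must confirm that the second and later turns of each spiral lie strictly inside its first turn, so that $R(\phi)$ really is the extreme-endpoint spiral and the region is radially filled — which, with the star-shape property, legitimises the polar formula.
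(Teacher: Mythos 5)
Your proposal is correct, and at its core it follows the same strategy as the paper's Appendix~\ref{AppA}: pass to coordinates in which all group elements are simultaneously diagonal (your character coordinates are exactly the rows of the paper's ``group table''), write the accessible trajectories as explicit exponentials in those coordinates, and integrate, with the factor $(1-\e^{-4\pi})$ and the spiral boundary both arising from the same angular-wrapping correction that the paper implements via the constraint $|\phi|\leq\min(\pi,z)$. The differences are in execution, and they are worth noting. First, the paper never invokes affine invariance abstractly; instead it keeps the eigenvalue multiplicities $(a,b,c,d)$ as free parameters, computes the induced Hilbert--Schmidt metric $h_{ij}=\mathbf{e}_i\cdot\mathbf{e}_j$ and the densities $\sqrt{|h|}$ in the full $(N^2-1)$-dimensional channel space, and observes a posteriori that the ratio is independent of the multiplicities. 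Your up-front reduction -- volume ratios are preserved by the affine isomorphism matching group elements of two realisations, and accessible sets map to accessible sets because the weights $w_\mu(t)$ depend only on the group structure (Lemma~\ref{regular}) -- is cleaner and explains the representation-independence rather than merely exhibiting it. Second, for the Klein group the paper integrates the Jacobian $\sqrt{|h|}\propto\e^{-4(x^1+x^2+x^3)}$ over the unconstrained octant of interaction times, whereas you integrate over character space subject to the triangle inequalities $x_i+x_j\geq x_k$ and evaluate via disjointness of the three complementary half-spaces; these are related by a change of variables and both give $3/32$. Third, for $\sf Z_4$ your fixed-$u$ cross-sections (sector of radius $\sqrt{u}$, half-angle $\frac12\ln(1/u)$, capped by the full disc once $u<\e^{-2\pi}$) replace the paper's coordinate change to $(y,z,\phi)$; both encode the identical non-injectivity issue that you correctly flag as the crux.

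One small imprecision, which does not affect the result: in part (a) you say $\mu=\sum_{\nu\geq1}q_\nu\omega^\nu$ ``sweeps the chord joining $\omega$ and $\bar\omega$.'' For $g\geq4$ it actually sweeps the whole convex hull of $\{\omega,\omega^2,\dots,\omega^{g-1}\}$. Your conclusion survives because every $\mu$ in that hull has $(1-\mathrm{Re}\,\mu)/|\mathrm{Im}\,\mu|\geq\tan(\pi/g)$, with equality only at $\omega,\bar\omega$, so the cone of directions $\{t(\mu-1):t\geq0\}$ -- and hence its image under the exponential -- is the same whether $\mu$ ranges over the chord or over the full hull; but the argument should be stated for the hull.
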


The proof of this proposition is provided in Appendix \ref{AppA}. Note that 
Eqs.~\eqref{polygon volume} and \eqref{pauli volume} were already presented in 
\cite{SAPZ20} for the ratio of the spectra of accessible to the entire set of Weyl 
maps and for accessible Pauli channels, respectively. However, we should emphasize 
that these results are valid for any representation of the aforementioned groups 
independently of the dimension of the maps. In Remark \ref{weyl measure} we show 
that for the specific example of Weyl channels, the volume of the set of accessible 
and Markovian Weyl channels are the same. Accordingly, in this case, the set $\cal 
A$ of the accessible maps forms a good approximation of the set of Markovian 
channels. Now we will apply the same approach of Theorem \ref{nonzerovolume} to get 
the boundaries of the  set $\cal A$ assigned to an 
abelian group.

\begin{theorem}\label{boundary}
The boundary $\partial {\cal A}$ of the set of accessible channels in the
polytope of a finite abelian group of unitary maps is formed by Lindblad
operators $\c L_\Phi =\Phi-\1$, where $\Phi$ belongs to the boundaries of the
polytope.
\end{theorem}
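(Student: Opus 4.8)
Since $G_\Phi$ is abelian, the plan is to diagonalise all the commuting superoperators $\{\Phi_\mu\}$ simultaneously in the character (Fourier) basis of $G$. A channel $\Phi=\sum_\mu q_\mu\Phi_\mu$ in the polytope is then recorded by its spectrum $a_k=\sum_\mu q_\mu\chi_k(\mu)$, and by Theorem~\ref{group} together with commutativity the accessible map $\e^{t\c L_\Phi}$ has spectrum $\lambda_k=\e^{t(a_k-1)}$. Note first that the weight of the identity in $\c L_\Phi=\Phi-\1$ only rescales time, so without loss of generality the generator is $\c L_\Phi=\sum_{\mu=1}^{g-1}q_\mu\c L_{\Phi_\mu}$ and lives on the facet of the polytope opposite the identity vertex; ``$\Phi$ on the boundary of the polytope'' then means that some $q_\mu$ vanishes. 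I would prove the two matching inclusions: generators with all $q_\mu>0$ produce points in the interior of $\cal A$, and generators with a vanishing weight produce points of $\partial\cal A$.

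The core step is a generalisation of the ball construction of Theorem~\ref{nonzerovolume} to an arbitrary interior generator. Fix $\c L_\Phi$ with all $q_\mu>0$ and a finite $T>0$, and set $\Omega=\e^{T\c L_\Phi}$. Perturbing the weights by $\vec\epsilon$ with $\sum_\mu\epsilon_\mu=0$ and the time by $\delta T$ yields valid accessible maps $\e^{(T+\delta T)\c L_{\vec\epsilon}}$ with $\c L_{\vec\epsilon}=\sum_{\mu}(q_\mu+\epsilon_\mu)\c L_{\Phi_\mu}$. Using commutativity to factor the exponential exactly as in Eq.~\eqref{ball} and expanding to first order, the displacement of such a map from $\Omega$ equals $\Omega\bigl(\delta T\,\c L_\Phi+T\sum_\mu\epsilon_\mu\Phi_\mu\bigr)$. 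As $(\delta T,\vec\epsilon)$ vary, the bracket sweeps $\mathrm{span}\{\Phi-\1\}$ together with the facet-tangent directions $\{\Phi_\mu-\Phi_\nu\}$, which span the whole tangent space $V_0=\{\sum_\mu c_\mu\Phi_\mu:\sum_\mu c_\mu=0\}$ of the polytope. Since $\Omega$ is an invertible element of the commutative group algebra and left multiplication by it maps $V_0$ onto itself (the coefficients of $\Omega v$ again sum to zero, and $\Omega$ is invertible), the displacements fill a full-dimensional ball around $\Omega$. Hence $\Omega\in\mathrm{int}\,\cal A$, and contrapositively every point of $\partial\cal A$ must come from a generator with a vanishing weight. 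If the $\Phi_\mu$ are linearly dependent, one simply replaces $V_0$ by the genuine affine tangent space of the polytope and the argument is unchanged.

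For the reverse inclusion I would argue that a vanishing weight forces an equality among the spectral data: through $\lambda_k=\e^{t(a_k-1)}$ it produces a multiplicative relation among the eigenvalues of $\Omega$ -- for the non-cyclic group of order four this is precisely $\lambda_i\lambda_j=\lambda_k$, and for cyclic groups it reproduces the bounding logarithmic spirals of the single-generator orbits seen in Section~\ref{Examples}. One then checks that interior generators satisfy the corresponding relation only as a strict inequality, so it is an active supporting constraint and $\Omega\in\partial\cal A$. I expect this converse to be the \emph{main obstacle}: proving in full generality that the equality carved out by a vanishing weight is a genuine face of the accessibility region, rather than an interior coincidence, is equivalent to showing that the parametrisation $(q,t)\mapsto\Omega$ carries the boundary of its domain to the topological boundary of its image. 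My plan is to reduce this to the injectivity of the character map on the facet combined with the explicit extremal spectral relations, so that the general statement follows from the face structure already verified for the examples.
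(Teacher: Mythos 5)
Your first inclusion is, in essence, the paper's entire proof of this theorem. The paper perturbs an interior generator $\c L=\sum_{\mu}q_\mu\Phi_\mu-\1$ (all $q_\mu\neq0$) by $(\delta T,\vec\epsilon)$ with $\sum_\mu\epsilon_\mu=0$, factors the exponential using commutativity exactly as you do, and shows that the first-order displacements $\sum_i f_i\Phi_i$ fill a ball around $\e^{T\c L}$ because the coefficient map $\vec\epsilon'\mapsto\vec f$ is given by $\exp[T(\sum_{\mu}q_\mu R_\mu-\1)]$ in the regular representation (Lemma~\ref{regular}), which is invertible at any finite $T$. Your formulation -- that $\Omega=\e^{T\c L}$ is invertible in the commutative group algebra and multiplication by it maps the tangent space $V_0$ onto itself -- is the same argument phrased without the regular representation, and your remark about replacing $V_0$ by the affine tangent space in the linearly dependent case is the right fix. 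Up to this point your proposal is correct and equivalent to the paper's proof: interior generators give relative-interior points of $\cal A$, hence $\partial{\cal A}$ can only arise from generators with a vanishing weight. That one inclusion is all the paper proves, and all the theorem is intended to assert.

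The genuine problem is your ``reverse inclusion'': that every generator with a vanishing weight produces points of $\partial{\cal A}$. This is false in general, and the paper states this explicitly immediately after its proof: not every point of the boundary of the polytope gives a trajectory on the boundary of $\cal A$, the counterexamples being the red dashed curves in Fig.~\ref{two-dim}(c)--(f). Concretely, for the cyclic group of order $4$ with linearly dependent elements (the square), the single Lindbladian $\c L_{\Phi_1}=\Phi_1-\1$ corresponds to $\Phi=\Phi_1$, a vertex of the polytope, yet by Eq.~\eqref{oneterm} its trajectory is a mixture of all four powers of $\Phi_1$ with strictly positive weights and runs through the interior of $\cal A$; no argument can show such points are boundary points, because they are not. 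Your plan to derive the converse ``in full generality'' from active spectral equalities therefore cannot succeed. The spectral relation you exhibit for the non-cyclic group of order four ($\lambda_i\lambda_j=\lambda_k$ when $q_k=0$) is correct in that instance, but it relies on the linear independence of the group elements, which makes the convex expansion (and hence the identification of faces of $\cal A$) unique; for linearly dependent elements the correspondence between vanishing weights and faces of $\cal A$ breaks down. The paper leaves precisely this converse as a conjecture restricted to linearly independent elements, so the part of your proposal you flag as the ``main obstacle'' is not merely hard -- it is disclaimed by the paper in the generality you attempt it.
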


\begin{proof}
Let $\c L=\sum q_\mu\Phi_\mu-\1$ denote the Lindblad generator such that the
channels $\Phi_\mu$  along with identity form a group of order $g$ and
$q_\mu\neq0$ for all $\mu$. Then it is always possible to find a probability
$q'_\mu=q_\mu+\epsilon_\mu$ for small enough $\epsilon_\mu$ such that
$\sum\epsilon_\mu=0$. Similar to the proof of the previous theorem, we can
define a family of valid $\vec{\epsilon}$-dependent Lindblad generators $\c
L_{\vec{\epsilon}}= \sum_{\mu=1}^{g-1}(q_\mu+\epsilon_\mu)\Phi_\mu-\1$. So for
any finite time $t=T$ there is $\delta T$ such that $|\delta T|<<1$ and
$\exp[(T+\delta T)\c L_{\vec{\epsilon}}]$ belongs to the set of accessible maps.
On the other hand, we have
\begin{equation}\label{abelianball}
	\begin{split}
	\e^{(T+\delta T)\c L_{\vec{\epsilon}}}&=
	\e^{(T+\delta T)\c L}\e^{(T+\delta T)\sum_{\mu}\epsilon_\mu\Phi_\mu}
	\\&\approx
	\e^{T\c L}\left((1-\delta T)\1+
	\delta T\sum_{\mu=1}^{g-1}q_\mu\Phi_\mu\right)
	\left(\1+T\sum_{\mu=1}^{g-1}\epsilon_\mu\Phi_\mu\right)
	\\&\approx
	\e^{T\c L}+\e^{T\c L}\left(-\delta T\1+
	\sum_{\mu=1}^{g-1}(T\epsilon_\mu+q_\mu\delta T)\Phi_\mu\right)
	\\&\approx
	\e^{T\c L}+\sum_{\alpha=0}^{g-1}\sum_{\mu=0}^{g-1}w_\alpha(T)
	\epsilon'_\mu\Phi_\alpha\Phi_\mu=\e^{T\c L}+
	\sum_{i=0}^{g-1}f_i\left(\vec{w}(T),\vec{\epsilon}'\right)\Phi_i,
	\end{split}
\end{equation}
where $\epsilon'_0=-\delta T$, $\epsilon'_\mu= T\epsilon_\mu+q_\mu\delta T$ for
$\mu\in\{1,\cdots,g-1\}$, and $w_\alpha(T)$ is the time-dependent probability by
which the trajectory of $\exp(T\c L)$ is defined, see Theorem \ref{group}. Note
that due to rearrangement theorem for any fixed $\alpha$ the term
$\Phi_\alpha\Phi_\mu$ produces all group elements with different order according
to the group multiplication table. Thus we can define the $g$-dimensional vector
$\vec{f}\left(\vec{w}(T),\vec{\epsilon}\right)$ by its element, 
$f_i\left(\vec{w}(T),\vec{\epsilon}\right)$ which can be gained by 
\begin{equation}
	\vec{f}\left(\vec{w}(T),\vec{\epsilon}'\right)=(\sum_{\alpha=0}^{g-1}
	w_\alpha(T)R_\alpha)\vec{\epsilon}';
\end{equation}
where $R_\alpha$ is the  regular representation of element $\alpha$ of the group
$G=\{\Phi_0=\1,\cdots,\Phi_{g-1}\}$ of dimension $g\times g$ defined based on
the group multiplication table see Lemma \ref{regular}. Therefore, one can claim
the vector $\vec{f}\left(\vec{w}(T),\vec{\epsilon}'\right)$ can define a ball
around the point $\exp(T\c L)$ if the $g$-dimensional matrix
$\sum_{\alpha=0}^{g-1}w_\alpha(T)R_\alpha$ is invertible. To show that  we use
Lemma~\ref{regular} that leads to
\begin{equation}\label{multi-table}
	\sum_{\alpha=0}^{g-1}w_\alpha(T)R_\alpha=
	\exp\left[T\left(\sum_{\mu=1}^{g-1}q_\mu R_\mu-\1\right)\right].
\end{equation}
This is an invertible matrix as long as $T$ is finite, which is the case here, 
completing the proof.
\end{proof}

We should notice that the inverse of the theorem  is not 
necessarily valid, i.e. not any point from the boundaries of a polytope
formed by a finite abelian group of quantum channels can give a 
trajectory at the boundaries of the set of accessible maps.
For a counterexample, see all the red dashed lines in 
Fig.~\ref{two-dim}c-f. These are trajectories generated by 
single Lindbladians corresponding to the channels at different 
vertices of the polygons.
Our conjecture is that the inverse is true when the elements are
linearly-independent.\\

To proceed with investigating more geometrical properties of accessible maps, in 
what follows, we show that this set is star-shaped with respect to the centre of 
its corresponding polytope.
First, note that due to Proposition \ref{central-line},
the line segment $[\1 , \Phi_* ]$ is accessible
by the generator $\c L_*$ that commutes with all channels in the polytope. Applying 
this fact, we get that for any accessible $\Omega_t$
\begin{eqnarray}
	p \Omega_t + (1-p) \Phi_*=
	\Omega_t (p \1 + (1-p) \Phi_*)= 
	\e^{t\c L}\e^{t_\ast\c L_\ast}=\e^{t'\c L'},
\end{eqnarray}
and therefore, the line segment $[\Omega_t, \Phi_*]$ is accessible, 
proving the following result.

\begin{theorem}\label{star}
The set $\cal A$ of accessible maps in a polytope formed by a discrete finite group 
of quantum channels is star-shaped with respect to the centre of the polytope.
\end{theorem}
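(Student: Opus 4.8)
The goal is to show that for every accessible channel $\Omega_t\in\c A$ the whole segment $[\Omega_t,\Phi_*]$ lies in $\c A$; this is precisely the statement of star-shapedness with respect to the centre $\Phi_*$. The plan is to exhibit, for each mixing parameter $p\in[0,1]$, an explicit Lindblad generator of the admissible form $\sum q'_\mu\Phi_\mu-\1$ whose dynamical semigroup passes through the point $p\Omega_t+(1-p)\Phi_*$, so that this point is accessible by definition.

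First I would assemble the two structural facts already at hand. By Proposition \ref{infinty}, $\Phi_*$ is a projector satisfying $\Phi_\mu\Phi_*=\Phi_*\Phi_\mu=\Phi_*$ for every group element, so $\Phi_*$ is central in the polytope; in particular the generator $\c L_*=\frac{g}{g-1}(\Phi_*-\1)$ of Proposition \ref{central-line} commutes with any admissible $\c L=\sum q_\mu\Phi_\mu-\1$, since $[\Phi_\mu,\Phi_*]=\Phi_*-\Phi_*=0$. Moreover, writing $\Omega_t=\sum_\mu w_\mu\Phi_\mu$ with $\sum_\mu w_\mu=1$ (Theorem \ref{group}) and using $\Phi_\mu\Phi_*=\Phi_*$ yields the absorbing identity $\Omega_t\Phi_*=\Phi_*$.

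The core computation I would then carry out is to recognise the desired segment point as a product of two commuting semigroup elements. Since $\varphi(t_*):=\e^{t_*\c L_*}=p\1+(1-p)\Phi_*$ with $p=\e^{-t_*g/(g-1)}$, the absorbing identity gives
\begin{equation}
\Omega_t\,\varphi(t_*)=p\Omega_t+(1-p)\Omega_t\Phi_*=p\Omega_t+(1-p)\Phi_*.
\end{equation}
Because $\c L$ and $\c L_*$ commute, $\Omega_t\varphi(t_*)=\e^{t\c L}\e^{t_*\c L_*}=\e^{t\c L+t_*\c L_*}$. It remains to check that the exponent is $T$ times an admissible generator: expanding $\c L_*=\frac{1}{g-1}\sum_{\mu=1}^{g-1}\Phi_\mu-\1$ and collecting terms yields $t\c L+t_*\c L_*=\sum_{\mu=1}^{g-1}\bigl(tq_\mu+t_*/(g-1)\bigr)\Phi_\mu-(t+t_*)\1=T\c L'$, where $T=t+t_*$ and $q'_\mu=\bigl(tq_\mu+t_*/(g-1)\bigr)/T$ are manifestly non-negative and sum to one. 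Hence $p\Omega_t+(1-p)\Phi_*=\e^{T\c L'}\in\c A$.

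Finally, I would observe that as $t_*$ runs over $[0,\infty)$ the parameter $p=\e^{-t_*g/(g-1)}$ sweeps $(0,1]$, covering every interior point of $[\Omega_t,\Phi_*]$, while the endpoint $\Phi_*$ is recovered in the limit $t_*\to\infty$ as the fixed point of $\c L_*$. I do not anticipate a genuine obstacle: the single delicate point is the commutativity of $\c L_*$ with $\c L$, which is exactly what centrality of $\Phi_*$ provides, and — worth stressing — this argument uses no abelianness of $G$, so the result holds for an arbitrary finite group of channels.
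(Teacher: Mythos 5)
Your proof is correct and follows essentially the same route as the paper: both identify $p\Omega_t+(1-p)\Phi_*$ with the product $\Omega_t\,\e^{t_*\c L_*}=\e^{t\c L}\e^{t_*\c L_*}$ via the absorbing property $\Omega_t\Phi_*=\Phi_*$, and then use commutativity of $\c L$ with $\c L_*$ to rewrite this as $\e^{T\c L'}$ for an admissible generator $\c L'$. Your write-up merely makes explicit the details the paper leaves implicit (the verification that the combined weights $q'_\mu$ are non-negative and sum to one, and that the endpoint $\Phi_*$ is attained only in the limit $t_*\to\infty$), which is a welcome but not substantively different elaboration.
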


\begin{corollary}\label{star2}
If a set  $\cal A$ of accessible maps is planar, then accessible maps 
are star-shaped with respect to the whole interval $[\1, \Phi_*]$. 
\end{corollary}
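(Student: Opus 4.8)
The plan is to deduce the statement from two separate star-shapedness facts about the endpoints of the interval. Star-shapedness of $\cal A$ with respect to $\Phi_*$ is precisely Theorem~\ref{star}, so what remains is to show that $\cal A$ is star-shaped with respect to $\1$ as well, and then to combine the two. The hypothesis that $\cal A$ is planar will be used only for the second of these, and that is where the main work lies.

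First I would record the elementary reduction: a set that is star-shaped with respect to two points $A$ and $B$ is star-shaped with respect to the whole segment $[A,B]$. Indeed, fixing $\Omega\in\cal A$ and $P=\lambda\1+(1-\lambda)\Phi_*$ on the interval, an arbitrary point of $[\Omega,P]$ is a convex combination $Q=a\Omega+b\1+c\Phi_*$ with $a+b+c=1$, and the union of all such segments is the triangle $\mathrm{conv}(\Omega,\1,\Phi_*)$. For $a+c>0$ put $Q'=(a\Omega+c\Phi_*)/(a+c)$; then $Q'\in[\Omega,\Phi_*]\subseteq\cal A$ by Theorem~\ref{star}, and $Q=(a+c)Q'+b\1\in[Q',\1]\subseteq\cal A$ once star-shapedness with respect to $\1$ is known (the case $a+c=0$ gives $Q=\1\in\cal A$). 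Hence the triangle lies in $\cal A$, so the interval $[\1,\Phi_*]$ is a star-centre set.

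To prove that $\cal A$ is star-shaped with respect to $\1$ I would pass to the affine plane carrying $\cal A$, normalising so that $\Phi_*$ is the origin and $\1$ sits at $1$ on the real axis. By Theorem~\ref{boundary} the boundary $\partial\cal A$ is swept by the single-generator trajectories $\e^{t\c L_\Phi}$ with $\Phi$ on the boundary of the polygon; on the plane the nontrivial eigenvalue of such a generator is $\e^{2\pi i/g}-1$, so the two outermost arcs are the logarithmic spiral $B(\psi)=\e^{(-\kappa+i)\psi}$ and its conjugate, with contraction rate $\kappa=\tan(\pi/g)>0$, running from $\1$ at $\psi=0$ to the antipodal boundary point at $\psi=\pi$ (with $\Phi_*$ interior). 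Star-shapedness with respect to $\1$ is equivalent to $\partial\cal A$ being a radial graph seen from $\1$, i.e. to monotonicity of $\Theta(\psi)=\arg\bigl(B(\psi)-1\bigr)$. Using $\tfrac{d}{d\psi}\arg w=\mathrm{Im}(w'/w)$ and $B'=(-\kappa+i)B$, one finds $\Theta'(\psi)=\e^{-\kappa\psi}\,h(\psi)/|B(\psi)-1|^{2}$ with $h(\psi)=\kappa\sin\psi-\cos\psi+\e^{-\kappa\psi}$, so the whole matter reduces to the one–variable inequality $h(\psi)>0$ on $(0,\pi)$.

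The hard part is exactly this last step, together with the accompanying geometric claim that the two arcs $B$ and $\overline{B}$ really bound $\cal A$ from outside: the higher powers $\Phi_\mu$ contract faster, $\tan(\mu\pi/g)\ge\tan(\pi/g)$, so their trajectories ought to stay inside, but this must be extracted from Theorem~\ref{boundary} and the commuting factorisation $\e^{t\c L}=\prod_\mu \e^{t_\mu\c L_{\Phi_\mu}}$. The inequality itself is elementary: $h(0)=h'(0)=0$ and $h''(0)=1+\kappa^{2}>0$ give $h>0$ near $0$, while for $\psi\ge\arctan(1/\kappa)$ one has $\kappa\sin\psi-\cos\psi\ge0$ and hence $h\ge\e^{-\kappa\psi}>0$; the intermediate range is closed by noting $h'(\psi)=\kappa(\cos\psi-\e^{-\kappa\psi})+\sin\psi>0$ there. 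With $\Theta$ monotone each ray from $\1$ meets $\partial\cal A$ once, so $\cal A$ is star-shaped with respect to $\1$; together with Theorem~\ref{star} and the reduction above this proves that $\cal A$ is star-shaped with respect to the entire interval $[\1,\Phi_*]$.
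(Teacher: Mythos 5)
Your architecture --- (i) the kernel-convexity reduction (star-shaped with respect to $\1$ and with respect to $\Phi_*$ implies star-shaped with respect to all of $[\1,\Phi_*]$), (ii) Theorem~\ref{star} for the endpoint $\Phi_*$, (iii) an explicit spiral computation for the endpoint $\1$ --- is sound, and both the reduction and the inequality $h(\psi)=\kappa\sin\psi-\cos\psi+\e^{-\kappa\psi}>0$ on $(0,\pi)$ (with $\Theta'=\e^{-\kappa\psi}h/|B-1|^2$) check out. But the proof has a genuine gap, and it is exactly the one you flag and then skip: the claim that $\cal A$ \emph{is} the planar region bounded by the two extreme spirals $B$ and $\overline{B}$. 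Without that characterization, $h>0$ proves nothing about $\cal A$; and it does not follow from Theorem~\ref{boundary} alone, since that theorem only places $\partial\cal A$ on trajectories of generators from the polytope's boundary, which includes all edge mixtures $q\Phi_\mu+(1-q)\Phi_{\mu+1}$, not just the two neighbours of $\1$; one must still show every such trajectory stays weakly inside $B\cup\overline{B}$. The missing step is actually short (it is essentially Appendix~\ref{AppA}): in the planar coordinate, $t\c L$ ranges over the cone spanned by $\{\omega^\mu-1\}_{\mu=1}^{g-1}=\{-x+iy:\;x\ge0,\;|y|\le x/\tan(\pi/g)\}$, and exponentiating this cone gives exactly $\{r\e^{i\phi}:\;|\phi|\le\pi,\;r\le\e^{-\kappa|\phi|}\}$ --- but this argument is absent from your write-up, which therefore defers precisely its load-bearing lemma. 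Smaller issues: you silently assume that a planar $\cal A$ comes from a cyclic group realized as a regular $g$-gon (true, but it needs an argument: faithfulness plus a two-dimensional affine span forces the nontrivial part of the representation to be a single character and its conjugate); the inequality $\tan(\mu\pi/g)\ge\tan(\pi/g)$ is false as written for $\mu>g/2$; and the positivity of $h'(\psi)=\kappa(\cos\psi-\e^{-\kappa\psi})+\sin\psi$ on the intermediate range is not immediate from that form, since $\cos\psi-\e^{-\kappa\psi}$ can be negative there --- it does hold, e.g.\ via $\kappa\cos\psi+\sin\psi=\sqrt{1+\kappa^2}\,\sin(\psi+\arctan\kappa)>\kappa\ge\kappa\e^{-\kappa\psi}$.

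For comparison, the paper's proof never touches the boundary of $\cal A$ at all: for an accessible $\Omega_T=\e^{T\c L}$ it forms the two-parameter family $\{p\,\e^{t\c L}+(1-p)\Phi_*\}_{p\in[0,1],\,t\in[0,T]}$, accessible by Theorem~\ref{star} applied along the trajectory, and then invokes planarity to conclude that this set already contains every segment $\{p\,\Omega_T+(1-p)(q\1+(1-q)\Phi_*)\}_{p\in[0,1]}$. That route is much softer and shorter, though its planar-containment step is itself stated without proof (and requires care when the trajectory winds around $\Phi_*$ by more than $\pi$). Your route, once the spiral characterization of $\cal A$ is actually proven, yields a complete and more explicit result --- it identifies the kernel geometry concretely rather than arguing topologically --- but as submitted it is incomplete at exactly the step you labelled ``the hard part.''
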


\begin{proof}
	This follows from the fact, that for any accessible 
	$\Omega_T = e^{T\mathcal{L}}$, the set
	\begin{equation}
		\{p e^{t \mathcal{L}} + (1-p) \Phi_* \}_{p\in[0,1],\ t \in [0,T]}
	\end{equation}
	is accessible. Next using the fact, that this is a subset of a plane, we have 
	for all $q \in [0,1]$, it contains intervals of a form
	\begin{equation}
		\{p e^{T\mathcal{L}} + (1- p) (q \1 + (1-q) \Phi_*) 
		\}_{p\in[0,1]}.
	\end{equation}
	Which shows the star shapeness with respect to the interval $[\1 , \Phi_*]$.
\end{proof}

\medskip


\section{Accessibility rank, Pauli and Weyl channels}\label{last examples}

In this Section, we explicitly study some further examples of quantum channels forming a 
group and discuss the rank of accessible channels.

\begin{example}\normalfont
Let $\lambda_j=\e^{i\theta_j}$ with $j=1,\dots,N^2$
denote eigenvalues of a superoperator
$\Phi_U=U\otimes\overline{U}$
corresponding to a unitary $U$.
If $\forall j:\ \theta_j=2\pi r_j$, where
$r_j=m_j/n_j$ is a irreducible fraction of a  rational number, then $\Phi$ can 
generate a discrete and finite cyclic group of order equal to the lowest common 
multiple of $n_j$, denoted by $n$, i.e. $G_\Phi=\{\Phi^i\}_{i=0}^{n-1}$.

As an example of such a group, one may consider a group of rotations around a
fixed axis with discrete angles. Let us focus on qubit channels. In this case a
rotation around $\hat{n}$ of magnitude $\theta$ is gained by applying
$U=\e^{i\frac\theta2(\hat{n}.\vec{\sigma})}$. Let us take $\theta$ to be one of
the following angles $\{\theta_j=\frac{2\pi j}{l}\}$ with $j=0,1,\dots, l-1$. It
is easy to check that
$\{U_j=\exp\left(i\frac{\theta_j}{2}(\hat{n}.\vec{\sigma})\right)\}$ is an
abelian group up to a phase of order $l$. Thus $G_\Phi=\{\Phi_j=U_j\otimes
\overline{U}_j\}_{j=0}^{l-1}$ is an abelian (more precisely a cyclic) group of
the same order. Moreover, for every divisor $d$ of $l$ ($l=dd'$), $G_\Phi$ has
at most one cyclic subgroup $H_\Phi$ of order $d$ generated by different powers
of $\Phi_{d'}=U_{d'}\otimes\overline{U}_{d'}$. Therefore, one can always find an
accessible map of order $h_\Phi=d$ which has a convex expansion in terms of the
subgroup members $H_\Phi$. The space of such a group is an $l$-polygon. For the
$l=2,\dots,7$, the set of channels and their corresponding accessible maps are
given in Fig.~\ref{two-dim}.

\end{example}

\begin{example}[\bf Group of commutative quantum maps forming an orthogonal basis]\normalfont

 An abelian group of $N^2$ unitary quantum maps forming an orthogonal basis,
$\tr(U_\mu U^\dagger_\nu)=N\delta_{\mu\nu}$, is one of the simplest non-trivial
examples to investigate. Let $\tilde{G}_U=\{U_\mu\}_{\mu=0}^{N^2-1}$ denote an
abelian group up to a phase of such $N^2$ unitary matrices. Weyl unitary
matrices \cite{W27} provide an example of such a set. Since we are dealing with
an abelian group here, Corollary~\ref{abelian accessibility} clarifies the
problem of accessibility. Since unitary matrices form an orthogonal basis in
the Hilbert-Schmidt space of matrices and thus are linearly independent, the
analyzed polytope forms a simplex in dimension $N^2-1$. This implies that the
convex combinations we get through Corollary~\ref{2cyclic lindblad} or
Corollary~\ref{abelian accessibility} for the accessible maps are unique.
Therefore, we can call the number of vertices appearing in such a unique convex
combination the rank of the channel (and it equals to the Choi rank as well
since the vertices are orthogonal maps). Accordingly, an immediate result of
Corollary~\ref{abelian accessibility} is that not all quantum maps with any
assumed rank can be accessible.

\begin{proposition}
	The set  $\mathcal{A}_N^Q$ of accessible quantum maps of the form $\Phi=\sum
	p_\mu\Phi_\mu$, in which $\Phi_\mu$'s are unitary channels belong to an abelian
	group and satisfy
	$\Tr\left(\Phi_\mu\Phi_\nu^\dagger\right)=N^2\delta_{\mu\nu}$, is formed by the
	maps of  (Choi) rank equal to the order of possible cyclic subgroups of
	$G_\Phi$ or their multiplications.
\end{proposition}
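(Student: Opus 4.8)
The plan is to characterize exactly which Choi ranks occur among accessible maps by tracking how the rank behaves under the convex decompositions established earlier. First I would recall the setup: since the $N^2$ unitary channels $\Phi_\mu$ form an orthogonal basis, any convex combination $\Phi=\sum p_\mu\Phi_\mu$ has a \emph{unique} expansion, and the number of nonzero weights $p_\mu$ equals the Choi rank of $\Phi$ (the vertices being orthogonal). Thus the task reduces to a purely combinatorial question: for which subsets $\{\mu : p_\mu\neq 0\}$ can an accessible generator produce exactly that support?

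The key structural input is Corollary~\ref{abelian accessibility}: for an abelian group, the dynamical semigroup $\e^{t\c L}$ generated by $\c L=\sum q_\mu\Phi_\mu-\1$ decomposes convexly over the elements of the subgroup obtained as the product of the cyclic subgroups generated by each $\Phi_\mu$ appearing with $q_\mu\neq 0$. By the fundamental theorem of abelian groups this product subgroup $H$ is itself a subgroup of $G_\Phi$, and its order is a product of orders of cyclic subgroups. I would then invoke Eqs.~\eqref{prob} and \eqref{prob2}: for any $t>0$ with all relevant interaction times positive, the resulting probabilities $\mathzapf{P}_l(t)$ are strictly positive on \emph{every} element of $H$ (none vanish, as shown right after Eq.~\eqref{prob}). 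Hence the accessible channel has support exactly equal to $H$, so its Choi rank is precisely $|H|$.

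The two directions then go as follows. For the ``achievability'' direction, given any subgroup $H$ that is a product of cyclic subgroups, I would pick $\c L$ built from generators of those cyclic factors with all weights positive; Corollary~\ref{abelian accessibility} together with the non-vanishing of the probabilities yields an accessible map of rank exactly $|H|$. For the ``only if'' direction, suppose $\Omega_t$ is accessible; the same corollary forces its convex support to lie inside some product-of-cyclic subgroup $H$, and since at finite $t>0$ all coefficients on elements of $H$ are strictly positive, the support \emph{equals} $H$, so the rank is exactly $|H|=\prod_k |H_k|$, an order of a product of cyclic subgroups of $G_\Phi$.

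The main obstacle I anticipate is the strictness of the positivity claim: one must be careful that for a \emph{generic} choice of interaction times the product formula \eqref{prob2} does not accidentally produce a zero coefficient on some group element through cancellation, and conversely that degenerate (measure-zero) time choices do not enlarge the set of attainable ranks beyond subgroup orders. I would handle this by arguing that the $p_j(t)$ in \eqref{prob} are strictly positive for every residue class when $t>0$ (already established in the text), and that a product of strictly positive numbers indexed by a covering of $H$ is strictly positive; any apparent reduction in rank corresponds to sending some interaction time to $0$, which simply restricts $\c L$ to a smaller product-of-cyclic subgroup. Thus the admissible ranks are exactly the orders of cyclic subgroups of $G_\Phi$ and their products, and no other rank can occur.
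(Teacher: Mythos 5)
Your proof is correct and takes essentially the same route as the paper: orthogonality of the $\Phi_\mu$ gives a unique convex expansion whose support size equals the Choi rank, and Corollary~\ref{abelian accessibility} together with the strict positivity of the coefficients in Eqs.~\eqref{prob} and \eqref{prob2} identifies that support with a subgroup obtained as a product of cyclic subgroups. The paper presents this only as an ``immediate result'' of the corollary, so your explicit handling of both directions and of the no-cancellation point merely fills in details left implicit there.
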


Through the fundamental theorem of abelian groups this the theorem can be
rephrased as follows. The set of accessible channels satisfying the above
conditions is formed by the maps of Choi rank equal to the order of possible
subgroups of $G_\Phi$. Note that through Corollary~\ref{abelian accessibility}
it is possible not only to indicate the rank of accessible maps but also to find
which $\Phi_\mu$ participated in the combination when an assumed $t_\nu$ is not
zero. In summary, we should expand convexly an assumed quantum channel $\Phi$ in
terms of extreme points $\Phi_\mu$. Such an expansion is unique here. If the
extreme points participating in this expansion do not form a subgroup of
$G_\Phi$,  then the map $\Phi$  is not accessible.\\

As an example let us consider Weyl channels more explicitly.
Weyl channels are a unitary generalization of Pauli maps. 
These channels are based on $N^2$ unitary operators of the form $U_{kl}=X^kZ^l$ for 
$k,l\in\{0,\dots,N-1\}$ where $X\ket i=\ket{i\oplus1}$ and 
$Z=\diag[1,\omega_N,\dots,\omega_N^{N-1}]$ with 
$\omega_N=\exp(i\frac{2\pi}{N})$. 
The accessibility in this set is already studied in~\cite{SAPZ20}.
In the case $N=2$ of Pauli channels $G_\Phi$ is of order $4$.
There are three different cyclic subgroups of order $2$,
$H_i=\{\1,\sigma_i\otimes\overline{\sigma}_i\}$ for $i\in\{1,2,3\}$ multiplication 
of any two of them exhausts the group $G_\Phi$.
Moreover, the identity element itself is always the only trivial cyclic subgroup of 
order  $1$. Therefore, there are: 
(i) The identity map as the only accessible map of rank $1$.
(ii) Accessible maps of rank two in the form of 
$\Phi=p\1+(1-p)\sigma_i\otimes\overline{\sigma}_i$ for $i\in\{1,2,3\}$ which are 
gained when the only non-vanishing interaction time is $t_i$.
(iii) The full-rank accessible maps, which are available when there are at least 
two different $t_i\neq0$. Fig.~\ref{4-ncyc}(a) presents the accessible
maps among all Pauli channels.\\

In the case of qutrit Weyl maps, $G_\Phi$ is of order $9$. There are four
different cyclic subgroups of order $3$, $H_{\Phi}=\{\1,\Phi_{kl},\Phi_{-k\oplus
	N,l\oplus N}\}$. Again, the multiplication of any two subgroups makes the group
$G_\Phi$. In addition, there is not any other nontrivial subgroup for $G_\Phi$.
Therefore, the accessible maps belong to one of the following sets: (i) The
identity map as the only rank one accessible map. (ii)Accessible maps of rank
$3$ which can be found on the face of the simplex specified by
$(\1,\Phi_{kl},\Phi_{-k\oplus N,-l\oplus N})$ when $t_{kl}\neq0$ or $t_{-k\oplus
	N,-l\oplus N}\neq 0$ and all other interaction times are zero.
Fig.~\ref{two-dim}(b) can be considered as a cross section presenting such a
subgroup. (iii)The full-rank accessible maps which are available when there are
at least two different $t_{kl}\neq0$ and $t_{k'l'}\neq 0$ where $(k',l')\neq
(-k\oplus N,-l\oplus N)$.\\

The above results concerning the Weyl channels are mentioned in \cite{SAPZ20}.
However, the group properties of Weyl channels help us to generalize these
results to a general case of $N$-dimension. Any group of Weyl unitary maps
acting on $N$-dimensional systems, formed by $N+1$ cyclic subgroups of order $N$
with no element in common but the identity. This implies that the direct product
of any two of them forms the group entirely. So we get
\begin{corollary}
	In the case of Weyl quantum maps acting on $N$-level systems, there always exist
	accessible maps of rank $1$, $N$ and $N^2$. When $N$ is prime, these are the
	only choices, however, for a composite $N$ other ranks are possible too. The
	ranks correspond to the cardinality of the subgroups.
\end{corollary}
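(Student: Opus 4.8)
The plan is to reduce the statement to a purely group-theoretic classification and then invoke the Proposition established just above, which identifies the admissible Choi ranks of accessible maps with the orders of subgroups of $G_\Phi$. First I would pin down the abstract structure of the Weyl group of channels. Writing $U_{kl}=X^kZ^l$ and using $ZX=\omega_N XZ$, one finds $U_{kl}U_{k'l'}=\omega_N^{lk'}U_{(k+k'),(l+l')}$ with indices taken mod $N$; since the offending phase $\omega_N^{lk'}$ cancels in the superoperator $\Phi_{kl}=U_{kl}\otimes\overline{U}_{kl}$, the assignment $(k,l)\mapsto\Phi_{kl}$ is a genuine group homomorphism from $\mathbb{Z}_N\times\mathbb{Z}_N$ onto $G_\Phi$. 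Faithfulness follows from the orthogonality $\Tr(\Phi_\mu\Phi_\nu^\dagger)=N^2\delta_{\mu\nu}$, which forces distinct pairs $(k,l)$ to give distinct channels. Hence $G_\Phi\cong\mathbb{Z}_N\times\mathbb{Z}_N$, and by the above Proposition the set of accessible ranks coincides with the set of orders of subgroups of $\mathbb{Z}_N\times\mathbb{Z}_N$.

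Next I would exhibit the three ranks present for every $N$. The trivial subgroup $\{0\}$ yields the identity map, the only accessible channel of rank $1$. The cyclic subgroup $\mathbb{Z}_N\times\{0\}$ has order $N$ and, being cyclic, gives an accessible family of rank $N$ through Corollary~\ref{abelian accessibility}. Finally the whole group, of order $N^2$, supplies the full-rank accessible maps. This establishes the existence of ranks $1,N,N^2$ regardless of $N$.

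For the prime case $N=p$ I would argue that no intermediate rank can occur. The group $\mathbb{Z}_p\times\mathbb{Z}_p$ is a two-dimensional vector space over $\mathbb{F}_p$: every nonzero element has order $p$, so any subgroup $H$ is automatically closed under the scalar action $a\cdot h=\underbrace{h+\cdots+h}_{a}$ for $a\in\mathbb{F}_p$ and is therefore an $\mathbb{F}_p$-subspace. Subspaces of a two-dimensional space have dimension $0$, $1$, or $2$, i.e. cardinalities $1$, $p$, or $p^2$, so these are the only admissible ranks $1,N,N^2$. For composite $N$ I would instead produce an intermediate subgroup: choosing any proper divisor $d$ with $1<d<N$ (which exists precisely because $N$ is composite), the unique cyclic subgroup of order $d$ inside $\mathbb{Z}_N\times\{0\}$ generates, again via Corollary~\ref{abelian accessibility}, an accessible family of rank $d\notin\{1,N,N^2\}$. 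Thus extra ranks appear exactly when $N$ is composite, and in all cases the admissible ranks are the subgroup cardinalities, as claimed.

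The steps relying on existence (ranks $1,N,N^2$ and the composite case) are immediate once the isomorphism is in place; the delicate point I expect to be the main obstacle is the prime-case direction, where one must rule out all intermediate ranks simultaneously. The cleanest route is the vector-space argument above, converting the statement ``$H$ is a subgroup'' into ``$H$ is an $\mathbb{F}_p$-subspace'' so that the classification of subspaces of $\mathbb{F}_p^2$ does the work. One should also double-check that the phase cocycle $\omega_N^{lk'}$ genuinely disappears at the level of channels, so that $G_\Phi$ is an honest abelian group rather than merely a projective representation, since the entire correspondence between ranks and subgroup orders rests on this cancellation.
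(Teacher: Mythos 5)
Your proposal is correct, and it rests on the same key lemma as the paper: the preceding Proposition identifying the (Choi) ranks of accessible maps with the orders of subgroups of $G_\Phi$. Where you differ is in how the subgroup structure of the Weyl group is established. The paper simply asserts that the Weyl channels form a group built from $N+1$ cyclic subgroups of order $N$ intersecting only in the identity, the product of any two of which exhausts the group; this assertion is accurate for prime $N$ (the $p+1$ lines of $\mathbb{F}_p^2$), but as literally stated it fails for composite $N$ (e.g.\ for $N=4$ the cyclic subgroups $\langle\Phi_{10}\rangle$ and $\langle\Phi_{12}\rangle$ share the order-two element $\Phi_{20}$), so the paper's one-sentence justification leans on the reader to supply the correct structure. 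You instead prove the isomorphism $G_\Phi\cong\mathbb{Z}_N\times\mathbb{Z}_N$ from scratch — including the check that the cocycle $\omega_N^{lk'}$ cancels at the superoperator level, which is exactly the point that makes $G_\Phi$ an honest group rather than a projective one — and then classify subgroup orders: the $\mathbb{F}_p$-subspace argument for prime $N$ and an explicit cyclic subgroup of order $d\mid N$, $1<d<N$, for composite $N$. This buys a proof valid uniformly in $N$ and handles the composite case the paper only gestures at; the paper's version, in exchange, is shorter and directly exhibits the rank-$N$ faces of the simplex. One small simplification available to you: in the prime case, Lagrange's theorem already restricts subgroup orders to $\{1,p,p^2\}$, so the vector-space argument, while correct and more informative (it classifies the subgroups, not just their orders), is not strictly needed for the rank statement.
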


Before proceeding with the following example concerning local channels, let us
mention the following simple and yet significant remark about accessible Weyl
channels.
\begin{remark}\label{weyl measure}
The accessible Weyl channels can recover the full measure of Markovian Weyl 
channels but not the entire set.
\end{remark}

The supporting evidence for this statement comes from the fact that the
necessary and sufficient condition for accessibility of Weyl channels introduced
in \cite{SAPZ20} is the same as the necessary and sufficient condition stated in
\cite{WECC08} for Markovianity of channels with non-negative eigenvalues. Note
that in the case of channels with negative eigenvalue, Markovianity can only be
defined for the maps whose negative eigenvalues are even-fold degenerated
\cite{DZP18}. Such a set is by definition of measure zero.
\end{example}

\begin{example}[\bf Tensor product of quantum channels]\normalfont

Let us now generalize our results to channels acting locally in higher dimension.
Note that the set $\tilde{G}_{U\otimes V}=\{U_\mu\otimes V_\alpha\}$ is an
abelian group up to a phase if $\tilde{G}_U=\{U_\mu\}$ and
$\tilde{G}_V=\{V_\alpha\}$ are.
Moreover, it contains a complete set of orthogonal
matrices, i.e. $\Tr\left((U_\mu\otimes V_\alpha)
(U_\nu\otimes V_\beta)^\dagger\right)=
\Tr(U_\mu U^\dagger_\nu)\Tr(V_\alpha V^\dagger_\beta)=
MN\delta_{\mu\nu}\delta_{\alpha\beta}$, provided that $\tilde{G}_U$ and
$\tilde{G}_V$ are two sets of orthogonal basis in the space of matrices of order $M$ and $N$, respectively.
Let $G_\Upsilon$ denote the group formed by  $\tilde{G}_{U\otimes V}$.
We will also assume that $\tilde{G}_U$ and $\tilde{G}_V$ are the sets
of orthogonal basis and are abelian up to a phase.
Therefore,  the subgroups of $G_\Upsilon$ are of 
order $h_\Phi h_\Psi$ where $h_\Phi$ and $h_\Psi$ are 
the cardinalities of the subgroups of 
$G_\Phi=\{\Phi_\alpha=U_\alpha\otimes\overline{U}_\alpha\}$ and 
$G_\Psi=\{\Psi_\mu=V_\mu\otimes\overline{V}_\mu\}$, respectively,
and each of which is a group of orthogonal quantum maps.\\


In a very particular case, when we restrict ourselves to $G_\Phi=G_\Psi$ being
the group related to Pauli channels, we get $\{1,2,4\}^{\otimes
	2}=\{1,2,4,8,16\}$ as the rank of an accessible map of the form $\c E_A\otimes\c
E_B$ acting on a two-qubit system. It is easy to generalize this result to the
system of $k$ qubits, each of them is subjected to an evolution described by a
Pauli map.

\begin{corollary}
Let $ G^{\otimes k}$ define the group obtained by a tensor product of Pauli 
channels. Hence each element is a map acting on a $k$-qubit system. Then channels 
accessible by Pauli semigroups can only be of rank equal to $2^m$ with 
$m=0,\dots,2k$.
\end{corollary}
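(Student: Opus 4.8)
The plan is to reduce the claim to the subgroup structure of $G^{\otimes k}$ and then apply the rank characterization already proved for abelian groups of mutually orthogonal unitary channels. First I would pin down the group. The single-qubit Pauli channel group is $G_\Phi=\{\1,\sigma_1\otimes\overline{\sigma}_1,\sigma_2\otimes\overline{\sigma}_2,\sigma_3\otimes\overline{\sigma}_3\}$, i.e. the non-cyclic (Klein) group $\mathbb{Z}_2\times\mathbb{Z}_2$ of order $4$. Its $k$-fold tensor product $G^{\otimes k}$ is, by the preceding tensor-product example, again an abelian group up to a phase whose order is $4^k=2^{2k}$; since every non-trivial generator $\sigma_i\otimes\overline{\sigma}_i$ has order $2$, the whole group is an elementary abelian $2$-group, $G^{\otimes k}\cong\mathbb{Z}_2^{\,2k}$.

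Second, I would invoke the rank dictionary. The tensor products of Pauli unitaries form a complete orthogonal basis, so the superoperators $\Phi_\mu$ are mutually orthogonal and hence linearly independent; the polytope of $G^{\otimes k}$ is therefore a simplex and the hypotheses of the proposition characterizing $\mathcal{A}_N^Q$ are satisfied. Consequently the convex expansion of any accessible map in the group elements is unique, and its Choi rank equals the order of the subgroup generated by those channels $\Phi_\mu$ that carry a non-zero interaction time $t_\mu$, as dictated by Corollary~\ref{abelian accessibility}. Thus the set of attainable accessibility ranks coincides with the set of subgroup orders of $G^{\otimes k}$.

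Third --- where the genuine content lies --- I would classify those orders. Because $G^{\otimes k}\cong\mathbb{Z}_2^{\,2k}$ is a $2$-group, Lagrange's theorem forces every subgroup order to be a power of two; moreover every subgroup of an elementary abelian $2$-group is itself elementary abelian, hence isomorphic to $\mathbb{Z}_2^{\,m}$ and of order $2^m$ for some $0\le m\le 2k$. Conversely each such $2^m$ is realized, for instance by the coordinate subgroup spanned by $m$ independent generators (turning on the interaction times for a generating set fills that subgroup, giving generic full support on it), so the accessibility ranks are precisely $\{2^m:m=0,\dots,2k\}$, which is the claim.

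The step demanding the most care is the second one: transferring the ``rank $=$ subgroup order'' correspondence to the tensor-product setting. This rests on the orthogonality, and hence linear independence, of the full tensor-product basis, which is what guarantees both the uniqueness of the convex decomposition and the identification of the Choi rank with the number of participating vertices. Since orthogonality is preserved under tensor products --- as recorded in the tensor-product example --- the remaining input is the purely group-theoretic fact that subgroups of $\mathbb{Z}_2^{\,2k}$ have exactly the orders $2^0,\dots,2^{2k}$, and the corollary follows. Note that one should avoid the tempting but imprecise route of asserting that every subgroup of a direct product is itself a product of subgroups of the factors; the clean classification of subgroups of $\mathbb{Z}_2^{\,2k}$ sidesteps this issue entirely.
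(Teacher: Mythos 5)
Your proposal is correct, and its skeleton matches the paper's: both reduce the claim to the rank dictionary (orthogonality of the tensor-product Pauli basis $\Rightarrow$ simplex polytope $\Rightarrow$ unique convex decomposition $\Rightarrow$ accessible ranks are exactly the subgroup orders, via Corollary~\ref{abelian accessibility}), and then enumerate subgroup orders. Where you diverge is in that last enumeration. The paper works factor-by-factor: it asserts that the subgroups of a tensor product of groups have orders $h_\Phi h_\Psi$, i.e.\ products of subgroup orders of the factors, and then computes $\{1,2,4\}^{\otimes k}=\{2^m: m=0,\dots,2k\}$. You instead identify $G^{\otimes k}\cong\mathbb{Z}_2^{\,2k}$ as an elementary abelian $2$-group and classify its subgroup orders directly via Lagrange's theorem plus the fact that every subgroup of $\mathbb{Z}_2^{\,2k}$ is itself $\mathbb{Z}_2^{\,m}$, with each order realized by a coordinate subgroup. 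Your route is slightly more careful: the paper's factorwise statement, read literally as a claim about subgroups rather than orders, is false (e.g.\ the diagonal subgroup $\{\1\otimes\1,\ \Phi_1\otimes\Phi_1\}$ of the two-qubit Pauli group is not a product of subgroups of the factors), although the conclusion about the \emph{set of orders} survives for these groups. Your direct classification sidesteps this imprecision entirely, at the cost of being specific to the Pauli case, whereas the paper's multiplicativity heuristic is what lets it state the more general tensor-product claim (subgroup orders $h_\Phi h_\Psi$) before specializing to Pauli channels.
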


\end{example}

\section{ Classical stochastic maps} \label{classical}

Any stochastic matrix $T$ describes a Markov chain -- a transition in space of
$N$-point probability distributions, $p'=Tp$. By construction, each colum on $T$
forms an $N$-point probability vector itself. Return now back to the space of
density matrices. The process of decoherence transforms a quantum state $\rho$
into a classical probability vector embedded in its diagonal entries,
$\rho_d=\Delta(\rho)=\sum\rho_{ii}\project{i}$, where $\Delta$ denotes the
decoherence channel. Similarly, a transition matrix $T$ can be obtained from a
quantum channel suffering the super-decoherence \cite{KCPZ18},
\begin{equation}
T_{ij}(\c E)=\bra i\c E(\project j)\ket i.
\end{equation}
This effect can be considered as a decoherence applied to the Choi matrix
$C=d(\c E\otimes {\mathbbm I})P_+$, where $P_+=|\psi_+\rangle \langle \psi_+|$
with $|\psi_+\rangle= \sum_i |i,i\rangle/\sqrt{N}$ denotes the maximally
entangled state. Hence the classical transition matrix  $T$ arises from
reshaping of the $N^2$ diagonal entries of  $C$ into a square matrix of size
$N$. Super decoherence of a quantum channel $\c E$ can  also be described by
sandwiching it between repeated action of the decoherence channel,
\begin{equation}\label{equiv}
T(\c E)\equiv\Delta\circ\c E\circ\Delta.
\end{equation}
If $\{K_a\}$ denotes the set of  Kraus operators of the channel $\c E$, then
$T(\c E)$ is given by Hadamard product, $T(\c E)=\sum_a K_a\odot\overline{K}_a$.
Thus super-decoherence transforms  a unitary quantum channel, $\Phi_U=U \otimes
{\bar U}$, into a unistochastic transition matrix, $T_U=U \odot {\bar U}$.
Moreover, one can apply the special ordering of Gell-Mann basis where
$B_0=\frac{\1_N}{\sqrt{N}}$, $B_i$'s for $i=1,\dots,N-1$ are the diagonal
elements, and for $i=N,\dots,N^2-1$ denote other elements. In that case, we can
divide the distortion matrix $M$ of size $N^2-1$ defined in \eqref{affine} into
four blocks and the translation vector into two parts,
\begin{equation}\label{affine-dec}
\Phi=\begin{pmatrix}
1&&0&&0&&\\
\vect{k}&&D&&Q\\
\vect{k'}&&Q'&&D'
\end{pmatrix},
\end{equation}
where $D$ and $D'$ are square matrices of order $N-1$ and $N^2-N$, respectively,
and $\vect{k}$, $\vect{k'}$ are vectors with the same respective length.
Off-diagonal blocks $Q$ and $Q'$ are rectangular so that the dimension of $\Phi$
is $N^2$. Thus, Eq. \eqref{equiv} implies that the process of super-decoherence
in the above basis results in a projection into a $N$-dimensional space
\cite{Ketal20},
\begin{equation}\label{affine,T}
T(\c E)=
\begin{pmatrix}
	1&&0\\
	\vect{k}&&D
\end{pmatrix}.
\end{equation}\\

It is worth mentioning that super-decoherence sends Lindblad generators  to the
set of Kolmogorov operators, i.e. the generators of classical Markovian
evolutions.  This fact can be observed through the GKLS form \eqref{gkls}. The
Kolmogorov generators $\c K$, also called `transition rate matrices',
satisfy conditions:
(a) $\c K_{ij}\geq0$ for all $i\neq j$, and
(b) $\sum_i\c K_{ij}=0$ for all $j$.

\begin{lemma}\label{inc}
	Let $\Phi$ \eqref{affine-dec} be a quantum channel and $T(\c E)$
	\eqref{affine,T} denote its corresponding classical transition gained by
	super-decoherence, then $T^n(\c E)=T(\c E^n)$ for all $n\in\{1,\dots,r\}$ if and
	only if (i) $Q(D')^{n-2}\vect{k'}=0$, and (ii) $Q(D')^{n-2}Q'=0$ for all
	$n\in\{2,\dots,r\}$.
\end{lemma}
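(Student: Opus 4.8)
The plan is to first read off from Eqs.~\eqref{affine-dec} and \eqref{affine,T} that, in the ordered Gell-Mann basis, super-decoherence acts as the projection onto the top-left $N\times N$ corner. Since the superoperator of a composition is the product of superoperators, $\Phi_{\c E^n}=\Phi^n$, so $T(\c E^n)$ is exactly the top-left $N\times N$ block of $\Phi^n$, whereas $T^n(\c E)=(T(\c E))^n$. Writing $A:=T(\c E)$ as in \eqref{affine,T}, the whole statement therefore reduces to comparing the top-left block of $\Phi^n$ with $A^n$.

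To exploit this I would regroup $\Phi$ in \eqref{affine-dec} as a $2\times2$ block matrix with corners $A$, $B:=\begin{pmatrix}0\\Q\end{pmatrix}$, $C:=\begin{pmatrix}\vect{k'}&Q'\end{pmatrix}$, and $D'$. Denoting the top-left and top-right blocks of $\Phi^n$ by $P_n$ and $R_n$, the identity $\Phi^n=\Phi^{n-1}\Phi$ gives the coupled recursions $P_n=P_{n-1}A+R_{n-1}C$ and $R_n=P_{n-1}B+R_{n-1}D'$, with $P_0=\1$ and $R_0=0$; here $P_n=T(\c E^n)$ is the quantity of interest.

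The core is then two nested reductions. First, the defect $\Delta_n:=P_n-A^n$ obeys $\Delta_n=\Delta_{n-1}A+R_{n-1}C$ with $\Delta_0=\Delta_1=0$, which unfolds to $\Delta_n=\sum_{m=1}^{n-1}R_mC\,A^{n-1-m}$. Hence $\Delta_n=0$ for all $n\le r$ is equivalent to $R_mC=0$ for all $m\in\{1,\dots,r-1\}$: the direction $R_mC=0\Rightarrow\Delta_n=0$ is immediate, while the converse peels the conditions off one index at a time, since in $\Delta_{m+1}$ the summand $R_mC$ carries the factor $A^0=\1$ while all lower ones already vanish. Second, solving the $R$-recursion gives $R_m=\sum_{i=0}^{m-1}P_iB(D')^{m-1-i}$, and a direct block multiplication yields
\begin{equation*}
B(D')^jC=\begin{pmatrix}0&0\\Q(D')^j\vect{k'}&Q(D')^jQ'\end{pmatrix},
\end{equation*}
so $B(D')^jC=0$ is precisely conditions (i) and (ii) with $n=j+2$. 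The same triangular peeling — now using that the highest-power term $B(D')^{m-1}C$ enters with coefficient $P_0=\1$ while all lower-power terms multiply already-vanishing factors — shows that $R_mC=0$ for all $m\le r-1$ is equivalent to $B(D')^jC=0$ for all $j\le r-2$, which closes the chain.

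The main (indeed the only) obstacle is keeping the two inductive unfoldings honest: one must argue that the vanishing of the \emph{sums} $\Delta_n$ and $R_mC$ forces each individual summand to vanish, rather than merely some combination. This goes through purely because of the block-triangular structure, in which the extreme summand always carries the identity coefficient $P_0=\1$; in particular, no invertibility of $A$ or $D'$ is needed anywhere, so the equivalence holds for an arbitrary channel $\Phi$.
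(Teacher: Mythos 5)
Your proof is correct and follows essentially the same route as the paper: both rest on identifying super-decoherence, via Eq.~\eqref{equiv}, with the projection onto the top-left $N\times N$ block in the ordered Gell-Mann basis (where $\Delta$ has $D_\Delta=\1_{N-1}$ and all other parameters vanishing), and then comparing the corresponding block of $\Phi^n$ with the powers of $T(\c E)$. The paper compresses this into a one-line remark, and your two coupled block recursions with the triangular peeling argument supply exactly the bookkeeping it leaves implicit, correctly giving both directions of the equivalence without any invertibility assumptions on $D$ or $D'$.
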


\begin{proof}
The proof of this lemma follows from Eq.~\eqref{equiv} and the fact that in the 
Bloch representation \eqref{affine-dec} for the decoherence map $\Delta$ we have 
$D_\Delta=\1_{N-1}$, while other parameters vanish.
\end{proof}


If the above condition for a quantum channel holds, then the cyclic group
generated by different powers of this channel has its classical counterpart of
the same order, such that each element finds its respective element in the
classical set through super-decoherence.\\

A trivial example of quantum channels satisfying Lemma \ref{inc}
are those for which $Q=Q'=0$ and $\vect{k'}=0$. This implies
\begin{equation}\label{mio}
\Delta\circ\c E\circ\Delta=\c E\circ\Delta.
\end{equation}
The set of quantum channels satisfying above equation are known as Maximally
Incoherent Operation (MIO) in the context of resource theory of quantum
coherence \cite{BCP14}. This is the largest set of operations which are not able
to generate coherence in an incoherent state. So taking classical probabilities
embedded in diagonal entries of an incoherent state $\rho_d$, the set MIO can be
seen as classical operations.  Note that the super-decoherence map from quantum
channels to stochastic matrices is a surjective and non-injective function.
However, we can always find a maximally incoherent operation  which is sent to
any assumed classical transition through super-decoherence. Moreover, it is
worth mentioning if $\c E$ satisfies Eq. \eqref{mio}, then the corresponding
Lindblad generator also satisfies the same condition. Such a generator can be
thus called an {\sl incoherent generator}. Interestingly, for any Kolmogorov
generator $\c K$ one can always find an incoherent Lindblad generator $\c L$
\cite{KL21} whose trajectory  is an incoherent quantum channel at each moment of
time.\\

In the classical case, $G_T=\{T_i\}$ defines a group if $T_i$ are permutation
matrices, i.e. the extreme points of Birkhoff polytope. So the group $G_T$ is of
order $N !$ and an arbitrary convex combination is the most general bistochastic
matrix. We assign to each element of $G_T$ the Kolmogorov generator $\c
K_{T_i}=T_i-\1_N$. As any classical transition matrix is equivalent to an MIO,
and its Kolmogorov operator is equivalent to an incoherent Lindblad generator,
all the results of Section \ref{general} are also valid for classical
transitions. However, we arrive at the same conclusion noting that the
aforementioned results are obtained by adopting the group properties and not the
fact that $\Phi$ is a quantum channel. Hence this approach can be applied to
analyze the accessibility of classical bistochastic matrices by classical
dynamical semigroups generated by a Kolmogorov generator.\\

Based on the results of these work, one can easily show that a classical
dynamical semigroup $\e^{t\c K}$ is a convex combination of the group members
$T_i$. If we take the cyclic permutations as the assumed group, since cyclic
permutations are orthogonal and thus linearly independent, the classical
bistochastic matrices gained by their convex combination enjoy the following
property. The set $\mathcal{A}_N^C$ of accessible circulant bistochastic
matrices is formed by the matrices of  rank  equal to the order of possible
cyclic subgroups of $G_T$ or their multiplications. \\


The most general finite group in the classical domain is the set of all
permutations forming the Birkhoff polytope ${\cal B}_N$. The first nontrivial
case, which forms a non-abelian group, is the permutation group $S_3$. The $6$
permutation matrices and their convex combinations make the Birkhoff polytope
$\mathcal{B}_3$. This is a $4$-dimensional set comprising the convex combination
of two equilateral triangles in two orthogonal $2D$ planes. To demonstrate an
exemplary application of our approach, we discuss the set of accessible
bistochastic matrices of order three, initially analyzed in \cite{SZ18}. The
accessible set is found by exponentiating different Lindblad generators formed
by subtracting the identity from a map picked from the Birkhoff polytope ${\cal
B}_3$. The set of accessible classical channels remains in ${\cal B}_3$. We
demonstrate this set in Fig.~\ref{projections} by producing a million such
accessible channels and projecting them on the orthogonal equilateral triangles.

\begin{center}
	\begin{figure}[h]
		\includegraphics[scale = 0.15]{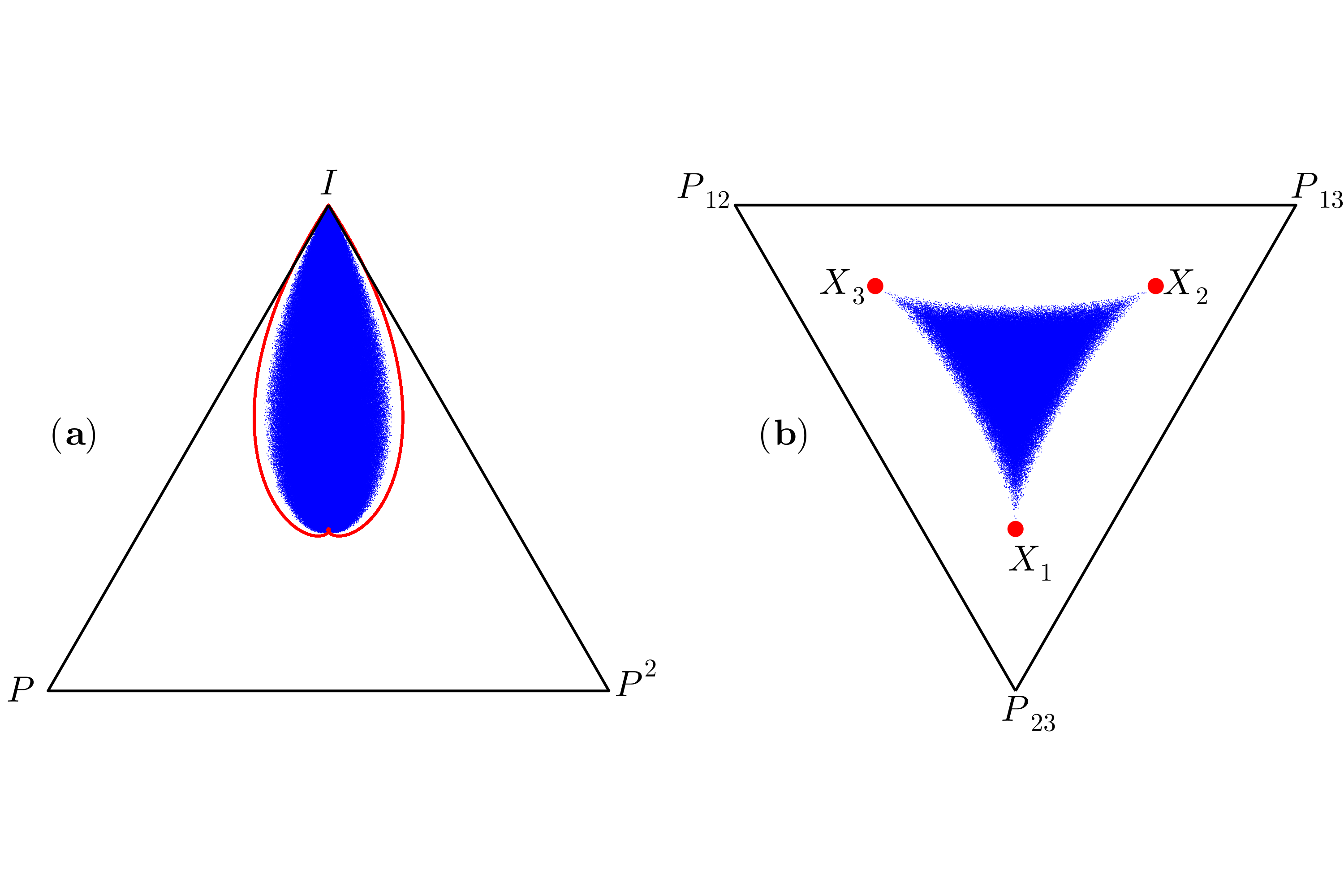}    
		\caption{The set of accessible maps in the Birkhoff polytope
			${\cal B}_3$ projected on the cross-sections corresponding to 
			(a) the even subspace, and (b) the odd subspace. $P_{ij}$ corresponds 
			to the binary permutation in $i$ and $j$ while $P$ denotes the cyclic 
			shift permutation $(312)$. The red curve in (a) represents the boundary 
			of the set $\mathcal{A}$ for the case of the cyclic group of order three
			analyzed in Sec. \ref{Examples} and displayed in Fig. \ref{two-dim} (b).			
			 The three corners of the projection on the 
			right (marked by $X_i$) are halfway between the centre of the 
			triangle and the $P_{ij}$ permutations. These are accessible by the 
			linear paths that start from the identity and end in $(\1 + P_{ij})/2$.
		}\label{projections}
	\end{figure}
\end{center}
Note that the triangle with even permutations on its corners forms the convex
hull of the cyclic group of order $3$. Therefore we expect the projection of the
accessible maps inside the $\mathcal{B}_3$ to at least cover the interior of the
red curve -- see Fig.~\ref{projections} (a).  
The data suggest that there are
no accessible maps whose projections fall \emph{outisde} this curve. The case
for the odd subspace is different as there are no accessible channels that lie
on the odd subspace other than the centre of the polytope. This is true because
the spectrum of a matrix $B = \alpha P_{23} + \beta P_{13} + \gamma P_{12}$
comprises of the values
 $\pm|\alpha + \beta e^{2\pi i/3} + \gamma e^{-2\pi
	i/3}|$ and unity. Since a positive determinant is necessary for accessibility,
we find that any combination except for $\alpha = \beta = \gamma = 1/3$ is not
accessible.\\

Another way to visualize the accessible matrices is to draw cross-sections with
2D planes parallel to the even or odd subspace with different displacement
vectors in the orthogonal subspace. Fig.~\ref{cross-sections} includes some of
these cross-sections.
\begin{center}
	\begin{figure}[h]
		\includegraphics[scale = 0.1]{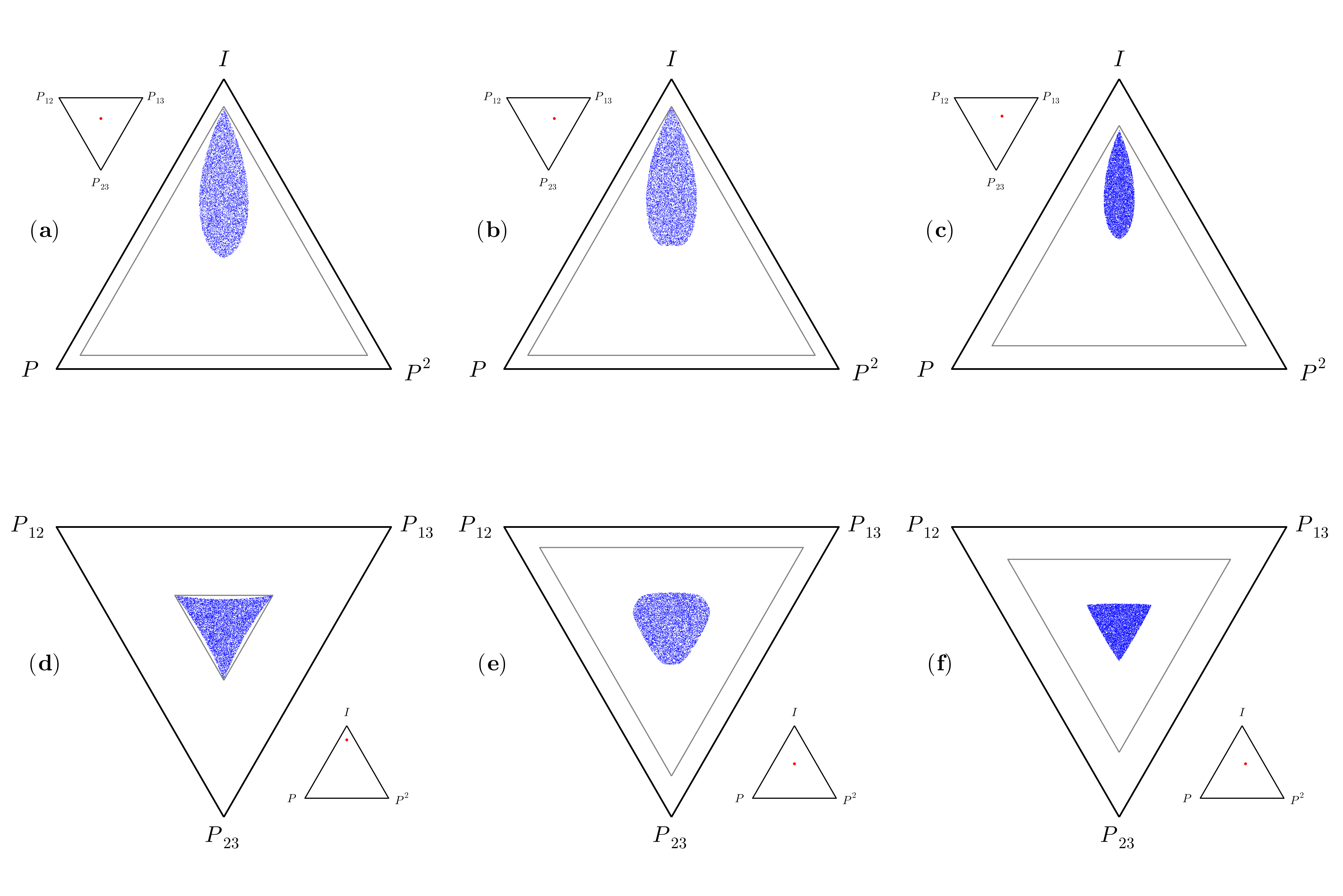}
		\caption{Several cross sections of the set of accessible points (blue) with 
		different 2D planes. a, b, c) cross-sections parallel to the even subspace; 
		d, e, f) cross-sections parallel to the odd subspace. The grey triangles 
		correspond to the cross-sections of the polytope $\mathcal{B}_3$. The red 
		dot inside the small triangles, shows the position of the planes in the 
		orthogonal subspace.}
		\label{cross-sections}
	\end{figure}
\end{center}

The volume of the Birkhoff polytope for $N=3$ is known to be 9/8~\cite{BEKTZ05}.
Using a Monte Carlo sampling method based on a sample consisting of $2.2\times10^7$ 
points that were uniformly distributed in the Birkhoff Polytope $\mathcal{B}_3$,
we estimated the relative volume of the set  ${\cal A}_3$ of accessible maps as
\begin{equation}
	\frac{\operatorname{vol}(\mathcal{A}_{3})}{\operatorname{vol}(\mathcal{B}_{3})}
	 \approx (4.398 \pm 0.004)\times10^{-2}
\end{equation}
In order to decide if a particular bistochastic map was accessible or not, we 
computed the matrix logarithm and checked if it was possible to write it as a 
positive combination of the five generators in the form $(P - \1)$.\\

\section{Concluding Remarks} \label{conclusion}

The problem of characterization of Markovian quantum channels has attracted 
considerable attention in recent years~\cite{WECC08,SCh17,Si21}.
The simplest case of single-qubit maps is already well-understood 
\cite{DZP18,PRZ19,CC19,JSP20}, while the general problem, in quantum and classical 
setup, remains open.\\



%

Following \cite{SAPZ20} we analyzed in this work the set ${\cal A}$ of
accessible quantum channels  obtained by a Lindblad generator of the form,
$L(\rho)=\sum q_i \c E_i(\rho)-\rho$, for some selected maps $\c E_i$ and
probabilities $q_i$. This set forms a subset of the set of Markovian channels of
a positive volume. Instead of analyzing the structure of  ${\cal A}$ for a fixed
dimension $N$  of the system and a concrete choice of the maps  $\c E_i$, we
studied the case in which the maps belong to a particular group $G$.
\\

The key result in this work consists in identifying properties of the set ${\cal 
A}$ of accessible maps which do not depend on the dimensionality $N$ of the
system but are determined by the properties of the group $G$. We demonstrated
that this set has a positive volume and the star-shape property with respect to
the uniform mixture of all the maps  $\c E_i$ forming the group. We computed the
volume of the set of accessible maps and analyzed its structure for several
groups -- see Fig. \ref{two-dim} showing set  ${\cal A}$ for linearly dependent
channels forming cyclic groups of order $2-7$ and Fig. \ref{4-cyc}(b) obtained for
cyclic groups of order $4$ of linearly independent channels. \\

It is worth mentioning that some of the results presented here are obtained only
based on a group's closure property. Thus, they also hold once the set $S$ of
extremal channels is a semigroup. In this way, we can also generalize the
results to the nonunital channels. For example consider the set 
$S=\{\Phi_0,\Phi_1,\Phi_2,\Phi_3,\Phi_4,\Phi_5\}$ of qubit channels where
$\Phi_0$ is the identity map, $\Phi_i$ for $i\in\{1,2,3\}$ are Pauli channels,
the nonunital maps $\Phi_4$ and $\Phi_5$ are completely contractive channels
into the states $\project0$ and $\project1$, respectively. This set is a
semigroup formed by six extreme channels. Hence, their convex hull has six
vertices. However, the elements are not all linearly independent. They satisfy
$\sum_{i=0}^{3}\Phi_i=2(\Phi_3+\Phi_4)$. In this case, the set of accessible
maps is still confined in the convex hull of the semigroup. There are, however,
some results that need other group properties. For example, note that the
trajectory $\exp[t(\sum q_\mu\Phi_\mu-\1)]$ does not necessarily end in the
centre of the polytope. A counterexample happens when all probabilities $q_\mu$
are equal to zero but $q_5$. In this case, the trajectory will end in the vertex
assigned to $\Phi_5$.
\\

Furthermore, we showed that when the vertices of the convex hull of the channels
are linearly independent, the set of accessible channels contains quantum
channels with (convexity) ranks satisfying certain constraints. The figures
mentioned above illustrate the following three facts:

(i) In the simplest case of $g=2$,  accessible maps are of ranks one and two.

(ii) In the case of $g=3$, the group of order $3$, has only two trivial
subgroups; identity and the group itself. Hence there exists an accessible map
of rank $1$, the identity map, and of rank $3$. There is no accessible map of
rank two, which correspond to the edges of the triangle.

(iii) For $g=4$, the group has  (at least) a subgroup of order two. Thus, when
we have a tetrahedron, the set  consists of maps of (convexity) rank $1$ (the
identity map), rank $2$ on the edges  assigned to the maps generating the
subgroup and maps of full rank $r=4$. Note that there are no maps of rank three
corresponding to the faces of the tetrahedrons shown in Fig. \ref{4-cyc} and
Fig.~\ref{4-ncyc}.

\medskip
We have also established more general results concerning
ranks of accessible maps acting on larger systems.

\medskip

(iv) For quantum  Weyl maps acting on $N$-level systems, the set $\cal A$
contains maps of rank $1$, $N$ and $N^2$. If $N$ is prime, this list is
complete.

(v) For a group  of Pauli channels acting on a $k$-qubit system the set $\cal A$
contains maps of rank $2^m$ with $m=0,\dots,2k$.

\medskip

Our approach can also be applied to the classical case.
We also discussed when classical and quantum accessible maps could be compared 
directly.
The present study leaves several questions open.
Let us list here some of them.

\begin{enumerate}

\item Characterize the set of points such that the set $\cal A$
       has the star-shape property with respect to them.
       





\item Find a criterion allowing one to decide whether a given channel is accessible
     with respect to a given set $S$ of quantum maps $\Phi_i$.

\item   What is the relative measure of the set  $\cal A$ of accessible channels
      with respect to Markovian channels?
%

 
%
%
\end{enumerate}


\acknowledgments{It is a pleasure to thank Seyed Javad Akhtarshenas,
Dariusz Chru{\'s}ci{\'n}ski and Kamil Korzekwa
for inspiring discussions and helpful remarks
and  David Amaro Alcal{\'a} for useful correspondence.
Financial support by Narodowe Centrum Nauki 
under the Maestro grant number DEC-2015/18/A/ST2/00274
and by Foundation for Polish Science 
under the Team-Net project no. POIR.04.04.00-00-17C1/18-00
 are gratefully acknowledged.}

\appendix
\section{The Group Matrix/Table; Proof of Proposition \ref{volumes}}
\label{AppA}

In this Appendix, we try to calculate the accessibility volume fraction for some 
abelian groups thereby providing proof for the results presented in 
Proposition \ref{volumes}.\\

The volume measure is induced from the Euclidean metric on the dynamical maps
(see \eqref{affine}). For unitary channels, or a mixture of unitary channels,
the vector $\bm t$ is zero, and we need only consider the affine matrix $M$. In
fact, the affine matrices themselves form a representation of our group of
interest. 
Furthermore, for abelian groups, we may work in a basis in which all the affine
matrices are diagonal, thereby making all the convex mixtures diagonal as well.
Finally, since we are only dealing with diagonal matrices, it sounds reasonable
to write down the diagonal entries of each affine matrix in the group in columns
of a table.
The table will therefore form a matrix with $N^2-1$ rows (remember that the
affine matrices are of order $N^2-1$ where $N$ is the dimension of the Hilbert
space.) and $|G|=g$ columns. For example, a possible group table for a qubit
representation of the non-cyclic group with four members ($\sf Z_2\times\sf
Z_2$) is the following $3\times 4$ matrix

\begin{equation}
\begin{pmatrix} +1 & +1 & -1 & -1\\
+1 & -1 & -1 & +1\\
+1 & -1 & +1 & -1
\end{pmatrix}.
\end{equation}
In general the $\sf Z_2\times \sf Z_2$ table looks like below
\begin{equation}
\begin{pmatrix} 
+\textbf{1}_a & -\textbf{1}_a & +\textbf{1}_a & -\textbf{1}_a\\
+\textbf{1}_b & -\textbf{1}_b & -\textbf{1}_b & +\textbf{1}_b\\
+\textbf{1}_c & +\textbf{1}_c & -\textbf{1}_c & -\textbf{1}_c\\
+\textbf{1}_d & +\textbf{1}_d & +\textbf{1}_d & +\textbf{1}_d
\end{pmatrix},
\end{equation}
where $\textbf{1}_a$ denotes an $a$ dimensional column vector with all its
components being equal to unity and the height of the table is given by $a+b+c+d
= N^2-1$. Note that changing the values of $a,\ b,\ c$ (and therefore $d$)
changes the shape and size of the tetrahedron; however, as we will see, the
accessibility volume fraction will remain invariant as long as the $a,b,c,d$
parameters all remain positive. Part (c) of Proposition \ref{volumes} can now be
restated as regardless of the values $(a, b, c, d)$ for the group $\sf
Z_2\times\sf Z_2$, the accessible channels cover a fraction $3/32$ of the
corresponding tetrahedron's volume.

\begin{proof}
Let $\1$ be the origin of the coordinate system and define the vectors
\begin{equation}
\textbf{e}_1\equiv(-2)\begin{pmatrix}
\textbf{1}_a\\
\textbf{1}_b\\
\textbf{0}_c\\
\textbf{0}_d
\end{pmatrix};\hspace{3mm}
\textbf{e}_2\equiv(-2)\begin{pmatrix}
\textbf{0}_a\\
\textbf{1}_b\\
\textbf{1}_c\\
\textbf{0}_d
\end{pmatrix};\hspace{3mm}
\textbf{e}_3\equiv(-2)\begin{pmatrix}
\textbf{1}_a\\
\textbf{0}_b\\
\textbf{1}_c\\
\textbf{0}_d
\end{pmatrix}.
\end{equation}
Then the tetrahedron consists of the diagonal matrices 
$\1 +\operatorname{diag}\big\{ x^1\textbf{e}_1 + x^2\textbf{e}_2 + x^3\textbf{e}_3\big\}$ 
with the constraints
\begin{equation}
x^1,x^2,x^3\geq0;\hspace{3mm}x^1+x^2+x^3\leq1 . 
\end{equation}
The line-element is $ds^2 = h_{ij}dx^i dx^j$ with the metric 
\begin{equation}
h_{ij}= \textbf{e}_i.\textbf{e}_j = 4\begin{pmatrix}
a+b & b & a\\
b & b+c & c\\
a & c & a+c
\end{pmatrix} .
\end{equation}
Therefore the volume of the tetrahedron is given by
\begin{equation}
V_{tot.} = \int \textrm{d}x^1\textrm{d}x^2\textrm{d}x^3\, \sqrt{|h|} = 
\frac{8}{3}\sqrt{abc} .
\end{equation}
It remains to calculate the volume of the accessible channels. These are found by 
element-wise exponentiation of the diagonal vectors 
$\exp\big(x^1\textbf{e}_1 + x^2\textbf{e}_2 + x^3\textbf{e}_3\big)$ 
This time the only constraint is $x^i\geq0$ and the metric is
\begin{equation}
h = 4\begin{pmatrix}
a^\prime+b^\prime & b^\prime & a^\prime\\
b^\prime & b^\prime+c^\prime & c^\prime\\
a^\prime & c^\prime & a^\prime+c^\prime
\end{pmatrix},
\end{equation}
with
\begin{equation}
a^\prime = a e^{-4(x_1+x_3)},\hspace{3mm}b^\prime = b 
e^{-4(x_1+x_2)},\hspace{3mm}c^\prime = c e^{-4(x_2+x_3)},
\end{equation}
hence
\begin{equation}
|h|=4^4(abc)e^{-8(x^1+x^2+x^3)}.
\end{equation}
Therefore, the volume is
\begin{equation}
V_{acc.} = 
\int\textrm{d}x^1\textrm{d}x^2\textrm{d}x^3\,\sqrt{|h|}=\frac14\sqrt{abc}.
\end{equation}
Finally
\begin{equation}
    \frac{V_{acc.}}{V_{tot.}} = \frac{3}{32}.
\end{equation}
\end{proof}

It is possible to repeat the same procedure for the $\sf Z_2^n$ groups to prove the 
following.\\

\begin{proposition}
	The accessibility volume ratios for $\sf Z_2^n$ are given by
	$$\frac{(2^n)!}{2^{n2^n}}$$
\end{proposition}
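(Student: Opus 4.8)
The plan is to mirror the $\mathsf{Z}_2\times\mathsf{Z}_2$ computation but to organise it around the characters of $\mathsf{Z}_2^n$, which keeps the algebra transparent for general $n$. I would label the group by $\mathbb{F}_2^n$, so $g=|\mathsf{Z}_2^n|=2^n$, and write $\Phi_\mu$ for the channel attached to $\mu\in\mathbb{F}_2^n$. Since the group is abelian all affine matrices are simultaneously diagonal, and in the common eigenbasis the eigenvalue of $\Phi_\mu$ on the eigendirection carrying character $\nu$ is $(-1)^{\nu\cdot\mu}$. First I would group the $N^2-1$ non-trivial Bloch directions according to their character, letting $a_\nu\ge 1$ be the number of rows carrying character $\nu\neq 0$ (the trivial block $\nu=0$ is the common invariant direction and contributes only a rigid translation). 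Using the common eigenvalue $\lambda_\nu$ of a convex combination on each block as coordinates, the Euclidean metric is diagonal with $g_{\nu\nu}=a_\nu$, so the volume element carries an overall factor $\sqrt{\prod_{\nu\neq0}a_\nu}$ that is identical for the simplex and for the accessible set. This factor cancels in the ratio, which already explains why the answer depends neither on the multiplicities $a_\nu$ nor on the dimension $N$ or the chosen representation; from here on I work purely in the $\lambda$-coordinates and record the standing assumption $a_\nu\ge1$ for all $\nu\neq0$, which makes the $\Phi_\mu$ affinely independent and the polytope the full $(g-1)$-simplex.

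For the denominator I would note that $\lambda_\nu=\sum_\mu(-1)^{\nu\cdot\mu}p_\mu$ is just the Walsh--Hadamard transform of the probability vector $p$. Parametrising the probability simplex by $(p_\mu)_{\mu\neq0}$ with $p_0=1-\sum_{\mu\neq0}p_\mu$, the Jacobian is $\partial\lambda_\nu/\partial p_\mu=(-1)^{\nu\cdot\mu}-1=-2A_{\nu\mu}$, where $A_{\nu\mu}=\tfrac12\bigl(1-(-1)^{\nu\cdot\mu}\bigr)$ is the $(g-1)\times(g-1)$ parity matrix on the non-trivial indices. Hence the polytope volume equals $2^{\,g-1}|\det A|$ times the volume $1/(g-1)!$ of the standard simplex.

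For the numerator I would use that, by commutativity and $\Phi_\mu^2=\1$, every accessible map factorises as $\prod_{\mu\neq0}\e^{t_\mu(\Phi_\mu-\1)}$ with interaction times $t_\mu\ge0$, and that on the character-$\nu$ block its eigenvalue is $\lambda_\nu=\exp\bigl(-\sum_{\mu:\,\nu\cdot\mu\ \mathrm{odd}}s_\mu\bigr)$ with $s_\mu=2t_\mu$. Writing $y_\nu=-\log\lambda_\nu=(As)_\nu$ and changing variables $\lambda\to y\to s$, the two Jacobians produce the factors $e^{-\sum_\nu y_\nu}$ and $|\det A|$; the key combinatorial identity $\sum_{\nu\neq0}A_{\nu\mu}=2^{n-1}=g/2$ (each non-zero $\mu$ pairs evenly with exactly half of the characters) makes $\sum_\nu y_\nu=\tfrac{g}{2}\sum_\mu s_\mu$, so the integral splits into a product of one-dimensional integrals. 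This gives $V_{\mathrm{acc}}=|\det A|\,(2/g)^{\,g-1}$.

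Taking the ratio, the factor $|\det A|$ cancels between numerator and denominator, leaving $\dfrac{V_{\mathrm{acc}}}{V_{\mathrm{tot}}}=\dfrac{(g-1)!}{g^{\,g-1}}=\dfrac{g!}{g^{\,g}}=\dfrac{(2^n)!}{2^{\,n2^n}}$, which is the claim (and reduces to $3/32$ at $n=2$ and to $1/2$ at $n=1$). The step I expect to require the most care is verifying that the interaction-time map $s\mapsto\lambda$ is a bijection onto the accessible set covered exactly once, which follows from the invertibility of $A$, together with bookkeeping of the two Jacobians; the pleasant feature is that $\det A$ never has to be evaluated because it cancels, so the only genuinely combinatorial input needed is the half-orthogonality count $\sum_\nu A_{\nu\mu}=g/2$.
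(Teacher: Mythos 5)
Your computation is correct and lands exactly on $\tfrac{(2^n)!}{2^{n2^n}}=\tfrac{g!}{g^g}$, but it is worth noting how it relates to what the paper actually does: the paper proves only the $n=2$ case explicitly (via the group table, the edge vectors $\mathbf{e}_i$, and the induced Gram metric $h_{ij}=\mathbf{e}_i\cdot\mathbf{e}_j$ integrated in simplex coordinates), and for general $n$ merely asserts that one can ``repeat the same procedure.'' Your character-based organization is what makes that repetition actually go through uniformly: by working in the eigenvalue coordinates $\lambda_\nu$ (where the metric is the constant diagonal $\operatorname{diag}(a_\nu)$, so multiplicities cancel immediately) and pushing the group structure into the parity matrix $A_{\nu\mu}=\tfrac12\bigl(1-(-1)^{\nu\cdot\mu}\bigr)$, you reduce everything to two structural facts: $\det A$ cancels between numerator and denominator, and each column of $A$ sums to $g/2$, which factorizes the exponential integral. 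The paper's route, extended to general $n$, would instead produce the Gram matrix $h=4A^{T}\!\operatorname{diag}(a_\nu)A$, whose determinant $4^{g-1}(\det A)^2\prod_\nu a_\nu$ would have to be carried through both integrals before cancelling; your bookkeeping is equivalent but avoids ever confronting that determinant. So the two approaches are the same computation differently organized, with yours buying the actual general-$n$ proof that the paper leaves implicit.

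One point you flag but do not close: the invertibility of $A$, which your change of variables genuinely needs (both for injectivity of $s\mapsto\lambda$ and for the polytope to be a nondegenerate $(g-1)$-simplex, so that the ratio is not $0/0$). This is true and takes only a few lines: suppose $Av=0$ and extend $v$ to $\tilde v\in\mathbb{R}^{g}$ by $\tilde v_0=0$. Writing $H_{\nu\mu}=(-1)^{\nu\cdot\mu}$ for the full character matrix, the condition $Av=0$ says $(H\tilde v)_\nu=\sum_{\mu\neq0}v_\mu$ for all $\nu\neq0$, and the same holds at $\nu=0$, so $H\tilde v$ is a multiple of the all-ones vector; applying $H$ again and using $H^2=g\1$ together with $H\mathbf{1}=(g,0,\dots,0)^{T}$ forces $\tilde v$ to be supported on the $\mu=0$ entry, hence $\tilde v=0$. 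With that lemma supplied, your proof is complete.
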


For 3D cyclic groups of order 4, the result is slightly different. Indeed, part
(b) of Proposition \ref{volumes} is now rephrased as for all values of $(a, b,
c)$ with $b>0$, the accessible channels cover a fraction
$\frac{3}{32}(1-e^{-4\pi})$ of the tetrahedron's volume.

\begin{proof}
    Let us start again with the group table.
    \begin{equation}
        \begin{pmatrix}\textbf{1}_a & i\textbf{1}_a & -\textbf{1}_a & -i\textbf{1}_a\\
                       \textbf{1}_a & -i\textbf{1}_a & -\textbf{1}_a & +i\textbf{1}_a\\
                       \textbf{1}_b & -\textbf{1}_b & +\textbf{1}_b & -\textbf{1}_b\\
                       \textbf{1}_c & +\textbf{1}_c & +\textbf{1}_c & +\textbf{1}_c
        \end{pmatrix}.
    \end{equation}    
    Since we are working with a real vector space, it is convenient to merge the diagonal imaginary matrix 
    \begin{equation}
        \begin{pmatrix}i & \\ & -i\end{pmatrix}
    \end{equation}
    with the real matrix
    \begin{equation}
        J\equiv \begin{pmatrix} & 1\\ -1 & \end{pmatrix} .
    \end{equation}
    The group table then becomes
    \begin{equation}
        \begin{pmatrix}\textbf{1}_{2a} & \textbf{J}_{a} & -\textbf{1}_{2a} & -\textbf{J}_a\\
    \textbf{1}_b & -\textbf{1}_b & +\textbf{1}_b & -\textbf{1}_b\\
    \textbf{1}_c & +\textbf{1}_c & +\textbf{1}_c & +\textbf{1}_c
    \end{pmatrix}
    \end{equation}
    with $2a+b+c=N^2-1$.\\
    
    Defining $\textbf{e}_i$ as we did in the previous proof, we get the metric
    \begin{equation}
        h_{ij}=\textbf{e}_i.\textbf{e}_j=4\begin{pmatrix}a+b & a & b\\
    a & 2a & a\\
    b & a & a+b\end{pmatrix},
    \end{equation}
    which leads to 
    \begin{equation}
        V_{tot.}=\frac83a\sqrt{b}.
    \end{equation}

    Now for the accessible channels, if we use the $x^i$ coordinates, we get
    $$
    \exp\big(x^1\textbf{e}_1 + x^2\textbf{e}_2 + 
    x^3\textbf{e}_3\big)=\begin{pmatrix}e^{-x^1-2x^2-x^3}\big[\cos(x^1-x^3)+J\sin(x^1-x^3)\big]_{2a}\\
    e^{-2(x^1+x^3)}\textbf{1}_b\end{pmatrix}
	$$
    However, these coordinates are not injective: the periodic functions in the 
    first part will re-address the same point many times, leading to an 
    overestimating of the volume by evaluating the integral as before. So let us 
    use $(y, z, \phi)$ defined as
    \begin{equation}
        x^1\equiv\frac{z+\phi}{2};\hspace{3mm}x^2=y;\hspace{3mm}x^3=\frac{z-\phi}{2}.
    \end{equation}
    The constraints $x^i\geq0$ become
    \begin{equation}
        y\geq0;\hspace{3mm}z\geq0;\hspace{3mm}|\phi|\leq\min(\pi, z).
    \end{equation}
    Note that the angle $\phi$ is manually constrained to stay in the interval $[-\pi, \pi)$. The metric is
    \begin{equation}
        ds^2 = 4be^{-4z}dz^2 + 2ae^{-2z-4y}\big(d\phi^2+dz^2+4dy^2+4dydz\big).
    \end{equation}
    Therefore 
    \begin{equation}
        \sqrt{|h|}=8a\sqrt{b}e^{-4z-4y}.
    \end{equation}
    Finally, the accessible volume is
    \begin{equation}
        \begin{split}
            V_{acc.}=8a\sqrt{b}\int_0^\infty e^{-4y}dy\int_0^\infty dz\, e^{-4z}\big[2\min(\pi, z)\big]\\
            =4a\sqrt{b}\Big\{\int_0^\pi ze^{-4z}dz \,+\, \pi\int_\pi^\infty e^{-4z}dz \Big\}\\
            =\frac14a\sqrt{b}\big(1-e^{-4\pi}\big).
        \end{split}
    \end{equation}
\end{proof}

A unitary channel for $N$-dimensional quantum systems has an affine map $(R, 
\textbf{0})$ where $R$ is an $N^2-1$ dimensional rotation matrix. Dealing with 
Abelian subgroups allows us to diagonalize the rotation matrices simultaneously and 
only worry about the spectral configuration of the group members, see also section 
5 of \cite{SSMZK21}.

\begin{proposition}
For the 2 dimensional cyclic group $\big\{\1,\, R,\, \cdots,\, R^{g-1}\big\}$, 
where all the members lie on a 2D plane, a fraction
\begin{equation}
\frac{1-e^{-2\pi\tan(\pi/g)}}{2g\sin^2(\pi/g)}= 1 - \frac{\pi^2}{g}+\mathcal{O}(1/g^2)
\end{equation}
of the 2D regular polygon is covered by exponentiating the Lindblad generators.
\end{proposition}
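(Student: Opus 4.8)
The plan is to reduce the statement to a planar area integral and evaluate it in polar coordinates. First I would use the hypothesis that the $g$ channels lie on a $2$-dimensional plane together with the cyclic structure: in a basis adapted to the irreducible decomposition of the common distortion matrices, the maps $R^\mu$ differ only through a single $2\times 2$ rotation block of angle $2\pi\mu/g$, every other block being trivial (a second rotation frequency would raise the affine dimension of the polygon above $2$, exactly as happens in the linearly-independent $g=4$ case). Identifying this rotation plane with $\mathbb{C}$ through $a\,\1+bJ\leftrightarrow a+ib$, where $J$ is the symplectic block used above, the Hilbert--Schmidt metric restricts to a constant multiple of $|z|^2$, so all area ratios may be computed in the flat plane. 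In this identification the centre $\Phi_*$ sits at the origin and the vertex $R^\mu$ lands at $\zeta^\mu$ with $\zeta=\e^{2\pi i/g}$, so the polytope is the regular $g$-gon inscribed in the unit circle.

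The key step is that $R^\mu\mapsto\zeta^\mu$ is a one-dimensional complex representation and therefore intertwines the exponential. Hence an accessible map $\Omega_t=\exp\big[t(\sum_{\mu=1}^{g-1}q_\mu R^\mu-\1)\big]$ is located at $z(t)=\exp[t(\lambda-1)]$, where $\lambda=\sum_{\mu=1}^{g-1}q_\mu\zeta^\mu$ ranges over $P=\mathrm{conv}\{\zeta,\dots,\zeta^{g-1}\}$ as $(q_\mu)$ runs over probability vectors. Thus $\mathcal{A}$ is the image under $w\mapsto\e^{w}$ of the cone $C=\mathbb{R}_{\ge 0}(P-1)$. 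Using $\zeta^\mu-1=2\sin(\pi\mu/g)\,\e^{i(\pi/2+\pi\mu/g)}$, I would show that the extreme generators are $\lambda=\zeta$ and $\lambda=\zeta^{g-1}$, so that $C$ is exactly the angular sector symmetric about the negative real axis with half-angle $\tfrac{\pi}{2}-\tfrac{\pi}{g}$; its two bounding rays map to the two logarithmic spirals already seen for $g=3$, and $\mathrm{Re}(\lambda-1)<0$ guarantees every spiral decays into the centre.

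Next I would determine the radial profile. By Theorem~\ref{star} the set $\mathcal{A}$ is star-shaped about the origin, so it suffices to find, for each polar angle $\psi$, the largest accessible radius. Since $|\e^{w}|=\e^{\mathrm{Re}\,w}$ and $\arg\e^{w}=\mathrm{Im}\,w\pmod{2\pi}$, this amounts to maximising $\mathrm{Re}\,w$ over $w\in C$ with $\mathrm{Im}\,w\equiv\psi$; taking the representative of least modulus and reading off the edge of the sector gives $\rho_{\max}(\psi)=\e^{-|\psi|\tan(\pi/g)}$ for $\psi\in[-\pi,\pi]$, while every $\rho\le\rho_{\max}(\psi)$ is attained because the sector extends to $\mathrm{Re}\,w\to-\infty$ at each height, so $\mathcal{A}$ is the solid star-shaped region with this boundary. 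I expect this to be the main obstacle: one must treat carefully the infinite winding of the exponential map (the same multivaluedness that forced the cut $|\phi|\le\min(\pi,z)$ in the $g=4$ computation of this appendix) and verify both that the least-$|\psi|$ representative is the correct one and that no gaps arise between successive turns.

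Finally I would integrate in polar coordinates and divide by the polygon area $\tfrac12 g\sin(2\pi/g)$,
\begin{equation}
V_{acc}=\tfrac12\int_{-\pi}^{\pi}\rho_{\max}(\psi)^2\,\dd\psi
=\int_{0}^{\pi}\e^{-2\psi\tan(\pi/g)}\,\dd\psi
=\frac{1-\e^{-2\pi\tan(\pi/g)}}{2\tan(\pi/g)},
\end{equation}
so that, using the identity $\sin(2\pi/g)\tan(\pi/g)=2\sin^{2}(\pi/g)$, the ratio becomes the claimed $\dfrac{1-\e^{-2\pi\tan(\pi/g)}}{2g\sin^{2}(\pi/g)}$. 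A Taylor expansion at $\pi/g\to0$ of numerator and denominator then yields the asymptotic $1-\pi^{2}/g+\mathcal{O}(1/g^{2})$, completing the plan.
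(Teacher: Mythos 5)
Your proposal is correct and follows essentially the same route as the paper's own proof: identify the polygon with the complex $g$-gon at the roots of unity, recognize the accessible set as the exponential image of the cone $\{w=-x+iy : x\ge 0,\ |y|\le x\cot(\pi/g)\}$, and compute the area in polar coordinates with radial profile $\e^{-|\phi|\tan(\pi/g)}$. The only difference is one of detail: the paper compresses your first step into ``we may identify it with the complex polygon'' and your last two steps into ``a standard calculation of the area,'' whereas you spell out the representation-theoretic justification, the winding/star-shape argument, and the integral explicitly.
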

\begin{proof}
Forgetting about the scale of the actual polygon, we may identify it with the complex polygon with vertices $\{1, e^{2\pi i/g}, \cdots, e^{-2\pi i/g }\}$. Then the area of the polygon is
\begin{equation}
A = g\sin(\pi/g)\cos(\pi/g).
\end{equation}
By exponentiating the Lindblad generators, we get the complex set
\begin{equation}
\big\{e^{-x+iy}\,\big|\,x\geq 0,\,|y|<\frac{x}{\tan(\pi/n)}\big\} = 
\big\{re^{i\phi}\,\big|\,0\leq r\leq 1,\,|\phi|<\frac{-\log r}{\tan(\pi/n)}\big\}.
\end{equation}
A standard calculation of the area of this set leads to the claimed result.\\
Observation: Any complex number $z$ with $|z|<r^* = e^{-\pi\tan(\pi/g)}$ is infinitesimally divisible. It may be interesting to calculate $r^*$ for other groups as well.
\end{proof}

\medskip


\begin{thebibliography}{References}

\bibitem{GKS76}  V. Gorini, A. Kossakowski, and E. C. G. Sudarshan,
``Completely  positive  dynamical  semigroups of $N$-level systems'',
J. Math. Phys. {\bf 17}, 821 (1976).


\bibitem{Li76} G. Lindblad,
``On the generators of quantum dynamical semigroups'',
Commun. Math. Phys. {\bf 48}, 119 (1976).


\bibitem{CP17} D. Chru{\'s}ci{\'n}ski and S. Pascazio,
 ``A brief history of the GKLS Equation'',
 Open Syst. Inf. Dyn. {\bf 24}, 1740001 (2017).  


\bibitem{WECC08} M. M. Wolf, J. Eisert, T.S. Cubitt, and J.I. Cirac,
``Assessing non-Markovian dynamics'',
Phys. Rev. Lett. {\bf 101}, 150402 (2008).

\bibitem{RH11}  \'A. Rivas, and S. F. Huelga,
\textsl{Open Quantum Systems. An Introduction},
(Springer, 2011).

\bibitem{SL09}A. Shabani, and D. A. Lidar,
``Vanishing quantum discord is necessary and sufficient for completely positive 
maps",
Phys. Rev. Lett. {\bf 102}, 100402 (2009).


\bibitem{SL16}A. Shabani, and D. A. Lidar,
``Erratum: vanishing quantum discord is necessary and sufficient for completely 
positive maps [Phys. Rev. Lett. {\bf 102}, 100402 (2009)]",
Phys. Rev. Lett. {\bf 116}, 049901 (2016).


\bibitem{Betal13}A. Brodutch, A. Datta, K. Modi, \'A. Rivas, and C. A. Rodr\'iguez-Rosario,
``Vanishing quantum discord is not necessary for completely positive maps",
Phys. Rev. A {\bf 87}, 042301 (2013).

\bibitem{B14}F. Buscemi,
``Complete positivity, Markovianity, and the quantum data-processing inequality, in 
the presence of initial system-environment correlations",
Phys. Rev. Lett. {\bf 113}, 140502 (2014).

\bibitem{DSL16} J. M. Dominy, A.  Shabani, and D. A. Lidar,
``A general framework for complete positivity",
Quant. Inf. Process. {\bf 15}, 465 (2016).

\bibitem{S55}W. F. Stinespring,
``Positive functions on $C^\ast$-algebras",
Proceedings of the American Mathematical Society {\bf 6}, 211 (1955).

\bibitem{K62}J. F. C. Kingman,
``The imbedding problem for finite Markov chains",
Probab. Theory Relat. Fields {\bf 1}, 14 (1962).

\bibitem{Ru62}
J. T. Runnenburg,
``On Elfving's problem of imbedding a time-discrete Markov chain in a 
time-continuous one for finitely many states'',
Proc. Kon. Ned. Akad. Wet. \textbf{Ser. A 65}, 536 (1962).

\bibitem{C95}P. Carette,
``Characterizations of embeddable $3\times3$ stochastic matrices with a negative 
eigenvalue",
New York J. Math {\bf 1}, 129 (1995).

\bibitem{D10} E. B. Davies,
``Embeddable Markov matrices",
Electron. J. Probab. \textbf{15}, 1474 (2010).

\bibitem{KL21}K. Korzekwa, and M. Lostaglio,
``Quantum advantage in simulating stochastic processes",
Phys. Rev. {\bf  X 11}, 021019 (2021).

\bibitem{D89}L. V. Denisov,
``Infinitely divisible Markov mappings in quantum probability theory",
Theory Probab. Appl. {\bf 33}, 392 (1989).

\bibitem{WC08}M. M. Wolf, and J. I. Cirac,
``Dividing quantum channels",
Comm. Math. Phys. {\bf  279}, 147 (2008).

\bibitem{CEW12}T. S. Cubitt, J. Eisert, and M. M. Wolf,
``The complexity of relating quantum channels to master equations",
Commun. Math. Phys. {\bf 310}, 383 (2012).

\bibitem{DZP18} D. Davalos, M. Ziman, and C. Pineda,
``Divisibility of qubit channels and dynamical maps'',
Quantum {\bf 3}, 144 (2019).

\bibitem{PRZ19} Z. Pucha\l a, \L . Rudnicki, and K. {\.Z}yczkowski,
``Pauli semigroups and unistochastic quantum channels'',
Phys. Lett. A {\bf 383},  2376 (2019).

\bibitem{CC19}S. Chakraborty, and D. Chru\'sci\'nski,
``Information flow versus divisibility for qubit evolution",
Phys. Rev. A {\bf 99}, 042105 (2019).

 \bibitem{JSP20}  V. Jagadish, R. Srikanth, F. Petruccione,
 ``Convex combinations of Pauli semigroups: geometry, measure and an application'',
 Phys. Rev. A \textbf{101}, 062304 (2020).

\bibitem{SCh17} K. Siudzi{\'n}ska and D. Chru{\'s}ci{\'n}ski,
``Memory kernel approach to generalized Pauli channels:
 Markovian, semi-Markov, and beyond'',
Phys. Rev. A {\bf 96}, 022129 (2017). 

\bibitem{SAPZ20} F. Shahbeigi, D. Amaro-Alcal\'a, Z. Pucha\l a, and K. {\.Z}yczkowski,
``Log-Convex set of Lindblad semigroups acting on $N$-level system'',
J. Math. Phys. {\bf 62}, 072105 (2021).

\bibitem{Si21} K. Siudzi{\'n}ska,
``Markovian semigroup from mixing non-invertible dynamical maps'',
Phys. Rev. A {\bf 103}, 022605 (2021). 

\bibitem{W27}H. Weyl, ``Quantenmechanik und Gruppentheorie",  Z. Physik {\bf 46}, 1 (1927).

\bibitem{SSMZK21} F. Shahbeigi, K. Sadri, M. Moradi, K. \.Zyczkowski, 
and V. Karimipour,
``Quasi-inversion of quantum and classical channels in finite dimensions",
Journal of Physics A: Mathematical and Theoretical, \textbf{54}, 34 (2021).

\bibitem{LRRL16}
P. Lencastre, F. Raischel, T. Rogers and P. G. Lind,
``From empirical data to time-inhomogeneous continuous Markov processes'', 
 Phys. Rev.  {\bf E 93}, 032135 (2016).

\bibitem{KCPZ18}K. Korzekwa, S. Czach\'orski, Z. Pucha\l a, and K. \. Zyczkowski, 
``Coherifying quantum channels",
New J. Phys. {\bf 20}, 043028 (2018).

\bibitem{Ketal20}R. Kukulski, I. Nechita, \L. Pawela, Z. Pucha\l a, and K. \. Zyczkowski,
``Generating random quantum channels",
J. Math. Phys. \textbf{62}, 062201 (2021).

\bibitem{BCP14}T. Baumgratz, M. Cramer, and M. B. Plenio,
``Quantifying coherence",
Phys. Rev. Lett. {\bf 113}, 140401 (2014).

\bibitem{SZ18} M. Snamina and E.J. Zak, ``Dynamical semigroups in the Birkhoff 
polytope of order 3 as a tool for analysis of quantum channels'', preprint 
arXiv:1811.09506

\bibitem{BEKTZ05}
I. Bengtsson, A. Ericsson, M. Ku{\'s}, W. Tadej, K. {\.Z}yczkowski, ``Birkhoff's 
polytope and unistochastic matrices, N=3 and N=4'', Commun. Math. Phys., \textbf{259}, 307 (2005).






\end{thebibliography}
\end{document}